\documentclass[10pt,leqno]{amsart}

\usepackage{hyperref}
\usepackage{tikz}
\usepackage{tikz-3dplot}
\usepackage{mathtools,cleveref}

\newcommand{\Z}{\mathbb{Z}}
\newcommand{\R}{\mathbb{R}}
\newcommand{\conf}{\mathcal{C}}

\usepackage{comment}
\usepackage{amsmath,amssymb}

\newtheorem{theorem}{Theorem}[section]
\newtheorem{lem}[theorem]{Lemma}

\newtheorem{prop}[theorem]{Proposition}

 \newtheorem{remark}[theorem]{Remark}
 \newtheorem{conj}[theorem]{Conjecture}
 \newtheorem{example}[theorem]{Example}
\numberwithin{equation}{section}
\allowdisplaybreaks
\begin{document}
\title[An explicit construction of Kaleidocycles]
 {An explicit construction of Kaleidocycles by elliptic theta functions} 

 \author{Shizuo Kaji}
 \address{Institute of Mathematics for Industry, Kyushu University, Fukuoka, Japan}
 \address{Graduate School of Science, Kyoto University, Kyoto, Japan}
 \email{kaji.shizuo.7r@kyoto-u.ac.jp}

 \author{Kenji Kajiwara}
 \address{Institute of Mathematics for Industry, Kyushu University, Fukuoka, Japan}
 \email{kaji@imi.kyushu-u.ac.jp}

 \author{Shota Shigetomi}
 \address{Institute of Mathematics for Industry, Kyushu University, Fukuoka, Japan}
 \email{s-shigetomi@imi.kyushu-u.ac.jp}

 \thanks{
 The first-named author was 
 partially supported by JSPS KAKENHI Grant Numbers JP25K00921 and JP25H00399.
 The second-named author was partially supported by JSPS KAKENHI Grant Number JP25K21661.
 The third-named author was partially supported by JSPS KAKENHI Grant Numbers JP25K00921 and JP25K17297.
 }


\subjclass[2020]{Primary 53A04, 53A70; Secondary 53A17, 70B15, 37K25, 37K10, 35Q53}
\keywords{Kinematic chain, Configuration space, Discrete differential geometry, Integrable systems, Elliptic theta function}

\begin{abstract}
We study a configuration space consisting of ordered points on the two-dimensional sphere satisfying a system of quadratic constraints.
We construct explicit periodic orbits in the configuration space using elliptic theta functions.
The constructed orbits simultaneously satisfy semi-discrete analogues of the modified KdV and sine-Gordon equations.
This configuration space arises as the state space of a linkage mechanism called a \emph{Kaleidocycle}, and the constructed orbits describe the characteristic motion of the Kaleidocycle. A key consequence of our construction is the proof that Kaleidocycles exist for any number of tetrahedra $k \ge 6$.
Our approach is founded on the relationship between the deformation of spatial curves and integrable systems.
The motion of the mechanism is interpreted as a deformation of a closed discrete spatial curve with constant torsion angle.
This provides an explicit example in which an integrable system is solved to generate periodic orbits in a real solution set of polynomial equations arising from geometric constraints.
\end{abstract}

\maketitle
\section{Introduction}\label{sec:intro}

Given a natural number $k$ and a real number $0 <\lambda<\pi$, 
we consider the following configuration space of $k+1$ ordered points on the 2-sphere $S^2$ satisfying certain quadratic constraints:
\begin{align}\label{eq:configuration_binormal}
\conf^\pm_{k,\lambda} = \{(b_0,\ldots,b_k)\in (S^2)^{k+1} \mid & \langle b_{n-1}, b_n \rangle = \cos\lambda, \, (1\le n\le k), \\
& \sum_{n=0}^{k-1}(b_{n+1}\times b_{n}) = 0, \nonumber\\
& b_k=\pm b_0\}. \nonumber 
\end{align}

This configuration space naturally appears as the state space of a linkage mechanism called a \emph{Kaleidocycle}.
A Kaleidocycle consists of $k$ identical equifacial tetrahedra connected by hinge joints along pairs of opposite edges so that they form a ring (Fig. \ref{fig:K6} (Left)).
The mechanism exhibits a characteristic turning motion.
Although Kaleidocycles have been studied by several authors~\cite{CHEN20052287,FG:mobility,safsten2016analyzing,JF:Kaleidocycle}, the rigorous proof of their existence beyond trivial cases has remained an open problem.

Our main result (\Cref{thm:kaleidocycle}) states that for any $k\ge6$ there exists $\lambda$ for which the configuration space $\conf^\pm_{k,\lambda}$ contains an embedded circle.
We explicitly construct this circle using elliptic theta functions.
Since a point of $\conf^\pm_{k,\lambda}$ corresponds to a particular configuration of the mechanism, a closed path in this space describes a periodic motion of the Kaleidocycle.
Therefore our result provides a constructive proof of the existence of Kaleidocycles consisting of $k$ tetrahedra for all $k\ge6$.


The key observation is that a Kaleidocycle can be identified with a closed discrete spatial curve obtained by connecting the midpoints of the hinge edges.
In this interpretation the hinge edges correspond to the binormal directions of the curve.
The angles between consecutive hinge edges correspond to a torsion angle, while the turning angles of the polygon correspond to signed curvature angles (\Cref{sec:kaleidocycle}).
Hence an element $(b_0,\ldots,b_k)\in \conf^\pm_{k,\lambda}$ can be regarded as a sequence of binormal vectors of a closed polygonal curve with constant torsion angle $\lambda$.
In this geometric formulation the motion of a Kaleidocycle is modelled as a deformation of a closed polygonal curve preserving both the segment length and the torsion angle.
In our construction, the compatibility conditions of the discrete Frenet frame describing such a deformation are compared with the functional identities of elliptic theta functions.

It was known in~\cite{KKP:linkage} that 
the semi-discrete modified KdV (mKdV) and the sine-Gordon equations
are defined on the configuration space (\Cref{subsection:Segment-length and torsion-angle preserving deformation of discrete space curve}).
The main result of this paper can also be viewed as providing a solution to the system through an explicit construction based on the $\tau$ functions of the two-component KP hierarchy~\cite{HIKMO:DLIE} (\Cref{section:Time evolution of curvature and its potential}). 
Moreover, the sweeping surface of the motion 
corresponds to a semi-discrete analogue of the 
fully-discrete constant negative curvature surface
(\emph{K-surface})
studied in \cite{bobenko1996discrete,IKMO:DmKdVsG}.

Combining both views, 
we explicitly integrate the flow generated by an integrable system on a real-algebraic set $\conf^\pm_{k,\lambda}$ using elliptic theta functions.
Our work connects seemingly distant fields of integrable systems and the kinematics of linkages
via an interesting object, the Kaleidocycle.

\medskip
The structure of the paper is as follows.
Section~\ref{sec:def} introduces Kaleidocycles and their description as discrete spatial curves.
The signed curvature angle and torsion angle are defined together with the associated Frenet frame.
Section~\ref{sec:integrable} reviews the relationship between discrete curves and integrable systems through the $\tau$ functions, and recalls previous results describing Kaleidocycle motion by semi-discrete equations.
Section~\ref{section:An explicit formula of closed discrete space curve with constant torsion angle} constructs a set of $\tau$ functions that lead to discrete curves with constant torsion angle using elliptic theta functions.
Section~\ref{section:Time evolution of curvature and its potential} introduces a time parameter and derives the evolution equations satisfied by the curve.
Section~\ref{section:Necessary and sufficient conditions for closedness} proves our main result (\Cref{thm:kaleidocycle}) that for every $k\ge6$ there exist parameters for the $\tau$ functions for which the curve closes, establishing the existence of Kaleidocycles.
Finally, Section~\ref{sec:numerical} presents numerical examples and a conjecture that the constructed solutions correspond to the so-called M\"obius Kaleidocycles.

We briefly explain the strategy of the proof of the main result, \Cref{thm:kaleidocycle}, which establishes the existence of Kaleidocycles for every $k\ge 6$.
The argument consists of three main steps.
First, in \Cref{prop:identification}, we identify Kaleidocycles with closed discrete curves of constant torsion angle.
Next, in \Cref{thm:PhDKal2}, we construct a family of motions of curves having the desired geometric properties.
Finally, in Section~\ref{section:Necessary and sufficient conditions for closedness}, we determine a choice of parameters for which these curves become closed.

\section{Algebraic and geometric models of Kaleidocycle}\label{sec:def}

\subsection{Configuration space of a Kaleidocycle}
\label{sec:kaleidocycle}
A Kaleidocycle consists of (typically six) 
identical copies of equifacial tetrahedra 
connected by their opposing edges as hinge joints to form a ring.
It can be folded from a sheet of paper and 
is an example of rigid origami; it rotates like a bubble ring while keeping all faces rigid.
The origin of Kaleidocycles is obscure, but
one of the first appearances in the literature is in Bricard's paper~\cite{bricard},
where he considers an essentially equivalent closed kinematic chain,
now known as the Bricard 6R.
For nearly a hundred years, the geometry and kinematics of this object have attracted the attention of researchers~(see \cite{safsten2016analyzing} and the references therein).
However, there is no literature that discusses the existence of such objects for arbitrary $k$.

We first define a generalisation of the Kaleidocycle following \cite{seimitsu}.
We consider $k$-copies of an equifacial tetrahedron in $\R^3$ connected by opposing edges to form a ring.
Denote the vertex set by $\{0,1,\ldots,2k-1\}$ such that the vertices $(2n,2n+1)$
form a hinge edge and the vertices $(2n,2n+1,2n+2)$ form one of the triangle faces.
Let $l_{01},l_{12},l_{02}$ be the lengths of the sides of the triangle face, with the edge of length $l_{01}$ serving as a hinge (see \Cref{fig:K6} Right).
Now, a natural way to model a Kaleidocycle is as a \emph{realisation} of an edge-weighted graph.
Let $(V,E)$ be the graph formed by the vertices and edges of the tetrahedra in a Kaleidocycle 
and $\ell:E\to \R$ be the edge length.
A realisation of $(V,E,\ell)$ in $\R^3$
is a map $\phi: V\to \R^3$ satisfying
$|\phi(u)-\phi(v)|^2=\ell(uv)^2$ for all edges $uv\in E$.

In general, the collection of all possible realisations of an edge-weighted finite graph 
$(V,E,\ell)$ in $\R^d$
 forms a real algebraic set in $\mathbb{R}^{d|V|}$, known as the configuration space of $(V,E,\ell)$. 
This space has been the subject of extensive research across various domains, including algebraic geometry, robotics, and mechanism theory~\cite{Hausmann-Knutson,Kapovich-Millson,MR3246296}. 

More concretely, a state of a $k$-Kaleidocycle of type 
$(l_{01},l_{12},l_{02})$ is identified by the real solution
$(x_0,x_1,\ldots,x_{2k+1})\in (\mathbb{R}^3)^{2k+2}$
of the quadratic equations
\begin{align}\label{eq:configuration_coords}
|x_{2n} - x_{2n+1}|^2 &=l^2_{01}, \\
|x_{2n} - x_{2n+2}|^2 &=|x_{2n+1} - x_{2n+3}|^2=l^2_{02}, \nonumber \\
|x_{2n+1} - x_{2n+2}|^2&=|x_{2n} - x_{2n+3}|^2=l^2_{12}\nonumber,
\end{align}
for $0\le n\le k-2$
and $(x_{2k},x_{2k+1})=(x_0,x_1)$ or $(x_1,x_0)$. 
To remove the translational freedom, we demand $x_0+x_1=0$.

Notice that we have two combinatorially distinct cases: 
length $l_{02}$ edges form a $k$-cycle ($x_{2k}=x_0$) or $2k$-cycle, going around the ring twice ($x_{2k}=x_1$).

\begin{remark}\label{rem:dof}
    The solution space of \eqref{eq:configuration_coords} admits a natural action of the orthogonal group $O(3)$ of $\mathbb{R}^3$.
    A naive dimension count then predicts that the quotient space should have dimension
    \[
        3\cdot 2k - 5k - 3 - \dim(O(3)) = k - 6,
    \]
    where $5k$ is the number of edges and the term $3$ comes from the constraint $x_0 + x_1 = 0$.
    This heuristic is commonly referred to as Maxwell's degree-of-freedom counting law~\cite{FG:mobility}.

    It is well known, however, that this estimate can fail in nontrivial ways.
    A classical example is the Bricard 6R linkage ($k=6$), which is equivalent to the classical $6$-Kaleidocycle.
    Its configuration space is one-dimensional, exceeding the expected dimension $0$, and hence it provides a prototypical example of an \emph{overconstrained} mechanism~\cite{FG:mobility}.

    For $k \ge 7$, numerical evidence indicates that, when the angle between adjacent hinge edges attains a critical value,
    the configuration space degenerates to a circle.
    The existence of this single degree of freedom was first identified by the first-named author together with collaborators,
    leading to a subsequent patent application~\cite{KSFG:Kaleidocycle}.
    Shortly thereafter, related results were disseminated in the literature~\cite{JF:Kaleidocycle}.

For $k > 7$, this family of Kaleidocycles--referred to as M\"obius Kaleidocycles~\cite{KSFG:Kaleidocycle}--thus provides a rare example of an \emph{underconstrained} linkage,
    whose configuration space has strictly smaller dimension than that predicted by Maxwell's count.
\end{remark}

We use the following elementary geometric property of 
equifacial tetrahedron:
\begin{lem}\label{lem:equifacial}
An equifacial tetrahedron whose faces are congruent triangles with edge-lengths $(l_{01}, l_{12}, l_{02})$ may be realised within a rectangular cuboid
 of side lengths
\[
\sqrt{\frac{l_{01}^2 + l_{12}^2 - l_{02}^2}{2}}, \quad \sqrt{\frac{l_{12}^2 + l_{02}^2 - l_{01}^2}{2}}, \quad \sqrt{\frac{l_{02}^2 + l_{01}^2 - l_{12}^2}{2}}
\]
so that tetrahedron edges are diagonals of the faces.
Moreover, the twist angle between the pair of opposite edges of length~$l_{01}$ is given by the relation
\[
 \cos(\lambda) = \frac{|l_{12}^2 - l_{02}^2|}{l_{01}^2}.
 \]
\end{lem}

We will see the solution space of \Cref{eq:configuration_coords} is identified with $\conf^\pm_{k,\lambda}$.

\begin{prop}\label{prop:configuration-space}
    The space of real solutions to \Cref{eq:configuration_coords}
    is $O(3)$-equivariantly homeomorphic to $\conf^\pm_{k,\lambda}$ with 
    $\lambda=\arccos\left(\frac{|l_{12}^2 - l_{02}^2|}{l_{01}^2}\right)$.    
\end{prop}
\begin{proof}
Given $(x_0,\ldots,x_{2k+1})$ satisfying \Cref{eq:configuration_coords},
we define $b_n=\dfrac{x_{2n+1}-x_{2n}}{|x_{2n+1}-x_{2n}|}$.
By \Cref{lem:equifacial}, the inner product 
$\langle b_{n-1}, b_{n} \rangle$ is constant
$\cos\lambda$ for $1\le n<k$. 

Let $\gamma_n = (x_{2n}+x_{2n+1})/2 \ (0\le n<k)$ be the mid-point of the hinge edge.
By \Cref{lem:equifacial}, $\gamma_{n+1}-\gamma_{n}$
is orthogonal to both $b_{n}$ and 
$b_{n+1}$.
Therefore, $b_{n+1}\times b_{n}$ is parallel to $\gamma_{n+1}-\gamma_n$, and we have
$l(b_{n+1}\times b_{n})=\gamma_{n+1}-\gamma_{n}$
for some $l\in \R$.
Then, $l\sum_{n=0}^{k-1}( b_{n+1}\times b_{n}) =
\sum_{n=0}^{k-1}(\gamma_{n+1}-\gamma_n)=0$.
This means $(b_0,\ldots,b_k)\in \conf^\pm_{k,\lambda}$.

Conversely, given $(b_0,\ldots,b_k)\in \conf^\pm_{k,\lambda}$,
define $\gamma_n=\frac{l}{\sin\lambda} \sum_{m=0}^{n-1}(b_{m+1}\times b_{m})$,
where $l$ is the distance of the opposing hinge edges of the tetrahedron.
We set $x_{2n}=\gamma_{n} - \frac{l_{01}}{2} b_n$
and  $x_{2n+1}=\gamma_n +\frac{l_{01}}{2} b_n$.
Then, $(x_0,\ldots,x_{2k+1})$ satisfies 
\Cref{eq:configuration_coords}.

It is clear that
the above constructions are mutually inverse and $O(3)$ equivariant.
\end{proof}

While the shape of the tetrahedron is determined by the triangle edge lengths $(l_{01},l_{12},l_{02})$, 
the configuration space depends only on the hinge angle $\lambda$.
We call $\lambda$ the \emph{torsion angle}.

We call the (possible non-flat) polygon, or \emph{closed discrete spatial curve},
defined by $\gamma_n$ the \emph{central curve} of the Kaleidocycle.

Depending on the sign of $b_k=\pm b_0$,
we say the Kaleidocycle is \emph{oriented} (resp. \emph{non-oriented}).

\begin{figure}
    \centering
    \includegraphics[width=0.3\textwidth]{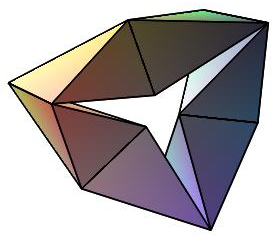}
    \includegraphics[width=0.3\textwidth]{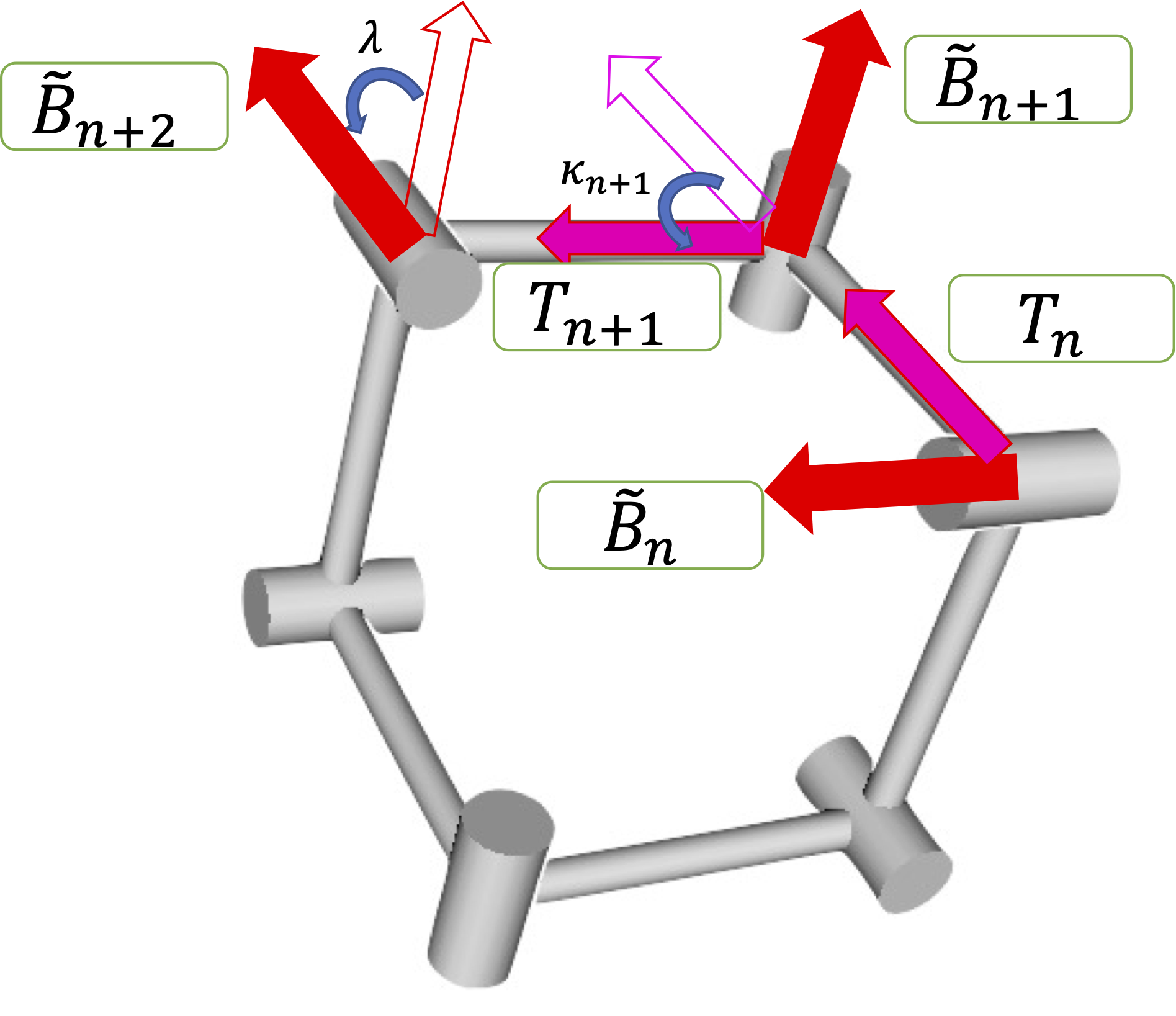}
        \includegraphics[width=0.3\linewidth]{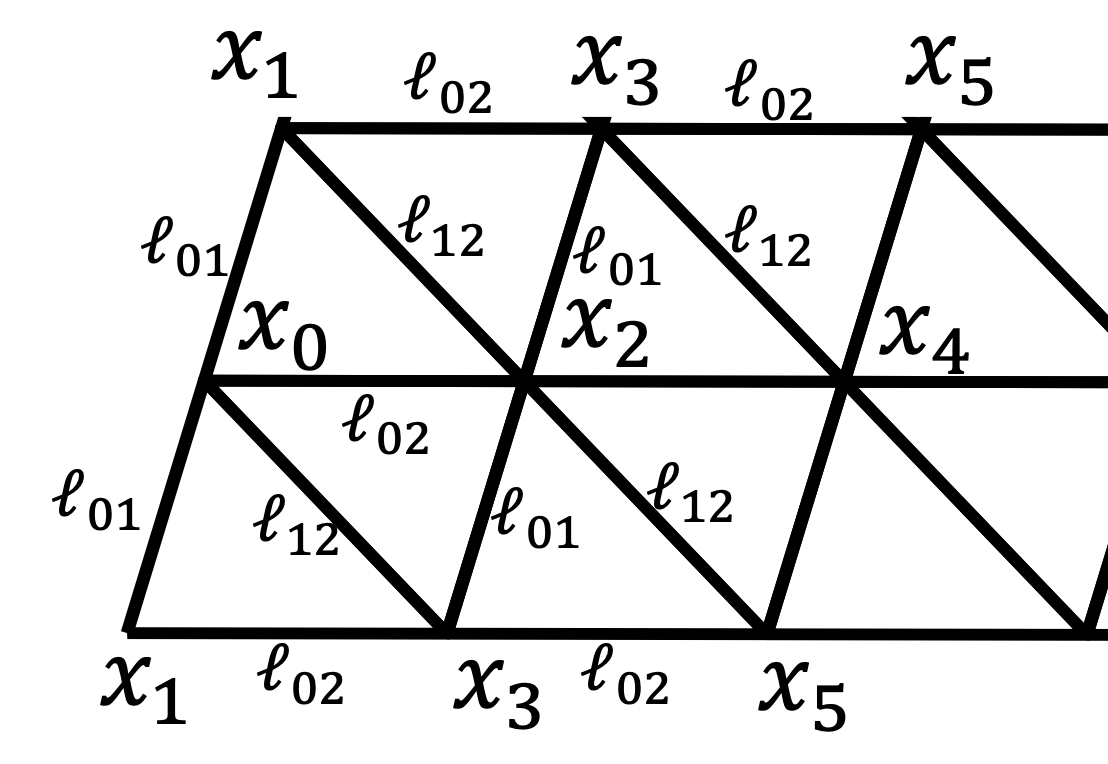}  
        \caption{(Left) The classical 6-Kaleidocycle. (Centre) Kaleidocycle as a discrete curve with its framing (Right) Net of a Kaleidocycle with labelled vertices and edge lengths.}
    \label{fig:K6}
\end{figure}

\subsection{Discrete spatial curve of a constant torsion angle}\label{subsection:Discrete space curves}
We saw that a discrete spatial curve $\gamma_n$ is associated with a Kaleidocycle.
We review some basics on discrete curves.

A discrete spatial curve with non-vanishing torsion angles can be described in terms of binormals.
Let $\widetilde{B}: \mathbb{Z}\rightarrow S^{2}$ 
satisfy
$\left| \widetilde{B}_{n}\times  \widetilde{B}_{n-1}\right|\neq 0$ for all $n\in \mathbb{Z}$.
A discrete spatial curve of unit segment-length 
having $\widetilde{B}$ as 
the binormals can be constructed as follows.
Define the tangent $T: \mathbb{Z}\rightarrow S^{2}$ and the normal $\widetilde{N}: \mathbb{Z}\rightarrow S^{2}$ by
\begin{equation}\label{def:tangent_and_normal}
T_n = \frac{ \widetilde{B}_{n+1}\times  \widetilde{B}_{n}}{\left| \widetilde{B}_{n+1}\times  \widetilde{B}_{n}\right|},\quad
\widetilde{N}_n = \widetilde{B}_n\times T_n.
\end{equation}
Furthermore, given $\gamma_{0}\in \mathbb{R}^3$,
define $\gamma:\mathbb{Z}\to \mathbb{R}^{3}$ by
\begin{equation}\label{def:gamma}
\gamma_{n+1} =\gamma_{n}+T_{n}=\gamma_{n}+\frac{\widetilde{B}_{n+1}\times\widetilde{B}_{n}}{\left|\widetilde{B}_{n+1}\times\widetilde{B}_{n}\right|}.
\end{equation} 
The \emph{signed curvature angle} $\kappa$
and the \emph{torsion angle}
are defined by
\begin{equation}\label{def:kappa}
\begin{split}
\langle T_{n},T_{n-1}\rangle =\cos\kappa_n, \quad\langle \widetilde{N}_{n},T_{n-1}\rangle &=-\sin\kappa_n, \quad-\pi< \kappa_n< \pi, \\
\langle\widetilde{B}_{n},\widetilde{B}_{n-1}\rangle &= \cos\lambda_n, \quad 0< \lambda_n < \pi.
\end{split}
\end{equation}
The \emph{discrete Frenet frame}
$\Phi_n=[T_n,\widetilde{N}_n,\widetilde{B}_n]\in\mathrm{SO}(3)$ satisfies the \emph{discrete Frenet-Serret formula}
\begin{equation}\label{discrete Frenet-Serret formula}
 \Phi_{n+1} = \Phi_n L_n,\quad L_n = M_1(-\lambda_{n+1}) M_3(\kappa_{n+1}),
\end{equation}
where
\begin{equation}
 M_1(\theta) = 
\left[\begin{array}{ccc}
1 & 0          & 0          \\
0 & \cos\theta  & -\sin\theta\\
0 & \sin\theta  &  \cos\theta\\
\end{array}\right],\quad
 M_3(\theta) = 
\left[\begin{array}{ccc}
 \cos\theta  & -\sin\theta & 0\\
 \sin\theta  &  \cos\theta & 0\\
     0      &       0     & 1
\end{array}\right].
\end{equation}

\begin{remark}
Note that our Frenet frame coincides with the usual one only up to sign.
We adopt this convention to avoid discontinuous changes in the framing during the motion of the Kaleidocycle; its motion passes through the state where the curvature vanishes.
\end{remark}

As an immediate corollary of \Cref{prop:configuration-space}, we obtain the following identification.
\begin{prop}\label{prop:identification}
A state of a Kaleidocycle is identified with a closed discrete spatial curve
$(\gamma_k=\gamma_0)$ of constant torsion angle $\lambda$ whose binormal vectors $\widetilde{B}_n$ satisfy
\eqref{eq:configuration_binormal} with $b_n=\widetilde{B}_n$.
\end{prop}

As in the smooth case, a discrete spatial curve of unit segment length is determined by its curvature and torsion angles.
Accordingly, the motion of a discrete curve with constant torsion angle is governed by the family of curvature functions $\kappa_n(t)$.

The fact that a Kaleidocycle is closed imposes interesting topological constraints.
The signed curvature angles $\kappa_n(t)$ are therefore subject to non-trivial restrictions.
In particular, the self-linking number is a half-integer, and hence remains constant during the motion.
By the C\u{a}lug\u{a}reanu--Fuller--White theorem~\cite{banchoff1975behavior}, the writhe of the curve must also remain constant.
Moreover, the self-linking number shows that the configuration space of a Kaleidocycle is not connected in general.

\section{Curves and integrable systems}\label{sec:integrable}
There are deep connections between differential geometry and the theory of integrable systems, where 
the integrable differential or difference equations arise as compatibility conditions of the geometric objects (see, for example, \cite{rogers_schief_2002}). In the context of
the deformation of spatial and plane curves, the Frenet frame and its deformation are described by the system of 
linear partial differential equations. 
The compatibility condition yields the so-called Ablowitz-Kaup-Newell-Segur (AKNS) hierarchy
for the curvature and the torsion, including the nonlinear Schr\"odinger (NLS) equation as a typical example. 
In particular, if the torsion is a constant, the integrable deformation is governed by the mKdV equation and its hierarchy \cite{hasimoto1972soliton,Lamb,rogers_schief_2002}. 
The connections are also investigated in a discrete setting, where curves are discretized as piecewise linear (polygonal) curves. 
For example, continuous deformations of discrete plane and 
spatial curves have been considered in \cite{doliwa1995integrable}, where the deformations are governed by the semi-discrete analogue of 
the mKdV and the NLS equations. Discrete deformations of the discrete plane and spatial curves have been studied in \cite{IKMO:DmKdVsG}.
In principle, for a given solution to the nonlinear differential/difference equation for the curvature and torsion,
one can reconstruct the curve by solving the associated system of linear differential/difference equations for the Frenet frame 
and integrating the tangent vector. 

The theta functions naturally appear in so-called quasi-periodic solutions to the nonlinear integrable equations, and the corresponding solutions to the associated linear problem are nothing but the Baker-Akhiezer functions. Reconstruction of the deformation of the spatial curve with constant torsion described by the theta function solutions to the mKdV equation has been studied in this framework in \cite{calini1998backlund,langer1984knotted}. Also, an explicit formula of the spatial curves in terms of the $\tau$ functions of the two-component KP hierarchy based on Hirota's bilinear formalism has been presented in \cite{HIKMO:DLIE}.

\subsection{Expression of a curve by the \texorpdfstring{$\tau$}{tau} functions}\label{sec:tau-representation}
We introduce a special class of discrete curves $\gamma_n$ and their Frenet frames
$\Phi_{n}=[T_{n}, \widetilde{N}_{n}, \widetilde{B}_{n}]$
indexed by $n\in \Z$ using a modification of the $\tau$ functions presented in \cite{HIKMO:DLIE}.
Let $F_n,\widetilde{G}_n:\mathbb{C}\to \mathbb{R}$ 
and $f_n, g_n, G_n, H_n:\mathbb{C}\to \mathbb{C}$
be functions indexed by $n\in \Z$ 
with a complex variable $z$ that satisfy the following equations
\begin{align}
&f_{n}f_{n}^*+g_{n}g_{n}^*=F_{n+1}F_{n} \label{i1},\\
&\beta_{1} F_{n}f_{n}+G_{n}g_{n}^*=F_{n+1}f_{n-1} \label{i2},\\
&\beta_{1} F_{n}g_{n}-G_{n}f_{n}^*=F_{n+1}g_{n-1} \label{i3},\\
&H_{n+1}F_{n}-H_{n}F_{n+1}=f_{n}^*g_{n} \label{i4},\\
&D_zF_{n+1}\cdot F_{n}:=\frac{dF_{n+1}}{dz}F_{n}-F_{n+1}\frac{dF_{n}}{dz}=g_{n}g_{n}^* \label{i5},\\
&\left|G_{n}\right|^{2}=\widetilde{G}_{n}^{2},\label{i6}
\end{align}
where $\beta_{1}\in\mathbb{R}$ is some constant and $D_z$ is Hirota's bilinear differential operator \cite{hirota_2004}, and the symbol $*$ denotes the complex conjugate.
These functions are called the $\tau$ functions.
Let $\varrho_{n}=\dfrac{1}{2}\dfrac{G_{n}}{\widetilde{G}_{n}}$ and
\begin{align}\label{T}
  T_{n} &=\left. \frac{1}{f_{n}f_{n}^{*}+g_{n}g_{n}^{*}} \left(
    \begin{array}{c}
   \vspace{2mm}
      f_{n}^{*}g_{n}+f_{n}g_{n}^{*} \\
      \vspace{2mm}
      \frac{1}{i}\left(f_{n}^{*}g_{n}-f_{n}g_{n}^{*}\right) \\
      f_{n}f_{n}^{*}-g_{n}g_{n}^{*}
    \end{array}
  \right)\right|_{z=0}, \\
  \widetilde{N}_{n}&=\left.\frac{1}{f_{n}f_{n}^{*}+g_{n}g_{n}^{*}} \left(
    \begin{array}{c}
    \vspace{2mm}
      \left\{(f_{n}^{*})^{2}-(g_{n}^{*})^{2}\right\}\varrho_{n}+ \left\{(f_{n})^{2}-(g_{n})^{2}\right\}\varrho_{n}^{*} \\
      \vspace{2mm}
      \frac{1}{i}\left[\left\{(f_{n}^{*})^{2}+(g_{n}^{*})^{2}\right\}\varrho_{n}-\left\{(f_{n})^{2}+(g_{n})^{2}\right\}\varrho_{n}^{*}\right] \\
      -2\left(f_{n}^{*}g_{n}^{*}\varrho_{n}+f_{n}g_{n}\varrho_{n}^{*}\right)
    \end{array}
  \right)\right|_{z=0}, \nonumber\\
  \widetilde{B}_{n}&=\left. \frac{1}{f_{n}f_{n}^{*}+g_{n}g_{n}^{*}} \left(
    \begin{array}{c}
    \vspace{2mm}
       \frac{1}{i}\left[-\left\{(f_{n}^{*})^{2}-(g_{n}^{*})^{2}\right\}\varrho_{n}+ \left\{(f_{n})^{2}-(g_{n})^{2}\right\}\varrho_{n}^{*}\right] \\
       \vspace{2mm}
      \left\{(f_{n}^{*})^{2}+(g_{n}^{*})^{2}\right\}\varrho_{n}+ \left\{(f_{n})^{2}+(g_{n})^{2}\right\}\varrho_{n}^{*} \\
      \frac{2}{i}\left(f_{n}^{*}g_{n}^{*}\varrho_{n}-f_{n}g_{n}\varrho_{n}^{*}\right)
    \end{array}
  \right)\right|_{z=0}, \nonumber \\
 \gamma_{n}&= \left.\left(
        \begin{array}{c}
        \vspace{2mm}
    \dfrac{H_{n}+H_{n}^*}{F_{n}}\\
    \vspace{2mm}
    \dfrac{1}{i}\dfrac{H_{n}-H_{n}^*}{F_{n}}\\
    n-2\dfrac{\partial\log F_{n}}{\partial z}
        \end{array}
      \right) \right|_{z=0}. \nonumber  
\end{align}

Similarly to \cite[\S 3]{HIKMO:DLIE}, we obtain:
\begin{prop}\label{prop:curve and frame explicit}
$\Phi_{n}=[T_{n}, \widetilde{N}_{n}, \widetilde{B}_{n}]$
defines the discrete Frenet frame for the curve $\gamma$
satisfying the discrete Frenet-Serret formula \eqref{discrete Frenet-Serret formula}.
Its curvature and torsion angles are given by
\begin{align}\label{lam}
\lambda_n&=\left.\arccos\left(\frac{G_{n}^{*}G_{n-1}+G_{n}G_{n-1}^{*}}{2\widetilde{G}_{n}\widetilde{G}_{n-1}}\right)\right|_{z=0}, \\
\kappa_{n}&=\left.2\arctan\left(\frac{\widetilde{G}_{n}}{\beta_{1} F_{n}}\right)\right|_{z=0}, \nonumber
\end{align}
where we take the principal value for $\arccos$ in $[0, \pi]$ and for $\arctan$ in $(-\pi/2, \pi/2)$.
\end{prop}

\subsection{Motion of a Kaleidocycle on the osculating plane}\label{subsection:Segment-length and torsion-angle preserving deformation of discrete space curve}
Let $\gamma$ be a Kaleidocycle, that is, a closed curve with a constant torsion angle and a unit segment length.
A motion of a Kaleidocycle is identified with a time-parametrised family 
$\gamma: \Z \times \R_{\ge 0}\to \R^3$ such that 
$\gamma(t)$ is a Kaleidocycle for all $t\ge 0$.
In \cite{KKP:linkage}, a particular motion of a Kaleidocycle 
satisfying the following conditions is considered:
\begin{equation}\label{deform2}
\left\langle\dot{\gamma}_{n}, \widetilde{B}_{n}\right\rangle=0,\quad\left\langle\dot{\gamma}_{n}, \dot{\gamma}_{n}\right\rangle=\rho^2, 
\quad
|{\gamma}_{n+1}-{\gamma}_{n}|=1 \quad (0\le \forall n \le k),
\end{equation}
where $\rho\neq 0$ is a constant and 
$\dot{\gamma}_{n}=\dfrac{d\gamma_n}{dt}$.
These conditions imply that the velocity of each vertex is restricted to the osculating plane and has a constant magnitude.
It is shown in \cite{KKP:linkage} that the Frenet-Serret formula \eqref{discrete Frenet-Serret formula}, together with 
\eqref{deform2}, requires that there exists a function 
$\Theta_n$, called the \emph{potential function}, with $\kappa_n=\dfrac{\Theta_{n+1}-\Theta_{n-1}}{2}$ that satisfies either 
the \emph{semi-discrete potential mKdV equation}:
\begin{equation}\label{semi-discrete potential mKdV}
\frac{d}{dt}\left(\frac{\Theta_{n+1}+\Theta_{n}}{2}\right)=\left(1+\cos\lambda\right)\rho\sin\left(\frac{\Theta_{n+1}-\Theta_{n}}{2}\right),
\end{equation}
or the \emph{semi-discrete sine-Gordon equation}:
\begin{equation}\label{semi-discrete sG}
\frac{d}{dt}\left(\frac{\Theta_{n+1}-\Theta_{n}}{2}\right)=\left(1-\cos\lambda\right)\rho\sin\left(\frac{\Theta_{n+1}+\Theta_{n}}{2}\right).
\end{equation}
The motions corresponding to these equations differ in general. However, we will construct a motion in \S \ref{section:Time evolution of curvature and its potential} that satisfies both equations simultaneously.

\begin{remark}
As noted in \S \ref{sec:kaleidocycle}, the dimension of the configurations of a Kaleidocycle defined by 
\eqref{eq:configuration_binormal} is generally $k-6$. 
However, for the M\"obius Kaleidocycles, which exhibit critical torsion angles,
the dimension of the configurations is conjectured to be one, regardless of $k$.
An extensive numerical simulation suggests that the motion of 
the M\"obius Kaleidocycle satisfies \eqref{deform2}. 
Therefore, \eqref{deform2} specifies a particular yet essential motion of Kaleidocycles.
\end{remark}

\section{An explicit formula of a discrete curve with a constant torsion angle in terms of elliptic theta functions}\label{section:An explicit formula of closed discrete space curve with constant torsion angle}
Based on the representation of curves 
discussed in \S \ref{sec:tau-representation},
we define certain $\tau$ functions satisfying \eqref{i1}-\eqref{i6} using elliptic theta functions to
construct a family of discrete curves with a constant torsion angle
in \Cref{thm:PhDKal1}.

\subsection{A family of curves with a constant torsion angle}\label{subsection:A family of open curves with constant torsion angle}
Throughout this section, $j$ runs through $\{1,2,3,4\}$.
Fix $w\in\mathbb{C}$ with $\mathrm{Im}(w)>0$.
The elliptic theta functions 
$\vartheta_j: \mathbb{C} \to \mathbb{C}$,
 are defined as follows \cite{Mumford:Tata_I,Khachev_Zabrodin:theta}:
\begin{align}\label{theta1}
\vartheta_1 (u; w)
&= - \sum_{n=-\infty}^\infty e^{\pi i w \left(n + \frac{1}{2}\right)^2 + 2 \pi i \left(n + \frac{1}{2}\right)\left(u + \frac{1}{2}\right)},\\
\vartheta_2 (u; w)
&= \sum_{n=-\infty}^\infty e^{\pi i w \left(n + \frac{1}{2}\right)^2 + 2 \pi i \left(n + \frac{1}{2}\right) u}, \nonumber\\
\vartheta_3 (u; w)
&= \sum_{n=-\infty}^\infty e^{\pi i w n^2 + 2 \pi i n u}, \nonumber\\
\vartheta_4 (u; w)
&= \sum_{n=-\infty}^\infty e^{\pi i w n^2 + 2 \pi i n \left(u + \frac{1}{2}\right)}. \nonumber
\end{align}
We often denote $\vartheta_j(u; w)$ by $\vartheta_j(u)$
when $w$ is fixed.
These functions satisfy various identities, collected in Appendix \ref{section:AppendixA}, that are convenient to construct the $\tau$ functions 
satisfying \eqref{i1}–\eqref{i6}.
We see the torsion angle given by Proposition \ref{prop:curve and frame explicit} is a constant.

The following property is useful for examining whether a $\tau$ function constructed in this paper is real-valued.
\begin{lem}\label{lem:theta_property}
The following holds for any $u\in\mathbb{C}$ when $w=iy$ $(y>0)$.
\begin{equation}\label{eqn:zbar}
\vartheta_j(u)^*=\vartheta_j(u^*).
\end{equation}
\end{lem}
\begin{proof}
Consider the following theta function with characteristics.
\begin{equation}
\vartheta_{a, b}(u):=\sum_{n\in\mathbb{Z}}\exp\left(2\pi i(n+a)(u+b)+\pi i (n+a)^2w\right),\quad a, b\in\left\{0, \frac{1}{2}\right\},
\end{equation}
then,
\begin{equation}
\vartheta_1(u)=-\vartheta_{\frac{1}{2}, \frac{1}{2}}(u),\quad \vartheta_2(u)=\vartheta_{\frac{1}{2}, 0}(u),\quad \vartheta_3(u)=\vartheta_{0, 0}(u),\quad \vartheta_4(u)=\vartheta_{0, \frac{1}{2}}(u),
\end{equation}
hold. Since $w^*=-w$, we have
\begin{equation}
\begin{split}
\vartheta_{a, b}(u)^*=&\sum_{n\in\mathbb{Z}}\exp\left(-2\pi i(n+a)(u^*+b)-\pi i (n+a)^2w^*\right)\\
=&\sum_{n\in\mathbb{Z}}\exp\left(2\pi i(-n-a)(u^*+b)+\pi i (n+a)^2w\right)\\
=&\sum_{n\in\mathbb{Z}}\exp\left(2\pi i(n+a)(u^*+b)+\pi i (n+a)^2w\right)\\
=&\vartheta_{a, b}(u^*).
\end{split}
\end{equation}
\end{proof} 
For $u\in \mathbb{C}$ and $y>0$,
we define 
$d_j(u;y)=\frac{\vartheta^{'}_j(u;iy)}{\vartheta_j(u;iy)}$,
where $^{'}$ denotes the derivative with respect to the first variable of elliptic theta functions.
For $y>0$, $r \in \R \setminus{(1/2\mathbb{Z})}$ and $v \in \R \setminus{(y\mathbb{Z})}$,
we define
\begin{align}
&u_{1}=\vartheta_1\left(-\frac{1}{2}iv+r\right)\vartheta_3\left(-\frac{1}{2}iv+r\right)\vartheta_1\left(\frac{1}{2}iv+r\right)\vartheta_3\left(\frac{1}{2}iv+r\right),\\
&\Delta_{j}=d_{j}\left(-\frac{1}{2}iv+r;y\right)-d_{j}\left(\frac{1}{2}iv+r;y\right), \quad R_{j}=\frac{\vartheta_{j}\left(-\frac{1}{2}iv+r\right)}{\vartheta_{j}\left(\frac{1}{2}iv+r\right)}, \nonumber
\end{align}
where $\vartheta_j(u;iy)$ is denoted by $\vartheta_j(u)$.
Using Lemma\ref{lem:theta_property}, we see
\begin{equation}\label{dr2}
\begin{split} 
u_{1}>0,\quad\Delta_{j}^{*}=-\Delta_{j}, \quad R_{j}^{*}=R_{j}^{-1}.
\end{split}
\end{equation}
Note that from \eqref{dr2}, we see $R_{1}R_{3}^{*}=R_{1}R_{3}^{-1}$ and is a complex number of magnitude $1$. Thus
\begin{equation}\label{dr3}
\begin{split}
\frac{R_{1}R_{3}^{*}+(R_{1}R_{3}^{*})^{*}}{2}=\frac{R_{1}R_{3}^{-1}+R_{1}^{-1}R_{3}}{2},
\end{split}
\end{equation}
is real and lies in $[-1,1]$.
Also, from \eqref{dr2} and
\begin{align}\label{Delta_3andDelta_1}
\Delta_{3}-\Delta_{1}=-\frac{\vartheta_3\left(2r\right)\vartheta_3\left(0\right)\vartheta_1\left(iv\right)\vartheta_1^{'}\left(0\right)}{u_{1}},
\end{align}
we see that $\Delta_{3}-\Delta_{1}\neq 0$.
\begin{prop}\label{thm:PhDKal1}
Put $\mu_{n}=ivn+\frac{z}{\Delta_{3}-\Delta_{1}}$
and define
\begin{align}\label{t1}
&F_{n}(z)=\alpha_1\exp\left(\frac{n\Delta_{3}}{\Delta_{3}-\Delta_{1}}z\right)\vartheta_2\left(\mu_{n}-\frac{1}{2}iv\right),\\
&f_{n}(z)=\alpha_2 R_{3}^{n}\exp\left(\frac{\left(n+\frac{1}{2}\right)\Delta_{3}}{\Delta_{3}-\Delta_{1}}z\right)\vartheta_2\left(\mu_{n}+r\right), \nonumber\\
&g_{n}(z)=\alpha_3 R_{1}^{-n}\exp\left(\frac{\left(n+\frac{1}{2}\right)\Delta_{3}}{\Delta_{3}-\Delta_{1}}z\right)\vartheta_4\left(\mu_{n}-r\right), \nonumber\\
&G_{n}(z)=\alpha_4 R_{1}^{-n}R_{3}^{n}\exp\left(\frac{n\Delta_{3}}{\Delta_{3}-\Delta_{1}}z\right)\vartheta_4\left(\mu_{n}-\frac{1}{2}iv\right), \nonumber\\
&H_{n}(z)=\alpha_5 R_{1}^{-n}R_{3}^{-n}\exp\left(\frac{n\Delta_{3}}{\Delta_{3}-\Delta_{1}}z\right)\vartheta_4\left(\mu_{n}-\frac{1}{2}iv-2r\right), \nonumber \\
&\widetilde{G}_{n}(z)=-i\alpha_4\exp\left(\frac{n\Delta_{3}}{\Delta_{3}-\Delta_{1}}z\right)\vartheta_4\left(\mu_{n}-\frac{1}{2}iv\right), \nonumber
\end{align}
where
\begin{align}
&\alpha_1=\sqrt{\vartheta_3(2r)\vartheta_3(0)},\quad \alpha_2=\vartheta_3\left(-\frac{1}{2}iv+r\right),\quad \alpha_3=\vartheta_1\left(\frac{1}{2}iv+r\right),\nonumber\\
&\alpha_4=\frac{\vartheta_1(iv)\sqrt{\vartheta_3(2r)\vartheta_3(0)}}{\vartheta_3(0)},\quad \alpha_5=\frac{u_{1}}{\vartheta_3(2r)\vartheta_1(iv)\sqrt{\vartheta_3(2r)\vartheta_3(0)}}.\nonumber
\end{align}
Then, they satisfy \eqref{i1}-\eqref{i6} with
\begin{equation}\label{beta_1}
\beta_{1}=\frac{\vartheta_3(iv)}{\vartheta_3(0)},
\end{equation}
and form $\tau$ functions.
The corresponding discrete curve $\gamma$ defined by \eqref{T} has a unit segment length $\left|\gamma_{n+1}-\gamma_{n}\right|=1$.
The torsion angle is given by
\begin{equation}\label{lam1}
\begin{split}
\lambda_n&=\arccos\left(\frac{G_{n}^{*}(0)G_{n-1}(0)+G_{n}(0)G_{n-1}(0)^{*}}{2\widetilde{G}_{n}(0)\widetilde{G}_{n-1}(0)}\right)\\
&=\arccos\left(\frac{R_{1}R_{3}^{-1}+R_{1}^{-1}R_{3}}{2}\right).
\end{split}
\end{equation}
In particular, this does not depend on $n$,
and $\gamma$ has a constant torsion angle
$\lambda_n=\lambda_0$.
Moreover,
the signed curvature angle is given by
\begin{equation}\label{kappa1}
\begin{split}
\kappa_{n}&=2\arctan\left(\frac{\widetilde{G}_{n}(0)}{\beta_{1} F_{n}(0)}\right)\\
&=2\arctan\left(-i\frac{\vartheta_1(iv)\vartheta_4\left(ivn-\frac{1}{2}iv\right)}{\vartheta_3(iv)\vartheta_2\left(ivn-\frac{1}{2}iv\right)}\right).
\end{split}
\end{equation}
\end{prop}
This proposition follows as a corollary of Theorem \ref{thm:PhDKal2} and will be proven in \S\ref{subsection:Explicit formula of the discrete curve with deformation parameter}.
Note that $\widetilde{G}$ and $G$ have the same simple zeros.
Therefore, \eqref{lam1} remains valid even at points where $\widetilde{G}$ vanishes.
Moreover, it follows from the zero distribution of the theta functions that both $\beta_{1}$ and $F_{n}$ are non-vanishing.

Combining this with \eqref{i1}, it follows that \eqref{T} and \eqref{kappa1} hold for any $v$,$r$,$y$, and $n$.
\begin{remark}
Note that $2\tan\frac{\kappa_{n}}{2}$ is often considered as the discretized curvature (see, for example, \cite{hoffmann2009lecture}), which is expressed using Jacobi $\mathrm{cn}$-function as follows:
\begin{equation}\label{kappa222}
\begin{split}
2\tan\frac{\kappa_{n}}{2}&=-2i\frac{\vartheta_1(iv)\vartheta_4\left(ivn-\frac{1}{2}iv\right)}{\vartheta_3(iv)\vartheta_2\left(ivn-\frac{1}{2}iv\right)}\\
&=2\frac{\vartheta_1\left(\frac{v}{y};\frac{i}{y}\right)\vartheta_2\left(\frac{v}{y}n-\frac{1}{2}\frac{v}{y};\frac{i}{y}\right)}{\vartheta_3\left(\frac{v}{y};\frac{i}{y}\right)\vartheta_4\left(\frac{v}{y}n-\frac{1}{2}\frac{v}{y};\frac{i}{y}\right)}\\
&=2\frac{\vartheta_1\left(\frac{v}{y};\frac{i}{y}\right)\vartheta_2\left(0;\frac{i}{y}\right)}{\vartheta_3\left(\frac{v}{y};\frac{i}{y}\right)\vartheta_4\left(0;\frac{i}{y}\right)}\mathrm{cn}\left[2\tilde{K}\left(\frac{v}{y}n-\frac{1}{2}\frac{v}{y}\right),\tilde{\xi}\right],
\end{split}
\end{equation}
where $\tilde{K}=\pi\vartheta_3\left(0;\frac{i}{y}\right)^2/2$ is the complete elliptic integral of the first kind and $\tilde{\xi}=\vartheta_2\left(0;\frac{i}{y}\right)^2/\vartheta_3\left(0;\frac{i}{y}\right)^2$ is the elliptic modulus.
\end{remark}

\section{Deformation of a discrete curve with a constant torsion angle}\label{section:Time evolution of curvature and its potential}
By introducing the deformation parameter $t$ to the curve constructed
in \S\ref{subsection:A family of open curves with constant torsion angle}, we construct an explicit formula for the motion of a Kaleidocycle in Theorem \ref{thm:PhDKal2} (\S\ref{subsection:Explicit formula of the discrete curve with deformation parameter}).
We then derive the evolution equation for the vertex positions of the curve in \Cref{thm:dotgamma} (\S\ref{subsection:Motion of vertex positions}), which reduces to the previously known equations presented in \S\ref{subsection:Segment-length and torsion-angle preserving deformation of discrete space curve} under the specific specialization of the global motion of the curve, as detailed in \Cref{thm:CGamma}.
Furthermore, we derive the evolution equations for the (potential of the) curvature in \Cref{thm:anothermkdv} and \Cref{thm:sdm} (\S\ref{subsection:Evolution of curvature}, \S\ref{subsection:Potential function for curvature}).
Connections to the previously-known isoperimetric deformation of a curve by the semi-discrete mKdV equation \cite{IKMO:DmKdVsG} and by the semi-discrete potential mKdV equation and the semi-discrete sine-Gordon equation \cite{KKP:linkage} are discussed.
This is also summarized in Table \ref{table:deformationequations} in \S\ref{section:Necessary and sufficient conditions for closedness}.

\subsection{Explicit formula for the curve with deformation parameter}\label{subsection:Explicit formula of the discrete curve with deformation parameter}
We introduce the deformation parameter $t$ so that only the signed curvature angles change while the torsion angle is kept constant.
Additionally, the real parameters $C$ and $\Gamma$ are introduced to control the global motion of the curve.
The following theorem describes a Kaleidocycle together with its motion explicitly in terms of elliptic theta functions.
\begin{theorem}\label{thm:PhDKal2}
Denote $\mu_{n}=ivn+\frac{z}{\Delta_{3}-\Delta_{1}}$. Consider the following functions parameterised by $t\in \R$:
\begin{align}\label{t}
&F_{n}(t, z)=\alpha_1\exp\left(\frac{n\Delta_{3}}{\Delta_{3}-\Delta_{1}}z+\frac{C}{2}tz\right)\vartheta_2\left(\mu_{n}-\frac{1}{2}iv+it\right),\\
&f_{n}(t, z)=\alpha_2 R_{3}^{n}\exp\left(\frac{\left(n+\frac{1}{2}\right)\Delta_{3}}{\Delta_{3}-\Delta_{1}}z+\frac{C}{2}tz-\frac{\Gamma}{2} it\right)\vartheta_2\left(\mu_{n}+r+it\right), \nonumber\\
&g_{n}(t, z)=\alpha_3 R_{1}^{-n}\exp\left(\frac{\left(n+\frac{1}{2}\right)\Delta_{3}}{\Delta_{3}-\Delta_{1}}z+\frac{C}{2}tz+\frac{\Gamma}{2} it\right)\vartheta_4\left(\mu_{n}-r+it\right), \nonumber\\
&G_{n}(t, z)=\alpha_4 R_{1}^{-n}R_{3}^{n}\exp\left(\frac{n\Delta_{3}}{\Delta_{3}-\Delta_{1}}z+\frac{C}{2}tz\right)\vartheta_4\left(\mu_{n}-\frac{1}{2}iv+it\right), \nonumber\\
&H_{n}(t, z)=\alpha_5 R_{1}^{-n}R_{3}^{-n}\exp\left(\frac{n\Delta_{3}}{\Delta_{3}-\Delta_{1}}z+\frac{C}{2}tz+\Gamma it\right)\vartheta_4\left(\mu_{n}-\frac{1}{2}iv-2r+it\right), \nonumber\\
&\widetilde{G}_{n}(t, z)=-i\alpha_4 \exp\left(\frac{n\Delta_{3}}{\Delta_{3}-\Delta_{1}}z+\frac{C}{2}tz\right)\vartheta_4\left(\mu_{n}-\frac{1}{2}iv+it\right),\nonumber
\end{align}
where $C, \Gamma\in\mathbb{R}$ are constants.
Then, they satisfy \eqref{i1}-\eqref{i6} with $\beta_{1}$ given by \eqref{beta_1} and form $\tau$ functions.
For each $t$,
the corresponding discrete curve $\gamma$ \eqref{T} has a unit segment length $\left|\gamma_{n+1}-\gamma_{n}\right|=1$.
The torsion angle is given by
\begin{equation}\label{lam2}
\begin{split}
\lambda_n&=\arccos\left(\frac{G_{n}^{*}(t, 0)G_{n-1}(t, 0)+G_{n}(t, 0)G_{n-1}^{*}(t, 0)}{2\widetilde{G}_{n}(t, 0)\widetilde{G}_{n-1}(t, 0)}\right)\\
&=\arccos\left(\frac{R_{1}R_{3}^{-1}+R_{1}^{-1}R_{3}}{2}\right).
\end{split}
\end{equation}
In particular, this depends neither on $n$ nor on $t$,
and $\gamma$ has a constant torsion angle
$\lambda_n(t)=\lambda_0(0)$.
Moreover,
the signed curvature angle is given by
\begin{equation}\label{kappa2}
\begin{split}
\kappa_{n}\left(t\right)&=\kappa_{n}\left(t;v,y\right)=2\arctan\left(\frac{\widetilde{G}_{n}(t, 0)}{\beta_{1} F_{n}(t, 0)}\right)\\
&=2\arctan\left(-i\frac{\vartheta_1(iv)\vartheta_4\left(ivn-\frac{1}{2}iv+it\right)}{\vartheta_3(iv)\vartheta_2\left(ivn-\frac{1}{2}iv+it\right)}\right),
\end{split}
\end{equation}
and has a period in $t$:
\begin{equation}\label{kappa_periodic}
    \kappa_{n}\left(t+y\right)=-\kappa_{n}\left(t\right).
\end{equation}
\end{theorem}
\begin{proof}
Putting
\begin{eqnarray}\label{pr6-1}
\left\{
\begin{array}{l}
\vspace{2mm}
X=ivn+r+it+\frac{z}{\Delta_{3}-\Delta_{1}},\\
\vspace{2mm}
Y=ivn-r+it+\frac{z}{\Delta_{3}-\Delta_{1}},\\
\vspace{2mm}
U=\frac{1}{2}iv+r,\\
V=\frac{1}{2}iv-r,\\
\end{array}
\right.
\end{eqnarray}
in \eqref{pr4-1} yields \eqref{i1}.
Putting
\begin{eqnarray}\label{pr6-2}
\left\{
\begin{array}{l}
\vspace{2mm}
X=iv\left(n-\frac{1}{2}\right)+it+\frac{z}{\Delta_{3}-\Delta_{1}},\\
\vspace{2mm}
Y=ivn+r+it+\frac{z}{\Delta_{3}-\Delta_{1}},\\
\vspace{2mm}
U=iv,\\
V=\frac{1}{2}iv-r,\\
\end{array}
\right.
\end{eqnarray}
in \eqref{pr4-1} yields \eqref{i2}.
Putting
\begin{eqnarray}\label{pr6-3}
\left\{
\begin{array}{l}
\vspace{2mm}
X=iv\left(n-\frac{1}{2}\right)+it+\frac{z}{\Delta_{3}-\Delta_{1}},\\
\vspace{2mm}
Y=\frac{1}{2}iv+r,\\
\vspace{2mm}
U=iv,\\
V=ivn-r+it+\frac{z}{\Delta_{3}-\Delta_{1}},\\
\end{array}
\right.
\end{eqnarray}
in \eqref{pr4-2} yields \eqref{i3}.
Putting
\begin{eqnarray}\label{pr6-4}
\left\{
\begin{array}{l}
\vspace{2mm}
X=ivn-r+it+\frac{z}{\Delta_{3}-\Delta_{1}},\\
\vspace{2mm}
Y=-\frac{1}{2}iv+r,\\
\vspace{2mm}
U=ivn-r+it+\frac{z}{\Delta_{3}-\Delta_{1}},\\
V=-\frac{1}{2}iv+r,\\
\end{array}
\right.
\end{eqnarray}
in \eqref{pr4-3} yields
\begin{equation}\label{pr6-5}
\begin{split}
&\frac{\vartheta_2\left(0\right)\vartheta_4\left(0\right)}{\vartheta_1\left(-\frac{1}{2}iv+r\right)\vartheta_3\left(-\frac{1}{2}iv+r\right)}\vartheta_2\left(\mu_{n}-\frac{1}{2}iv+it\right)\vartheta_4\left(\mu_{n+1}-\frac{1}{2}iv-2r+it\right)\\
&=\frac{\vartheta_2\left(-\frac{1}{2}iv+r\right)\vartheta_4\left(-\frac{1}{2}iv+r\right)}{\vartheta_1\left(-\frac{1}{2}iv+r\right)\vartheta_3\left(-\frac{1}{2}iv+r\right)}\vartheta_2\left(\mu_{n}-r+it\right)\vartheta_4\left(\mu_{n}-r+it\right)\\
&-\vartheta_3\left(\mu_{n}-r+it\right)\vartheta_1\left(\mu_{n}-r+it\right).
\end{split}
\end{equation}
Also, putting
\begin{eqnarray}\label{pr6-6}
\left\{
\begin{array}{l}
\vspace{2mm}
X=ivn-r+it+\frac{z}{\Delta_{3}-\Delta_{1}},\\
\vspace{2mm}
Y=\frac{1}{2}iv+r,\\
\vspace{2mm}
U=ivn-r+it+\frac{z}{\Delta_{3}-\Delta_{1}},\\
V=\frac{1}{2}iv+r,\\
\end{array}
\right.
\end{eqnarray}
in \eqref{pr4-3} yields
\begin{equation}\label{pr6-7}
\begin{split}
&\frac{\vartheta_2\left(0\right)\vartheta_4\left(0\right)}{\vartheta_1\left(\frac{1}{2}iv+r\right)\vartheta_3\left(\frac{1}{2}iv+r\right)}\vartheta_2\left(\mu_{n+1}-\frac{1}{2}iv+it\right)\vartheta_4\left(\mu_{n}-\frac{1}{2}iv-2r+it\right)\\
&=\frac{\vartheta_2\left(\frac{1}{2}iv+r\right)\vartheta_4\left(\frac{1}{2}iv+r\right)}{\vartheta_1\left(\frac{1}{2}iv+r\right)\vartheta_3\left(\frac{1}{2}iv+r\right)}\vartheta_2\left(\mu_{n}-r+it\right)\vartheta_4\left(\mu_{n}-r+it\right)\\
&-\vartheta_3\left(\mu_{n}-r+it\right)\vartheta_1\left(\mu_{n}-r+it\right).
\end{split}
\end{equation}
Subtracting \eqref{pr6-7} from \eqref{pr6-5} and substituting 
\begin{equation}\label{pr6-8}
\begin{split}
&\vartheta_3\left(2r\right)\vartheta_4\left(0\right)\vartheta_2\left(0\right)\vartheta_1\left(iv\right)\\
&=\vartheta_1\left(\frac{1}{2}iv+r\right)\vartheta_3\left(\frac{1}{2}iv+r\right)\vartheta_2\left(-\frac{1}{2}iv+r\right)\vartheta_4\left(-\frac{1}{2}iv+r\right)\\
&-\vartheta_1\left(-\frac{1}{2}iv+r\right)\vartheta_3\left(-\frac{1}{2}iv+r\right)\vartheta_2\left(\frac{1}{2}iv+r\right)\vartheta_4\left(\frac{1}{2}iv+r\right),
\end{split}
\end{equation}
 in the equation, we obtain
\begin{equation}\label{p4-4}
\begin{split}
&\frac{\vartheta_1\left(-\frac{1}{2}iv+r\right)\vartheta_3\left(-\frac{1}{2}iv+r\right)\vartheta_1\left(\frac{1}{2}iv+r\right)\vartheta_3\left(\frac{1}{2}iv+r\right)}{\vartheta_3(2r)\vartheta_1(iv)}\\
&\times\left\{\frac{\vartheta_1\left(\frac{1}{2}iv+r\right)\vartheta_3\left(\frac{1}{2}iv+r\right)}{\vartheta_1\left(-\frac{1}{2}iv+r\right)\vartheta_3\left(-\frac{1}{2}iv+r\right)}\vartheta_2\left(\mu_{n}-\frac{1}{2}iv+it\right)\vartheta_4\left(\mu_{n+1}-\frac{1}{2}iv-2r+it\right)\right.\\
&\left.-\vartheta_2\left(\mu_{n+1}-\frac{1}{2}iv+it\right)\vartheta_4\left(\mu_{n}-\frac{1}{2}iv-2r+it\right)\right\}\\
&=\vartheta_3\left(\frac{1}{2}iv+r\right)\vartheta_2\left(\mu_{n}-r+it\right)\vartheta_1\left(\frac{1}{2}iv+r\right)\vartheta_4\left(\mu_{n}-r+it\right),
\end{split}
\end{equation}
from which we see \eqref{i4} holds.
Putting
\begin{eqnarray}\label{pr6-9}
\left\{
\begin{array}{l}
\vspace{2mm}
X=ivn+r+it+\frac{z}{\Delta_{3}-\Delta_{1}},\\
\vspace{2mm}
Y=ivn-r+it+\frac{z}{\Delta_{3}-\Delta_{1}},\\
\vspace{2mm}
U=\frac{1}{2}iv+r+h,\\
V=\frac{1}{2}iv-r+h,\\
\end{array}
\right.
\end{eqnarray}
in \eqref{pr4-1} yields
\begin{equation}\label{pr6-10}
\begin{split}
&\vartheta_2\left(\mu_{n+1}-\frac{1}{2}iv+it+h\right)\vartheta_2\left(\mu_{n}-\frac{1}{2}iv+it-h\right)\vartheta_3(2r)\vartheta_3(0)\\
&=\vartheta_{2}\left(\mu_{n}+r+it\right)\vartheta_{2}\left(\mu_{n}-r+it\right)\vartheta_{3}\left(\frac{1}{2}iv+r+h\right)\vartheta_{3}\left(\frac{1}{2}iv-r+h\right)\\
&-\vartheta_{4}\left(\mu_{n}+r+it\right)\vartheta_{4}\left(\mu_{n}-r+it\right)\vartheta_{1}\left(\frac{1}{2}iv+r+h\right)\vartheta_{1}\left(\frac{1}{2}iv-r+h\right).
\end{split}
\end{equation}
Differentiating \eqref{pr6-10} with $h$ and then putting $h=0$, we obtain
\begin{equation}\label{pr6-11}
\begin{split}
&\vartheta_3(2r)\vartheta_3(0)\left(\Delta_{3}-\Delta_{1}\right)D_{z}\vartheta_2\left(\mu_{n+1}-\frac{1}{2}iv+it\right)\cdot\vartheta_2\left(\mu_{n}-\frac{1}{2}iv+it\right)\\
&=-\Delta_{3}\vartheta_{3}\left(-\frac{1}{2}iv+r\right)\vartheta_{3}\left(\frac{1}{2}iv+r\right)\vartheta_{2}\left(\mu_{n}+r+it\right)\vartheta_{2}\left(\mu_{n}-r+it\right)\\
&-\Delta_{1}\vartheta_{1}\left(-\frac{1}{2}iv+r\right)\vartheta_{1}\left(\frac{1}{2}iv+r\right)\vartheta_{4}\left(\mu_{n}+r+it\right)\vartheta_{4}\left(\mu_{n}-r+it\right).
\end{split}
\end{equation}
Using the equation \eqref{pr6-10} with $h=0$ and substituting \eqref{pr6-11}, we obtain
\begin{equation}\label{p4-5}
\begin{split}
&\vartheta_3(2r)\vartheta_3(0)\left\{\left(\Delta_{3}-\Delta_{1}\right)D_{z}\vartheta_2\left(\mu_{n+1}-\frac{1}{2}iv+it\right)\cdot\vartheta_2\left(\mu_{n}-\frac{1}{2}iv+it\right)\right.\\
&\left.+\Delta_{3}\vartheta_2\left(\mu_{n+1}-\frac{1}{2}iv+it\right)\vartheta_2\left(\mu_{n}-\frac{1}{2}iv+it\right)\right\}\\
&=\left(\Delta_{3}-\Delta_{1}\right)\vartheta_1\left(-\frac{1}{2}iv+r\right)\vartheta_4\left(\mu_{n}+r+it\right)\vartheta_1\left(\frac{1}{2}iv+r\right)\vartheta_4\left(\mu_{n}-r+it\right).
\end{split}
\end{equation}
From this we see \eqref{i5} holds.
By the definition of $G$ and $\widetilde{G}$, \eqref{i6} follows.
Periodicity of the signed curvature angle \eqref{kappa_periodic} follows from \eqref{cl}.
\Cref{thm:PhDKal1} follows from Theorem \ref{thm:PhDKal2} by setting $t=0$.
\end{proof}
\begin{remark}\label{C_and_Gamma}
The parameter
$C$ contributes only to the parallel translation of the curve along the $z$-axis, whereas $\Gamma$  contributes only to the rotation of the curve around the $z$-axis.
\end{remark}

In Theorem \ref{thm:CGamma}, we will make convenient choices of $C$ and $\Gamma$ to derive the deformation equation considered in the previous studies presented in \S\ref{subsection:Segment-length and torsion-angle preserving deformation of discrete space curve}.

\subsection{Motion of vertex positions}\label{subsection:Motion of vertex positions} 
The construction in Theorem \ref{thm:PhDKal2}
involves parameters $C$ and $\Gamma$ that contribute only to the global rigid transformation of the curve.
In general, the velocity of the vertex position $\dot{\gamma_n}$ has a binormal component, and its magnitude is not constant as described in Theorem \ref{thm:dotgamma}.
However, with the right choice of $C$ and $\Gamma$,
curves that satisfy \eqref{deform2} 
are obtained as described in Theorem \ref{thm:CGamma}.

We first derive equations satisfied by the $\tau$ functions given by Theorem \ref{thm:PhDKal2}
to calculate $\dot{\gamma_n}$.
\begin{lem}[{\cite[\S 3]{HIKMO:DLIE}}]\label{prop:bilinears1} 
Using \eqref{i2} and \eqref{i3}, we see that the following equations hold.
\begin{align}
&f_{n}f_{n-1}^*+g_{n}g_{n-1}^*=f_{n}^*f_{n-1}+g_{n}^*g_{n-1}=\beta_{1} F_{n}F_{n}, \label{2i1}\\
&g_{n}f_{n-1}-f_{n}g_{n-1}=G_{n}F_{n}. \label{2i2}
\end{align}
\end{lem}
\begin{prop}\label{prop:PhDKalbi1}
The $\tau$ functions \eqref{t} satisfy the following equations.
\begin{align}
&f_{n}f_{n-1}^*+g_{n}^{*}g_{n-1}=\beta_{1} F_{n}F_{n}+\beta_{2} G_{n}G_{n}^*,\label{57}\\
&g_{n}^{*}g_{n-1}-g_{n}g_{n-1}^*=f_{n}f_{n-1}^{*}-f_{n}^*f_{n-1}=\beta_{2} G_{n}G_{n}^*,\label{58}\\
&\beta_{4} f_{n}^{*}g_{n-1}-\beta_{4}^{*}f_{n-1}^{*}g_{n}=H_{n}F_{n},\label{59}\\
&\beta_{3} F_{n}F_{n}+\beta_{4}G_{n}G_{n}^{*}=f_{n}f_{n-1}^{*},\label{60}\\
&\left(\beta_{1}-\beta_{3}\right) F_{n}F_{n}+\left(\beta_{2}-\beta_{4}\right)G_{n}G_{n}^{*}=g_{n}^{*}g_{n-1},\label{61}\\
&D_{t}H_{n}\cdot F_{n}=\left(\beta_{5}+i\Gamma\beta_{4}\right) f_{n}^{*}g_{n-1}+\left(\beta_{5}^{*}-i\Gamma\beta_{4}^{*}\right) f_{n-1}^{*}g_{n},\label{62}\\
&D_{t}D_{z}F_{n}\cdot F_{n}=\left(\beta_{6}+C\right)F_{n}F_{n}+\beta_{7}G_{n}G_{n}^{*}.\label{63}
\end{align}
where
\begin{align}\label{50-1}
&\beta_{2}=\frac{\vartheta_{1}\left(2r\right)\vartheta_{3}\left(0\right)}{\vartheta_{3}\left(2r\right)\vartheta_{1}\left(iv\right)},\\
&\beta_{3}=\frac{\vartheta_{3}\left(\frac{1}{2}iv+r\right)^{2}\vartheta_{3}\left(-\frac{1}{2}iv+r\right)^{2}}{\vartheta_{3}\left(2r\right)\vartheta_{3}\left(0\right)^{3}},\quad\beta_{4}=\frac{\vartheta_{1}\left(\frac{1}{2}iv+r\right)^{2}\vartheta_{3}\left(-\frac{1}{2}iv+r\right)^{2}}{\vartheta_{1}\left(iv\right)^{2}\vartheta_{3}\left(2r\right)\vartheta_{3}\left(0\right)},\nonumber\\
&\beta_{5}=\frac{-iu_{1}}{\vartheta_{3}(0)\vartheta_{3}(2r)\vartheta_{1}(iv)^2}\left[\frac{\vartheta_{1}^{'}\left(\frac{1}{2}iv+r\right)\vartheta_{3}\left(-\frac{1}{2}iv+r\right)}{\vartheta_{1}\left(-\frac{1}{2}iv+r\right)\vartheta_{3}\left(\frac{1}{2}iv+r\right)}
+\frac{\vartheta_{1}\left(\frac{1}{2}iv+r\right)\vartheta_{3}^{'}\left(-\frac{1}{2}iv+r\right)}{\vartheta_{1}\left(-\frac{1}{2}iv+r\right)\vartheta_{3}\left(\frac{1}{2}iv+r\right)}\right],\nonumber\\
&\beta_{6}=\frac{2i}{\Delta_{3}-\Delta_{1}}\frac{\vartheta_{3}^{''}(0)}{\vartheta_{3}(0)},\quad\beta_{7}=\frac{2i}{\Delta_{3}-\Delta_{1}}\frac{\vartheta_{1}^{'}(0)^{2}}{\vartheta_{1}(iv)^{2}}.\nonumber
\end{align}
\end{prop}
\begin{proof}
Putting
\begin{eqnarray}\label{prbil57-1}
\left\{
\begin{array}{l}
\vspace{2mm}
X=-\frac{1}{2}iv+r,\\
\vspace{2mm}
Y=\frac{1}{2}iv-r,\\
\vspace{2mm}
U=ivn+r+it+\frac{z}{\Delta_{3}-\Delta_{1}},\\
V=iv\left(n-1\right)-r+it+\frac{z}{\Delta_{3}-\Delta_{1}},\\
\end{array}
\right.
\end{eqnarray}
in \eqref{pr4-3-4} yields \eqref{57}.
Also, \eqref{58} follows from \eqref{2i1} and \eqref{57}.
Putting
\begin{eqnarray}\label{prbil59-1}
\left\{
\begin{array}{l}
\vspace{2mm}
X=ivn-r+it+\frac{z}{\Delta_{3}-\Delta_{1}},\\
\vspace{2mm}
Y=-\frac{1}{2}iv-r+h,\\
\vspace{2mm}
U=\frac{1}{2}iv-r+h,\\
V=iv\left(n-1\right)-r+it+\frac{z}{\Delta_{3}-\Delta_{1}},\\
\end{array}
\right.
\end{eqnarray}
in \eqref{pr4-2} yields
\begin{equation}\label{prbil59-2}
\begin{split}
&\vartheta_4\left(\mu_{n}-\frac{1}{2}iv-2r+it+h\right)\vartheta_3\left(0\right)\vartheta_1\left(iv\right)\vartheta_2\left(\mu_{n}-\frac{1}{2}iv+it-h\right)\\
&=\vartheta_4\left(\mu_{n}-r+it\right)\vartheta_3\left(-\frac{1}{2}iv-r+h\right)\vartheta_1\left(\frac{1}{2}iv-r+h\right)\vartheta_2\left(\mu_{n-1}-r+it\right)\\
&-\vartheta_2\left(\mu_{n}-r+it\right)\vartheta_1\left(-\frac{1}{2}iv-r+h\right)\vartheta_3\left(\frac{1}{2}iv-r+h\right)\vartheta_4\left(\mu_{n-1}-r+it\right),
\end{split}
\end{equation}
After differentiating \eqref{prbil59-2} by $h$ and putting $h=0$, we obtain \eqref{62} by rearrangement.
Also, rearranging the equation with $h=0$ in \eqref{prbil59-2} yields \eqref{59}.
Putting
\begin{eqnarray}\label{prbil60-1}
\left\{
\begin{array}{l}
\vspace{2mm}
X=\frac{1}{2}iv+r,\\
\vspace{2mm}
Y=-\frac{1}{2}iv-r,\\
\vspace{2mm}
U=iv\left(n-\frac{1}{2}\right)+it+\frac{z}{\Delta_{3}-\Delta_{1}},\\
V=-iv\left(n-\frac{1}{2}\right)-it-\frac{z}{\Delta_{3}-\Delta_{1}},\\
\end{array}
\right.
\end{eqnarray}
in \eqref{pr4-3-1} yields \eqref{60}.
Also, \eqref{61} follows from \eqref{57} and \eqref{60}.
Putting
\begin{eqnarray}\label{prbil63-1}
\left\{
\begin{array}{l}
\vspace{2mm}
X=iv\left(n-\frac{1}{2}\right)+it+\frac{z}{\Delta_{3}-\Delta_{1}},\\
\vspace{2mm}
Y=iv\left(n-\frac{1}{2}\right)+it+\frac{z}{\Delta_{3}-\Delta_{1}},\\
\vspace{2mm}
U=h,\\
V=h,\\
\end{array}
\right.
\end{eqnarray}
in \eqref{pr4-1} yields
\begin{equation}\label{prbil63-2}
\begin{split}
&\vartheta_2\left(\mu_{n}-\frac{1}{2}iv+it+h\right)\vartheta_2\left(\mu_{n}-\frac{1}{2}iv+it-h\right)\\
&=-\vartheta_4\left(\mu_{n}-\frac{1}{2}iv+it\right)^{2}\frac{\vartheta_1\left(h\right)^{2}}{\vartheta_3\left(0\right)^{2}}+\vartheta_2\left(\mu_{n}-\frac{1}{2}iv+it\right)^{2}\frac{\vartheta_3\left(h\right)^{2}}{\vartheta_3\left(0\right)^{2}}.
\end{split}
\end{equation}
After differentiating \eqref{prbil63-2} twice with $h$ and putting $h=0$, we obtain \eqref{63}. 
\end{proof}
Coefficients $\beta_{2}$ and $\beta_{4}$ in Proposition \ref{prop:PhDKalbi1} satisfy the following relation.
\begin{lem}\label{prop:56}
The following relation holds.
\begin{equation}\label{56}
\beta_{2}=\beta_{4}-\beta_{4}^{*}.
\end{equation}
\end{lem}

\begin{proof}
Putting
\begin{eqnarray}\label{prbil56-1}
\left\{
\begin{array}{l}
\vspace{2mm}
X=-\frac{1}{2}iv+r,\\
\vspace{2mm}
Y=-\frac{1}{2}iv+r,\\
\vspace{2mm}
U=\frac{1}{2}iv+r,\\
V=\frac{1}{2}iv+r,\\
\end{array}
\right.
\end{eqnarray}
in \eqref{pr4-3-6} yields \eqref{56}. 
\end{proof}

Furthermore, to simplify the calculation of $\dot{\gamma_n}$, we define $P$ and $Q$ as follows
\begin{equation}\label{64}
P_{n}=g_{n}^{*}f_{n-1}-f_{n}g_{n-1}^{*},
\end{equation}
\begin{equation}\label{65}
Q_{n}=g_{n}^{*}f_{n-1}+f_{n}g_{n-1}^{*}.
\end{equation}
Then the following expressions hold
\begin{equation}\label{69}
D_{t}H_{n}\cdot F_{n}=\eta_{1}Q_{n}^{*}+i\eta_{2}P_{n}^{*},
\end{equation}
\begin{equation}\label{N2}
  \tan\frac{\kappa_{n}}{2} \widetilde{N}_{n}=T_{n}-\frac{1}{2\beta_{1} F_{n}^{2}} \left(
    \begin{array}{c}
    \vspace{2mm}
        Q_{n}^{*}+Q_{n}\\
        \vspace{2mm}
       \frac{1}{i}\left(Q_{n}^{*}-Q_{n}\right)\\
       -2\left(\beta_{1}-2\beta_{3}\right)F_{n}^{2}+2\left(\beta_{4}+\beta_{4}^{*}\right) |G_{n}|^{2} 
    \end{array}
  \right),
\end{equation}
\begin{equation}\label{B2}
  \tan\frac{\kappa_{n}}{2}\widetilde{B}_{n}= \frac{1}{2\beta_{1} F_{n}^{2}} \left(
    \begin{array}{c}
    \vspace{2mm}
       \frac{1}{i}\left(P_{n}-P_{n}^{*}\right) \\
       \vspace{2mm}
       P_{n}+P_{n}^{*}\\
       -\frac{2}{i}\beta_{2} |G_{n}|^{2} 
    \end{array}
  \right),
\end{equation}
where
\begin{align}\label{50-2}
\eta_{1}=\frac{1}{2}\left(\beta_{5}+\beta_{5}^{*}\right)+\frac{1}{2}i\Gamma\left(\beta_{4}-\beta_{4}^{*}\right),\quad \eta_{2}=\frac{1}{2}i\left(\beta_{5}-\beta_{5}^{*}\right)-\frac{1}{2}\Gamma\left(\beta_{4}+\beta_{4}^{*}\right).
\end{align}
We have used \eqref{i2} and \eqref{i3} for \eqref{B2}.
Then we have the following lemmas.
\begin{lem}\label{prop:PhDKalbi3}
The following relations hold.
\begin{align}
&2\eta_{1}\left(\beta_{4}+\beta_{4}^{*}\right)+2i\beta_{2}\eta_{2}+\beta_{7}=2\beta_{4}\beta_{5}^{*}+2\beta_{4}^{*}\beta_{5}+\beta_{7}=0,\label{82}\\
&4\eta_{1}^{2}\left(\beta_{4}+\beta_{4}^{*}\right)^{2}+8i\eta_{1}\eta_{2}\beta_{2}\left(\beta_{4}+\beta_{4}^{*}\right)-4\eta_{2}^{2}\beta_{2}^{2}-\beta_{7}^{2}=0.\label{con4}
\end{align}
\end{lem}
\begin{proof}
By the definition of $\eta_{1}$ and $\eta_{2}$ and Lemma \ref{prop:56}, we can see that the leftmost and middle part of \eqref{82} match.
Putting
\begin{eqnarray}\label{prbil82-1}
\left\{
\begin{array}{l}
\vspace{2mm}
X=\frac{1}{2}iv+r+h,\\
\vspace{2mm}
Y=-\frac{1}{2}iv+r-h,\\
\vspace{2mm}
U=\frac{1}{2}iv+r-h,\\
V=-\frac{1}{2}iv+r+h,\\
\end{array}
\right.
\end{eqnarray}
in \eqref{pr4-3-6} yields
\begin{equation}\label{prbil82-2}
\begin{split}
&\vartheta_3\left(\frac{1}{2}iv+r+h\right)\vartheta_3\left(-\frac{1}{2}iv+r-h\right)\vartheta_1\left(\frac{1}{2}iv+r-h\right)\vartheta_1\left(-\frac{1}{2}iv+r+h\right)\\
&-\vartheta_1\left(\frac{1}{2}iv+r+h\right)\vartheta_1\left(-\frac{1}{2}iv+r-h\right)\vartheta_3\left(\frac{1}{2}iv+r-h\right)\vartheta_3\left(-\frac{1}{2}iv+r+h\right)\\
&=\vartheta_3\left(2r\right)\vartheta_3\left(0\right)\vartheta_1\left(iv\right)\vartheta_1\left(2h\right).
\end{split}
\end{equation}
After differentiating \eqref{prbil82-2} by $h$ and setting $h=0$, we obtain \eqref{82}. 
From
\begin{equation}
\begin{split}
&4\eta_{1}^{2}\left(\beta_{4}+\beta_{4}^{*}\right)^{2}+8i\eta_{1}\eta_{2}\beta_{2}\left(\beta_{4}+\beta_{4}^{*}\right)-4\eta_{2}^{2}\beta_{2}^{2}-\beta_{7}^{2}\\
=&\left\{2\eta_{1}\left(\beta_{4}+\beta_{4}^{*}\right)+2i\beta_{2}\eta_{2}+\beta_{7}\right\}\left\{2\eta_{1}\left(\beta_{4}+\beta_{4}^{*}\right)+2i\beta_{2}\eta_{2}-\beta_{7}\right\},
\end{split}
\end{equation}
we obtain \eqref{con4}.
\end{proof}
\begin{lem}\label{prop:PhDKalbi2}
The following relations hold.
\begin{equation}\label{66}
P_{n}P_{n}^{*}=F_{n}^{2}|G_{n}|^{2}+\beta_{2}^{2}|G_{n}|^{4},
\end{equation}
\begin{equation}\label{67}
\begin{split}
P_{n}Q_{n}^{*}-P_{n}^{*}Q_{n}=2\left(2\beta_{3}-\beta_{1}\right)\beta_{2} F_{n}^{2}|G_{n}|^{2}+2\beta_{2}\left(\beta_{4}+\beta_{4}^{*}\right)|G_{n}|^{4},
\end{split}
\end{equation}
\begin{equation}\label{68}
\begin{split}
Q_{n}Q_{n}^{*}=&4\beta_{3}\left(\beta_{1}-\beta_{3}\right)F_{n}^{4}+\left\{1+2\left(\beta_{1}-2\beta_{3}\right)\left(\beta_{4}+\beta_{4}^{*}\right)\right\}F_{n}^{2}|G_{n}|^{2}\\
&-\left(\beta_{4}+\beta_{4}^{*}\right)^{2}|G_{n}|^{4}.
\end{split}
\end{equation}
\end{lem}
\begin{proof}
By definition of $P$, we see that
\begin{equation}\label{prf66}
\begin{split}
P_{n}P_{n}^{*}&=\left(g_{n}^{*}f_{n-1}-f_{n}g_{n-1}^{*}\right)\left(g_{n}f_{n-1}^{*}-f_{n}^{*}g_{n-1}\right)\\
&=\left(f_{n}^{*}g_{n-1}^{*}-g_{n}^{*}f_{n-1}^{*}\right)\left(f_{n}g_{n-1}-g_{n}f_{n-1}\right)\\
&+\left(f_{n}f_{n-1}^{*}-f_{n}^{*}f_{n-1}\right)\left(g_{n}^{*}g_{n-1}-g_{n}g_{n-1}^{*}\right),
\end{split}
\end{equation}
holds.
Substituting \eqref{2i2} and \eqref{58} into \eqref{prf66} yields \eqref{66}.
By definition of $P$ and $Q$, we see that
\begin{equation}\label{prf67-1}
\begin{split}
&P_{n}Q_{n}^{*}-P_{n}^{*}Q_{n}\\
&=\left(g_{n}^{*}f_{n-1}-f_{n}g_{n-1}^{*}\right)\left(g_{n}f_{n-1}^{*}+f_{n}^{*}g_{n-1}\right)\\
&-\left(g_{n}f_{n-1}^{*}-f_{n}^{*}g_{n-1}\right)\left(g_{n}^{*}f_{n-1}+f_{n}g_{n-1}^{*}\right)\\
&=2\left(g_{n}^{*}f_{n-1}f_{n}^{*}g_{n-1}-g_{n}f_{n-1}^{*}f_{n}g_{n-1}^{*}\right),\\
\end{split}
\end{equation}
holds.
On the other hand, from \eqref{2i1} and \eqref{58}, the following holds.
\begin{equation}\label{prf67-2}
\begin{split}
&\beta_{1}\beta_{2} F_{n}^{2}|G_{n}|^{2}\\
&=\left(f_{n}f_{n-1}^*+g_{n}g_{n-1}^*\right)\left(g_{n}^{*}g_{n-1}-g_{n}g_{n-1}^*\right)\\
&=\left(f_{n}^*f_{n-1}+g_{n}^*g_{n-1}\right)\left(g_{n}^{*}g_{n-1}-g_{n}g_{n-1}^*\right)\\
&=\left(f_{n}f_{n-1}^*+g_{n}g_{n-1}^*\right)\left(f_{n}f_{n-1}^{*}-f_{n}^*f_{n-1}\right)\\
&=\left(f_{n}^*f_{n-1}+g_{n}^*g_{n-1}\right)\left(f_{n}f_{n-1}^{*}-f_{n}^*f_{n-1}\right).\\
\end{split}
\end{equation}
Using \eqref{prf67-2}, we obtain
\begin{equation}\label{prf67-3}
\begin{split}
&P_{n}Q_{n}^{*}-P_{n}^{*}Q_{n}\\
&=f_{n}f_{n-1}^*f_{n}f_{n-1}^*-f_{n}^*f_{n-1}f_{n}^*f_{n-1}+g_{n}g_{n-1}^*g_{n}g_{n-1}^*-g_{n}^*g_{n-1}g_{n}^*g_{n-1}\\
&=\left(f_{n}f_{n-1}^*\right)^{2}-\left(f_{n}^{*}f_{n-1}\right)^{2}+\left(g_{n}g_{n-1}^*\right)^{2}-\left(g_{n}^{*}g_{n-1}\right)^{2}\\
&=\left(f_{n}f_{n-1}^{*}-f_{n}^*f_{n-1}\right)\left(f_{n}f_{n-1}^{*}+f_{n}^*f_{n-1}\right)\\
&+\left(g_{n}g_{n-1}^{*}-g_{n}^{*}g_{n-1}\right)\left(g_{n}g_{n-1}^{*}+g_{n}^{*}g_{n-1}\right).
\end{split}
\end{equation}
Substituting \eqref{58}, \eqref{60} and \eqref{61} into \eqref{prf67-3} gives
\begin{equation}\label{prf67-4}
\begin{split}
&P_{n}Q_{n}^{*}-P_{n}^{*}Q_{n}\\
&=\beta_{2} |G_{n}|^{2}\left(f_{n}f_{n-1}^{*}+f_{n}^{*}f_{n-1}-g_{n}g_{n-1}^{*}-g_{n}^{*}g_{n-1}\right)\\
&=2\left(2\beta_{3}-\beta_{1}\right)\beta_{2} F_{n}^{2}|G_{n}|^{2}+2\beta_{2}\left(\beta_{4}+\beta_{4}^{*}\right)|G_{n}|^{4},
\end{split}
\end{equation}
thus \eqref{67}.
Also from \eqref{2i2}, \eqref{60} and \eqref{61}, we see that
\begin{equation}\label{prf68}
\begin{split}
&Q_{n}Q_{n}^{*}-F_{n}^{2}|G_{n}|^{2}\\
&=\left(g_{n}^{*}f_{n-1}+f_{n}g_{n-1}^{*}\right)\left(g_{n}f_{n-1}^{*}+f_{n}^{*}g_{n-1}\right)\\
&-\left(g_{n}f_{n-1}-f_{n}g_{n-1}\right)\left(g_{n}^{*}f_{n-1}^{*}-f_{n}^{*}g_{n-1}^{*}\right)\\
&=\left(f_{n}f_{n-1}^{*}+f_{n}^*f_{n-1}\right)\left(g_{n}g_{n-1}^{*}+g_{n}^{*}g_{n-1}\right)\\
&=\left(2\beta_{3} F_{n}^{2}+\left(\beta_{4}+\beta_{4}^{*}\right)|G_{n}|^{2}\right)\left(2\left(\beta_{1}-\beta_{3}\right)F_{n}^{2}-\left(\beta_{4}+\beta_{4}^{*}\right)|G_{n}|^{2}\right)\\
&=4\beta_{3}\left(\beta_{1}-\beta_{3}\right)F_{n}^{4}+2\left(\beta_{1}-2\beta_{3}\right)\left(\beta_{4}+\beta_{4}^{*}\right)F_{n}^{2}|G_{n}|^{2}-\left(\beta_{4}+\beta_{4}^{*}\right)^{2}|G_{n}|^{4},
\end{split}
\end{equation}
holds. Thus we get \eqref{68}. 
\end{proof}
By using Proposition \ref{prop:PhDKalbi1}, Lemma \ref{prop:56} Lemma \ref{prop:PhDKalbi3} and Lemma \ref{prop:PhDKalbi2}, we obtain the evolution equation of the vertex positions of the curve.
\begin{theorem}\label{thm:dotgamma}
The curve $\gamma$ constructed in Theorem \ref{thm:PhDKal2} satisfies the following deformation equation.
\begin{equation}\label{gdot501}
\begin{split}
\dot{\gamma}_{n}(t)=&2\beta_{1}\left[\eta_{1}T_{n}-\eta_{1}\tan\frac{\kappa_{n}}{2}\widetilde{N}_{n}+\eta_{2}\tan\frac{\kappa_{n}}{2}\widetilde{B}_{n}\right]\\
&+\left(
    \begin{array}{c}
0\\
0\\
2\eta_{1}\left(\beta_{1}-2\beta_{3}\right)-\beta_{6}-C
    \end{array}
  \right).
  \end{split}
\end{equation}
\end{theorem}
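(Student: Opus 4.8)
\textbf{Proof proposal for Theorem \ref{thm:dotgamma}.}

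The plan is to differentiate the explicit formula for $\gamma_n$ given in \eqref{T} with respect to $t$ and then rewrite the resulting expression in terms of the Frenet frame $[T_n,\widetilde{N}_n,\widetilde{B}_n]$ using the bilinear identities of Proposition \ref{prop:PhDKalbi1}. Recall that the first two components of $\gamma_n$ are $(H_n+H_n^*)/F_n$ and $(H_n-H_n^*)/(iF_n)$, evaluated at $z=0$, and the third component is $n-2\,\partial_z\log F_n|_{z=0}$. So the horizontal part of $\dot\gamma_n$ is governed by $D_t H_n\cdot F_n$ (together with $D_tF_n\cdot F_n$, which one must also compute or show vanishes at $z=0$ up to the relevant order), while the vertical component is governed by $-2\,\partial_z\partial_t\log F_n|_{z=0} = -2\,(D_tD_zF_n\cdot F_n)/F_n^2|_{z=0}$. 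The identities \eqref{62} and \eqref{63} express exactly these two Hirota derivatives in terms of $f_n^*g_{n-1}$, $f_{n-1}^*g_n$, $F_nF_n$ and $G_nG_n^*$.

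The key steps, in order, are as follows. First, from \eqref{62} obtain
$\dot H_n/F_n - H_n\dot F_n/F_n^2 = \bigl[(\delta_3+i\Gamma\delta_2)f_n^*g_{n-1} + (\delta_3^*-i\Gamma\delta_2^*)f_{n-1}^*g_n\bigr]/F_n^2$,
evaluated at $z=0$; combined with the complex conjugate relation this gives the first two components of $\dot\gamma_n$ as real-linear combinations of $f_n^*g_{n-1}/F_n^2$ and its conjugate, with coefficients built from $\delta_2,\delta_3,\Gamma$ — precisely the combinations $\beta_1,\beta_2$ defined in \eqref{50}. Second, I would express $f_n^*g_{n-1}/(F_nF_n)$ and $f_{n-1}^*g_n/(F_nF_n)$ at $z=0$ in terms of the frame. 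Using \eqref{60} and \eqref{61} one writes $f_nf_{n-1}^* = \delta_1 F_nF_n + \delta_2 G_nG_n^*$ and $g_n^*g_{n-1}=(\beta-\delta_1)F_nF_n+(\omega-\delta_2)G_nG_n^*$; comparing with the explicit formulas \eqref{T} for $T_n$ and $\widetilde{B}_n$ — whose components are exactly ratios of products like $f_n^*g_n$, $f_n^2-g_n^2$, etc. — and recalling $\varrho_n = \tfrac12 G_n/\widetilde G_n$ together with $\tan(\kappa_n/2)=\widetilde G_n/(\beta F_n)$ from Proposition \ref{prop:curve and frame explicit}, one resolves the horizontal contribution into the $T_n$, $\widetilde N_n$, $\widetilde B_n$ directions. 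The quantities $F_nF_n$ and $G_nG_n^*$ at $z=0$ relate through $\beta$ and the curvature: $G_nG_n^* = \widetilde G_n^2$ by \eqref{i6}, and $\widetilde G_n^2/(\beta F_n)^2 = \tan^2(\kappa_n/2)$. Third, for the vertical component, \eqref{63} gives $\partial_z\partial_t\log F_n|_{z=0}$ directly: $D_tD_zF_n\cdot F_n = (\epsilon_1+C)F_nF_n + \epsilon_2 G_nG_n^*$, so the third component of $\dot\gamma_n$ is $-2\bigl[(\epsilon_1+C) + \epsilon_2\tan^2(\kappa_n/2)\bigr]$ up to the $z=0$ evaluation; this must be reconciled with the $\widetilde B_n$ and $T_n$ contributions already present (which also have nonzero third components) to collect everything into the stated form, where the purely vertical leftover is the constant vector $(0,0,\,2\beta_1(\beta-2\delta_1)-\epsilon_1-C)$. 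Finally, I would verify the algebraic identities among $\delta_1,\delta_2,\delta_3,\epsilon_1,\epsilon_2,\omega,\beta$ needed to make the coefficients collapse to exactly $2\beta\beta_1$, $-2\beta\beta_1$, $2\beta\beta_2$ and $2\beta_1(\beta-2\delta_1)-\epsilon_1-C$; these are theta-function identities of the type \eqref{pr4-1} and are relegated to the appendix.

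The main obstacle is the third, bookkeeping-heavy step: the frame vectors $T_n$ and $\widetilde B_n$ each contribute to the $Z$-component, so one cannot read off the vertical part of $\dot\gamma_n$ from \eqref{63} in isolation — it is genuinely a matter of showing that the $Z$-components of the $T_n$ and $\widetilde B_n$ terms on the right-hand side of \eqref{gdot501}, combined with $-2\partial_z\partial_t\log F_n$, differ by exactly the stated constant. This requires knowing the $Z$-components of $T_n$ and $\widetilde B_n$ explicitly (namely $(f_nf_n^*-g_ng_n^*)/(f_nf_n^*+g_ng_n^*)$ and $\tfrac{2}{i}(f_n^*g_n^*\varrho_n - f_ng_n\varrho_n^*)/(f_nf_n^*+g_ng_n^*)$ at $z=0$) and feeding in \eqref{i1}, \eqref{60}, \eqref{61} to rewrite them, then matching against $\epsilon_1,\epsilon_2$. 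A secondary subtlety is that several of the bilinear identities \eqref{57}--\eqref{63} hold as identities in $z$, but $\gamma_n$, $T_n$, $\widetilde B_n$ are all evaluated at $z=0$, so one must be careful that differentiating in $t$ commutes with setting $z=0$ — which it does, since $t$ and $z$ are independent variables — and that the $z$-derivatives appearing (from $D_z$ and $\partial_z\log F_n$) are taken before the substitution $z=0$.
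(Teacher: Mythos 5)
Your proposal is correct and follows essentially the same route as the paper's proof in Appendix \ref{section:AppendixD}: differentiate \eqref{T} to express $\dot\gamma_n$ via $D_tH_n\cdot F_n$ and $D_tD_zF_n\cdot F_n$, substitute \eqref{62} and \eqref{63}, rewrite $\tan\frac{\kappa_n}{2}\widetilde N_n$ and $\tan\frac{\kappa_n}{2}\widetilde B_n$ in terms of the bilinear products (the paper packages these as $p_n=g_n^*f_{n-1}-f_ng_{n-1}^*$ and $q_n=g_n^*f_{n-1}+f_ng_{n-1}^*$ in \eqref{N2}--\eqref{B2}), and reconcile the $Z$-components via the theta identity $2\beta_1(\delta_2+\delta_2^*)+2i\omega\beta_2+\epsilon_2=0$ of Lemma \ref{prop:PhDKalbi3}. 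The "main obstacle" you single out is exactly the step the paper resolves with that lemma, so the outline is complete.
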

\begin{proof}
    By differentiating $\gamma$ with respect to $t$, we get
\begin{equation}\label{gdot201}
\begin{split}
\dot{\gamma}_{n}(t)&=\frac{1}{F_{n}^{2}}\left(
    \begin{array}{c}
    \vspace{2mm}
\displaystyle D_{t}H_{n}\cdot F_{n}+D_{t}H_{n}^{*}\cdot F_{n}\\
\vspace{2mm}
\displaystyle\frac{1}{i}\left(D_{t}H_{n}\cdot F_{n}-D_{t}H_{n}^{*}\cdot F_{n}\right)\\
-D_{t}D_{z}F_{n}\cdot F_{n}
    \end{array}
  \right)\\
  &=\frac{1}{F_{n}^{2}}\left(
    \begin{array}{c}
    \vspace{2mm}
\displaystyle \eta_{1}Q_{n}^{*}+i\eta_{2}P_{n}^{*}+\eta_{1}Q_{n}-i\eta_{2}P_{n}\\
\vspace{2mm}
\displaystyle\frac{1}{i}\left(\eta_{1}Q_{n}^{*}+i\eta_{2}P_{n}^{*}-\eta_{1}Q_{n}+i\eta_{2}P_{n}\right)\\
-\left(\beta_{6}+C\right)F_{n}^{2}-\beta_{7}|G_{n}|^{2}
    \end{array}
  \right).
   \end{split}
\end{equation}
On the other hand, from \eqref{N2} and \eqref{B2}, we see
\begin{equation}\label{gdot301}
\begin{split}
&\eta_{1}\left(T_{n}-\tan\frac{\kappa_{n}}{2}\widetilde{N}_{n}\right)+\eta_{2}\tan\frac{\kappa_{n}}{2}\widetilde{B}_{n}\\
&=\frac{1}{2\beta_{1} F_{n}^{2}}\left(
    \begin{array}{c}
    \vspace{2mm}
\displaystyle \eta_{1}Q_{n}^{*}+i\eta_{2}P_{n}^{*}+\eta_{1}Q_{n}-i\eta_{2}P_{n}\\
\vspace{2mm}
\displaystyle\frac{1}{i}\left(\eta_{1}Q_{n}^{*}+i\eta_{2}P_{n}^{*}-\eta_{1}Q_{n}+i\eta_{2}P_{n}\right)\\
-2\eta_{1}\left(\beta_{1}-2\beta_{3}\right)F_{n}^{2}+2\eta_{1}\left(\beta_{4}+\beta_{4}^{*}\right)|G_{n}|^{2}+2i\eta_{2}\beta_{2}|G_{n}|^{2}
    \end{array}
  \right)
  \end{split}
\end{equation}
holds. From \eqref{82}, \eqref{gdot301} can be rewritten as follows
\begin{equation}\label{gdot401}
\begin{split}
&\eta_{1}\left(T_{n}-\tan\frac{\kappa_{n}}{2}\widetilde{N}_{n}\right)+\eta_{2}\tan\frac{\kappa_{n}}{2}\widetilde{B}_{n}\\
&=\frac{1}{2\beta_{1} F_{n}^{2}}\left(
    \begin{array}{c}
    \vspace{2mm}
\displaystyle \eta_{1}Q_{n}^{*}+i\eta_{2}P_{n}^{*}+\eta_{1}Q_{n}-i\eta_{2}P_{n}\\
\vspace{2mm}
\displaystyle\frac{1}{i}\left(\eta_{1}Q_{n}^{*}+i\eta_{2}P_{n}^{*}-\eta_{1}Q_{n}+i\eta_{2}P_{n}\right)\\
-2\eta_{1}\left(\beta_{1}-2\beta_{3}\right)F_{n}^{2}-\beta_{7}|G_{n}|^{2}
    \end{array}
  \right).
  \end{split}
\end{equation}
Thus \eqref{gdot501} is obtained. 
\end{proof}
This is a deformation equation similar to the one presented in \cite{IKMO:DmKdVsG}. 
Starting from \eqref{gdot501}, the compatibility condition of the Frenet frame yields the \emph{semi-discrete mKdV equation}.
\begin{prop}\label{prop:dotgamma}
When the curve with a unit segment length and a constant torsion angle deforms according to the deformation equation \eqref{gdot501}, the following holds.
\begin{enumerate}
    \item The segment length is preserved.
    \item $\kappa$ satisfies the following semi-discrete mKdV equation: \begin{equation}
\dot{\kappa}_{n}=2\beta_{1}\left(\eta_{1}\cos\lambda+\eta_{2}\sin\lambda\right)\left(\tan\frac{\kappa_{n+1}}{2}-\tan\frac{\kappa_{n-1}}{2}\right).\label{anothermkdv7}
\end{equation}
\end{enumerate}
Furthermore, assume that
\begin{equation}\label{eq:kappa_conserved_quantity}
\tan\frac{\kappa_{n+1}}{2}\tan\frac{\kappa_{n-1}}{2}
-\tan^2\frac{\kappa_n}{2}
+\tan\frac{\kappa_{n+1}}{2}\tan^2\frac{\kappa_n}{2}\tan\frac{\kappa_{n-1}}{2}
\end{equation}
is independent of $n$ (see \Cref{prop:kappa_difference}).
Then,
\begin{enumerate}
\setcounter{enumi}{2}
    \item The torsion angle is preserved.
\end{enumerate}
\end{prop}
\begin{proof}
The proofs of (1) and (2) follow from the same calculation as in Section 5 of \cite{IKMO:DmKdVsG}.
We prove (3). From the deformation equation \eqref{gdot501} and the discrete Frenet-Serret formula \eqref{discrete Frenet-Serret formula}, we see that
\begin{equation}\label{0520-1}
\begin{split}
&-\dot{\lambda}\sin\lambda=\left\langle\dot{\widetilde{B}}_{n}, \widetilde{B}_{n+1}\right\rangle+\left\langle\widetilde{B}_{n}, \dot{\widetilde{B}}_{n+1}\right\rangle\\
&=-\sin\lambda \left(2\beta_{1}\eta_{1}\sin\lambda-2\beta_{1}\eta_{2}\cos\lambda\right)\times\left(\frac{1}{\sin\kappa_{n+1}}\tan\frac{\kappa_{n+2}}{2}\right.\\
&\left.+\frac{1}{\tan\kappa_{n+1}}\tan\frac{\kappa_{n}}{2}-\frac{1}{\tan\kappa_{n}}\tan\frac{\kappa_{n+1}}{2}-\frac{1}{\sin\kappa_{n}}\tan\frac{\kappa_{n-1}}{2}\right),
  \end{split}
\end{equation}
holds.
Put $a_n=\tan\frac{\kappa_n}{2}$.
Using
\[
\sin\kappa_n=\frac{2a_n}{1+a_n^2},
\qquad
\tan\kappa_n=\frac{2a_n}{1-a_n^2},
\]
the bracket in \eqref{0520-1} becomes
\[
\frac12\Bigl(
a_{n+2}a_n-a_{n+1}^2+a_{n+2}a_{n+1}^2a_n
-
a_{n+1}a_{n-1}+a_n^2-a_{n+1}a_n^2a_{n-1}
\Bigr).
\]
Hence it is equal to one half of the difference between
\eqref{eq:kappa_conserved_quantity} at indices $n+1$ and $n$.
Therefore it vanishes if the quantity in \eqref{eq:kappa_conserved_quantity} is independent of $n$, implying $\dot{\lambda}=0$.

\end{proof}
\begin{remark}
Proposition \ref{prop:dotgamma} \textnormal{(3)} does not hold in general without the assumption on \eqref{eq:kappa_conserved_quantity}.
Indeed, there is a motion of a Kaleidocycle 
satisfying \eqref{gdot501}
such that 
\eqref{eq:kappa_conserved_quantity} depends both on $n$
and $t$.
\end{remark}
By adjusting the global motion of $\gamma$ by $C$ and $\Gamma$, the deformation equation of $\gamma$ can be made to satisfy the equation considered in the previous studies presented in \S\ref{subsection:Segment-length and torsion-angle preserving deformation of discrete space curve}.
\begin{theorem}\label{thm:CGamma}
Set $C$ and $\Gamma$ to be
\begin{equation}\label{CGamma}
\begin{split}
\Gamma_{+}\left(r,y\right)=&\frac{\vartheta_{1}^{'}(2r)\vartheta_{3}(0)+ \vartheta_{3}\left(2r\right)\vartheta_{1}^{'}(0)}{\vartheta_{1}\left(2r\right)\vartheta_{3}\left(0\right)},\\
C_{+}\left(v,r,y\right)=&-i\Gamma_{+}\left(\beta_{2}\left(2\beta_{3}-\beta_{1}\right)-\frac{\beta_{4}+\beta_{4}^{*}}{\beta_{2}}\right)\\
&-\beta_{6}-\left(\beta_{5}+\beta_{5}^{*}\right)\left(2\beta_{3}-\beta_{1}\right)+\frac{\beta_{5}-\beta_{5}^{*}}{\beta_{2}},\\
\rm or\\
\Gamma_{-}\left(r,y\right)=&\frac{\vartheta_{1}^{'}(2r)\vartheta_{3}(0)- \vartheta_{3}\left(2r\right)\vartheta_{1}^{'}(0)}{\vartheta_{1}\left(2r\right)\vartheta_{3}\left(0\right)},\\
C_{-}\left(v,r,y\right)=&-i\Gamma_{-}\left(\beta_{2}\left(2\beta_{3}-\beta_{1}\right)-\frac{\beta_{4}+\beta_{4}^{*}}{\beta_{2}}\right)\\
&-\beta_{6}-\left(\beta_{5}+\beta_{5}^{*}\right)\left(2\beta_{3}-\beta_{1}\right)+\frac{\beta_{5}-\beta_{5}^{*}}{\beta_{2}}.
\end{split}
\end{equation}
Then, the curve constructed in Theorem \ref{thm:PhDKal2}
deforms according to the equation \eqref{deform2}.
The magnitude of $\dot{\gamma}_{n}$ is a constant and is given by
\begin{equation}\label{rhopp}
\rho^{2}=\rho_+^{2}:=\left(i\frac{\vartheta_{1}^{'}(0)}{\vartheta_{1}(iv)}\frac{u_{1}}{\vartheta_{3}(r)^{2}\vartheta_{1}(r)^{2}}\right)^{2},
\end{equation}
when $(C,\Gamma)=(C_+,\Gamma_+)$
and
\begin{equation}\label{rhopm}
\rho^{2}=\rho_-^{2}:=\left(-i\frac{\vartheta_{1}^{'}(0)}{\vartheta_{1}(iv)}\frac{u_{1}}{\vartheta_{4}(r)^{2}\vartheta_{2}(r)^{2}}\right)^{2},
\end{equation}
when $(C,\Gamma)=(C_-,\Gamma_-)$.
\end{theorem}
\begin{proof}
    We determine the necessary and sufficient condition for $C$ and $\Gamma$ to satisfy
\begin{equation}\label{mod3}
\left\langle\dot{\gamma}_{n}, \widetilde{B}_{n}\right\rangle=0,
\end{equation}
\begin{equation}\label{mod2}
\left\langle\dot{\gamma}_{n}, \dot{\gamma}_{n}\right\rangle=\rho^{2},
\end{equation}
where $\rho\in\mathbb{R}$ is a constant.
From \eqref{B2} and \eqref{gdot201} we see
\begin{equation}\label{bgdot2}
\begin{split}
2iF_{n}^{3}\widetilde{G}_{n}\left\langle\dot{\gamma}_{n}, \widetilde{B}_{n}\right\rangle=&\left(P_{n}-P_{n}^{*}\right)\left(\eta_{1}Q_{n}^{*}+i\eta_{2}P_{n}^{*}+\eta_{1}Q_{n}-i\eta_{2}P_{n}\right)\\
&+\left(P_{n}+P_{n}^{*}\right)\left(\eta_{1}Q_{n}^{*}+i\eta_{2}P_{n}^{*}-\eta_{1}Q_{n}+i\eta_{2}P_{n}\right)\\
&+2\beta_{2} |G_{n}|^{2}\left\{\left(\beta_{6}+C\right)F_{n}^{2}+\beta_{7}|G_{n}|^{2}\right\}\\
=&2F_{n}^{2}|G_{n}|^{2}\left\{2\eta_{1}\left(2\beta_{3}-\beta_{1}\right)\beta_{2}+2i\eta_{2}+\beta_{2}\left(\beta_{6}+C\right)\right\}\\
&+2\beta_{2}|G_{n}|^{4}\left\{2\eta_{1}\left(\beta_{4}+\beta_{4}^{*}\right)+2i\beta_{2}\eta_{2}+\beta_{7}\right\}\\
=&2F_{n}^{2}|G_{n}|^{2}\left\{2\eta_{1}\left(2\beta_{3}-\beta_{1}\right)\beta_{2}+2i\eta_{2}+\beta_{2}\left(\beta_{6}+C\right)\right\},
\end{split}
\end{equation}
where we have used \eqref{82}.
Therefore, we obtain the following as a necessary and sufficient condition for \eqref{mod3} to hold.
\begin{equation}\label{con1}
2\eta_{1}\left(2\beta_{3}-\beta_{1}\right)\beta_{2}+2i\eta_{2}+\beta_{2}\left(\beta_{6}+C\right)=0,\\
\end{equation}
Using \eqref{50-2}, we rewrite this in terms of $C$ and $\Gamma$ to obtain
\begin{equation}\label{con11}
i\Gamma\left(\beta_{2}\left(2\beta_{3}-\beta_{1}\right)-\frac{\beta_{4}+\beta_{4}^{*}}{\beta_{2}}\right)+\beta_{6}+C+\left(\beta_{5}+\beta_{5}^{*}\right)\left(2\beta_{3}-\beta_{1}\right)-\frac{\beta_{5}-\beta_{5}^{*}}{\beta_{2}}=0.
\end{equation}
Next is the calculation of \eqref{mod2}. The following holds
\begin{equation}\label{eqn:rhopm}
\begin{split}
&F_{n}^{4}\langle\dot{\gamma}_{n}, \dot{\gamma}_{n}\rangle\\
=&\left(\eta_{1}Q_{n}^{*}+i\eta_{2}P_{n}^{*}+\eta_{1}Q_{n}-i\eta_{2}P_{n}\right)^{2}\\
&-\left(\eta_{1}Q_{n}^{*}+i\eta_{2}P_{n}^{*}-\eta_{1}Q_{n}+i\eta_{2}P_{n}\right)^{2}\\
&+\left\{\left(\beta_{6}+C\right)F_{n}^{2}+\beta_{7}|G_{n}|^{2}\right\}^{2}\\
=&F_{n}^{4}\left\{16\eta_{1}^{2}\beta_{3}\left(\beta_{1}-\beta_{3}\right)+\left(\beta_{6}+C\right)^{2}\right\}\\
&+F_{n}^{2}|G_{n}|^{2}\left\{4\eta_{1}^{2}\left(1+2\left(\beta_{1}-2\beta_{3}\right)\left(\beta_{4}+\beta_{4}^{*}\right)\right)-8i\eta_{1}\eta_{2}\beta_{2}\left(2\beta_{3}-\beta_{1}\right)+4\eta_{2}^{2}+2\beta_{7}\left(\beta_{6}+C\right)\right\}\\
&+|G_{n}|^{4}\left(-4\eta_{1}^{2}\left(\beta_{4}+\beta_{4}^{*}\right)^{2}-8i\eta_{1}\eta_{2}\beta_{2}\left(\beta_{4}+\beta_{4}^{*}\right)+4\eta_{2}^{2}\beta_{2}^{2}+\beta_{7}^{2}\right)\\
=&F_{n}^{4}\left\{16\eta_{1}^{2}\beta_{3}\left(\beta_{1}-\beta_{3}\right)+\left(\beta_{6}+C\right)^{2}\right\}\\
&+F_{n}^{2}|G_{n}|^{2}\left\{4\eta_{1}^{2}\left(1+2\left(\beta_{1}-2\beta_{3}\right)\left(\beta_{4}+\beta_{4}^{*}\right)\right)-8i\eta_{1}\eta_{2}\beta_{2}\left(2\beta_{3}-\beta_{1}\right)+4\eta_{2}^{2}+2\beta_{7}\left(\beta_{6}+C\right)\right\},
\end{split}
\end{equation}
where we have used \eqref{con4}.
Therefore, we obtain the following as a necessary and sufficient condition for \eqref{mod2} to hold.
\begin{equation}\label{con3}
4\eta_{1}^{2}\left(1+2\left(\beta_{1}-2\beta_{3}\right)\left(\beta_{4}+\beta_{4}^{*}\right)\right)-8i\eta_{1}\eta_{2}\beta_{2}\left(2\beta_{3}-\beta_{1}\right)+4\eta_{2}^{2}+2\beta_{7}\left(\beta_{6}+C\right)=0.
\end{equation}
By using \eqref{82} and \eqref{con11}, \eqref{con3} can be written in terms of $\Gamma$ as follows.
\begin{equation}\label{con31}
i\beta_{4}\beta_{4}^{*}\left(\beta_{4}-\beta_{4}^{*}\right)\Gamma^{2}+2\beta_{4}\beta_{4}^{*}\left(\beta_{5}+\beta_{5}^{*}\right)\Gamma+i\left(\beta_{4}{\beta_{5}^{*}}^{2}-\beta_{4}^{*}\beta_{5}^{2}\right)=0.
\end{equation}
This is a quadratic equation in $\Gamma$ and has the following solutions.
\begin{equation}\label{solGam}
\Gamma_{\pm}=\frac{-\beta_{4}\beta_{4}^{*}\left(\beta_{5}+\beta_{5}^{*}\right)\pm\sqrt{\beta_{4}\beta_{4}^{*}\left(\beta_{4}\beta_{5}^{*}+\beta_{4}^{*}\beta_{5}\right)^{2}}}{i\beta_{4}\beta_{4}^{*}\left(\beta_{4}-\beta_{4}^{*}\right)}.
\end{equation}
Since
\begin{equation}
\begin{split}
&\beta_{4}\beta_{4}^{*}\left(\beta_{4}\beta_{5}^{*}+\beta_{4}^{*}\beta_{5}\right)^{2}=\left(\beta_{4}\beta_{4}^{*}\right)^{2}\left(i\frac{\vartheta_{1}^{'}(0)}{\vartheta_{1}(iv)}\right)^{2},\\
&\beta_{4}-\beta_{4}^{*}=\frac{\vartheta_{1}(2r)\vartheta_{3}(0)}{\vartheta_{3}(2r)\vartheta_{1}(iv)},\\
&\beta_{5}+\beta_{5}^{*}=-i\frac{\vartheta_{1}^{'}(2r)\vartheta_{3}(0)}{\vartheta_{3}(2r)\vartheta_{1}(iv)},
\end{split}
\end{equation}
holds, $\Gamma_{\pm}$ can be rewritten as
\begin{equation}\label{solGam2}
\Gamma_{\pm}=\frac{\vartheta_{1}^{'}(2r)\vartheta_{3}(0)\pm\vartheta_{3}\left(2r\right)\vartheta_{1}^{'}(0)}{\vartheta_{1}\left(2r\right)\vartheta_{3}\left(0\right)}.
\end{equation}
Substituting this into \eqref{con11} yields the expression for $C_{\pm}$ as well.
From the above calculations, we see that when $C_{\pm}$ and $\Gamma_{\pm}$ are chosen as in \eqref{CGamma}, the following holds
\begin{equation}
    \langle\dot{\gamma}_{n}, \dot{\gamma}_{n}\rangle=16\eta_{1}^{2}\beta_{3}\left(\beta_{1}-\beta_{3}\right)+\left(\beta_{6}+C\right)^{2}.
\end{equation}
We calculate this to compute $\rho^2$.
From \eqref{50-2}, we see that when $(C,\Gamma)=(C_+,\Gamma_+)$, the following holds
\begin{equation}\label{prf:prop12-1}
16\eta_{1}^{2}\beta_{3}\left(\beta_{1}-\beta_{3}\right)=4\left(i\frac{\vartheta_{1}^{'}(0)}{\vartheta_{1}(iv)}\right)^{2}\frac{u_{1}^{2}}{\vartheta_{3}(2r)^{2}\vartheta_{3}(0)^{6}}.
\end{equation}
From \eqref{con1} and \eqref{82}, we see that when $(C,\Gamma)=(C_+,\Gamma_+)$, the following holds
\begin{equation}\label{prf:prop12-2}
\left(\beta_{6}+C\right)^{2}=\left(i\frac{\vartheta_{1}^{'}(0)}{\vartheta_{1}(iv)}\right)^{2}\frac{u_{1}^{2}}{\vartheta_{3}(2r)^{2}\vartheta_{3}(0)^{6}}\frac{\left(\vartheta_{3}(r)^{4}-\vartheta_{1}(r)^{4}\right)^{2}}{\vartheta_{3}(r)^{4}\vartheta_{1}(r)^{4}}.
\end{equation}
Adding the above equations yields
\begin{equation}\label{prf:prop12-3}
\rho^{2}=16\eta_{1}^{2}\beta_{3}\left(\beta_{1}-\beta_{3}\right)+\left(\beta_{6}+C\right)^{2}=\left(i\frac{\vartheta_{1}^{'}(0)}{\vartheta_{1}(iv)}\right)^{2}\frac{u_{1}^{2}}{\vartheta_{3}(r)^{4}\vartheta_{1}(r)^{4}}=\rho_{+}^{2},
\end{equation}
where we have used
\begin{equation}
    \vartheta_{3}(r)^{4}+\vartheta_{1}(r)^{4}=\vartheta_{3}(2r)\vartheta_{3}(0)^{3}.
\end{equation}
In the same way, we see that when $(C,\Gamma)=(C_-,\Gamma_-)$,
\begin{equation}\label{prf:prop12-6}
\rho^{2}=16\eta_{1}^{2}\left(\beta_{1}-\beta_{3}\right)\beta_{3}+\left(\beta_{6}+C\right)^{2}=\left(i\frac{\vartheta_{1}^{'}(0)}{\vartheta_{1}(iv)}\right)^{2}\frac{u_{1}^{2}}{\vartheta_{4}(r)^{4}\vartheta_{2}(r)^{4}}=\rho_{-}^{2},
\end{equation}
holds.
\end{proof}
\begin{remark}\label{rem:k-surface}
A discrete analogue of 
the constant negative curvature surface,
called the discrete K-surface, is introduced in \cite{bobenko1996discrete}.
It is discussed that certain discrete-time deformation of discrete curves with a constant torsion angle generates discrete K-surfaces in \cite{IKMO:DmKdVsG}.
Deformation equation \eqref{deform2} can be seen as a continuous-time analogue of the discrete-time deformation.
Theorem \ref{thm:CGamma} 
shows that specialising $(C,\Gamma)$ to $(C_-,\Gamma_-)$ or $(C_+,\Gamma_+)$, 
the curve constructed in Theorem \ref{thm:PhDKal2} deforms according to \eqref{deform2}.
Therefore, its sweeping surface forms a semi-discrete analogue of the discrete K-surface (see figures in \S \ref{sec:numerical}).
\end{remark}

\subsection{Evolution of curvature}\label{subsection:Evolution of curvature}
The curve constructed in Theorem \ref{thm:PhDKal2} reveals an interesting recurrence equation for $\kappa$. To obtain the recurrence equation, we use the following Lemma.
\begin{lem}\label{prop:kappa_difference1}
    The $\tau$ functions \eqref{t} satisfy the following equations.
\begin{align}
&\widetilde{G}_{n}^{2}+\beta_{1}^2 F_{n}^{2}=F_{n+1}F_{n-1},\label{2i3}\\
&\beta_{1}^2\widetilde{G}_{n}^{2}-\frac{\vartheta_1\left(iv\right)^4}{\vartheta_3\left(iv\right)^4}\beta_{1}^4 F_{n}^{2}=\widetilde{G}_{n+1}\widetilde{G}_{n-1}.\label{eqn:kappa_difference1}
\end{align}
\end{lem}

\begin{proof}
Using \eqref{i1}-\eqref{i3}, we see that \eqref{2i3} hold. Putting
\begin{eqnarray}\label{pranothermkdv4}
\left\{
\begin{array}{l}
\vspace{2mm}
X=0,\\
\vspace{2mm}
Y=0,\\
\vspace{2mm}
U=iv\left(n+\frac{1}{2}\right)+it+\frac{z}{\Delta_{3}-\Delta_{1}},\\
V=-iv\left(n-\frac{3}{2}\right)-it-\frac{z}{\Delta_{3}-\Delta_{1}},\\
\end{array}
\right.
\end{eqnarray}
in \eqref{pr4-3-7}, we obtain \eqref{eqn:kappa_difference1}. 
\end{proof}
\begin{prop}\label{prop:kappa_difference}
The signed curvature angle of the curve obtained in Theorem \ref{thm:PhDKal2}
satisfies the following recurrence equation for any $t$ and $n$:
\begin{equation}
\tan\frac{\kappa_{n+1}}{2}\tan\frac{\kappa_{n-1}}{2}-\tan^2\frac{\kappa_{n}}{2}+\tan\frac{\kappa_{n+1}}{2}\tan^2\frac{\kappa_{n}}{2}\tan\frac{\kappa_{n-1}}{2}=-\frac{\vartheta_1\left(iv\right)^4}{\vartheta_3\left(iv\right)^4}.\label{eqn:kappa_difference}
\end{equation}
The right-hand side of this equation depends neither on $n$ nor on $t$. Therefore, 
the quantity on the left-hand side can be considered a conserved quantity of the signed curvature angle.
\end{prop}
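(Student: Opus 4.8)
The plan is to substitute the explicit curvature formula \eqref{kappa2} into the left-hand side of \eqref{eqn:kappa_difference} and reduce the claim to a classical relation among Jacobi theta functions. Write $a=\vartheta_1(iv)/\vartheta_3(iv)$ and $\xi_n=iv\left(n-\tfrac12\right)+it$, so that \eqref{kappa2} reads $\tan\frac{\kappa_n}{2}=-ia\,\vartheta_4(\xi_n)/\vartheta_2(\xi_n)$ and the right-hand side of \eqref{eqn:kappa_difference} equals $-a^4$. The decisive structural observation is that $\xi_{n-1},\xi_n,\xi_{n+1}$ form an arithmetic progression with common difference $iv$, so, putting $x=\xi_n$ and $c=iv$, one has $\xi_{n\pm1}=x\pm c$. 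Under the standing assumptions ($v\notin y\mathbb{Z}$, $t\in\R$) the theta values $\vartheta_2(\xi_n),\vartheta_2(\xi_{n\pm1}),\vartheta_1(iv),\vartheta_3(iv)$ are all non-zero, so after substitution and clearing the denominator $\vartheta_2(x+c)\vartheta_2(x-c)\vartheta_2(x)^2$ the recurrence \eqref{eqn:kappa_difference} becomes equivalent to the identity
\begin{multline*}
\vartheta_3(c)^2\left[\vartheta_4(x+c)\vartheta_4(x-c)\vartheta_2(x)^2-\vartheta_4(x)^2\vartheta_2(x+c)\vartheta_2(x-c)\right]\\
=\vartheta_1(c)^2\left[\vartheta_4(x+c)\vartheta_4(x-c)\vartheta_4(x)^2+\vartheta_2(x+c)\vartheta_2(x-c)\vartheta_2(x)^2\right].
\end{multline*}
Note that once this identity is available, both sides of \eqref{eqn:kappa_difference} are patently independent of $n$ and of $t$, which is the second assertion of the proposition; only the first (that the constant equals $-\vartheta_1(iv)^4/\vartheta_3(iv)^4$) requires work.

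To establish the displayed identity I would fix $c$ and treat the difference $D(x)$ of its two sides as an entire function of $x$. Each of the four products occurring is, by the transformation laws for $\vartheta_2$ and $\vartheta_4$, invariant under $x\mapsto x+1$ and is multiplied by the one common factor $e^{4\pi y-8\pi ix}$ under $x\mapsto x+iy$; hence $D$ is a theta function of order $4$ for the lattice $\mathbb{Z}+iy\mathbb{Z}$, and therefore, unless $D\equiv0$, it has exactly four zeros in a fundamental parallelogram. Moreover $D(-x)=D(x)$ because $\vartheta_2$ and $\vartheta_4$ are even, and a short computation with the quasi-periodicity shows that if $D$ vanishes at one of the half-periods $x_0\in\{0,\tfrac12,\tfrac{iy}{2},\tfrac{1+iy}{2}\}$ then it vanishes there to order at least $2$. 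Consequently it is enough to check $D(x_0)=0$ at the four half-periods: if all of them vanish, $D$ has at least eight zeros, forcing $D\equiv0$.

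At each half-period $x_0$ the arguments $x_0\pm c$ differ from $\pm c$ by a translation through $\tfrac12$ or $\tfrac{iy}{2}$, so the half-period shift formulas ($\vartheta_2(z+\tfrac12)=-\vartheta_1(z)$, $\vartheta_4(z+\tfrac12)=\vartheta_3(z)$, and their $\tfrac{iy}{2}$-counterparts) turn $D(x_0)$ into an expression purely in $\vartheta_j(c)$ and $\vartheta_j(0)$. These then cancel using Jacobi's quartic identity $\vartheta_3(0)^4=\vartheta_2(0)^4+\vartheta_4(0)^4$ together with the two-term relations $\vartheta_3(0)^2\vartheta_3(c)^2=\vartheta_4(0)^2\vartheta_4(c)^2+\vartheta_2(0)^2\vartheta_2(c)^2$ and $\vartheta_3(0)^2\vartheta_1(c)^2=\vartheta_2(0)^2\vartheta_4(c)^2-\vartheta_4(0)^2\vartheta_2(c)^2$; for instance at $x_0=0$ both bracketed terms reduce to $\pm\vartheta_2(c)^2\vartheta_4(c)^2\vartheta_3(0)^2$ and cancel.

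I expect the only genuine difficulty to be organizational: getting signs and the $\tfrac{iy}{2}$-shift formulas exactly right in the conventions \eqref{theta1}, and separating off the generic subcase in which none of the factors $\vartheta_j(x_0\pm c)$ happens to vanish. Conceptually nothing beyond classical elliptic identities is used — those in \cite{Mumford:Tata_I} suffice — and the whole computation can be independently cross-checked by expanding each product $\vartheta_j(x+c)\vartheta_j(x-c)$ directly through the standard two-variable addition theorems.
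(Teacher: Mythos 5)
Your proposal is correct in substance but follows a genuinely different route from the paper. The paper's proof (Appendix E) first establishes two bilinear identities for the $\tau$ functions, $\widetilde{G}_{n}^{2}+\beta^2 F_{n}^{2}=F_{n+1}F_{n-1}$ (a consequence of \eqref{i1}--\eqref{i3}) and $\beta^2\widetilde{G}_{n}^{2}-\frac{\vartheta_1(iv)^4}{\vartheta_3(iv)^4}\beta^4 F_{n}^{2}=\widetilde{G}_{n+1}\widetilde{G}_{n-1}$ (a direct substitution into the catalogued four-term identity \eqref{pr4-3-7}); dividing the second by the first and using $\tan\frac{\kappa_n}{2}=\widetilde{G}_n/(\beta F_n)$ gives \eqref{eqn:kappa_difference} at once. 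Unwinding the exponential prefactors, those two lemmas are exactly the product formulas $\vartheta_3(0)^2\vartheta_2(x+c)\vartheta_2(x-c)=\vartheta_3(c)^2\vartheta_2(x)^2-\vartheta_1(c)^2\vartheta_4(x)^2$ and $\vartheta_3(0)^2\vartheta_4(x+c)\vartheta_4(x-c)=\vartheta_3(c)^2\vartheta_4(x)^2+\vartheta_1(c)^2\vartheta_2(x)^2$, and your single quartic identity is precisely their algebraic combination (I checked: substituting them turns both sides of your display into $\vartheta_1(c)^2\vartheta_3(c)^2(\vartheta_2(x)^4+\vartheta_4(x)^4)/\vartheta_3(0)^2$). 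What you do differently is prove that combined identity from scratch by a Liouville-type zero-counting argument: the difference of the two sides is an even theta function of order $4$, so eight zeros (two at each half-period) force it to vanish identically. That argument is sound and self-contained --- it needs only Jacobi's quadratic relations among theta constants rather than the full four-term identities of Appendix A --- at the cost of being longer than the paper's two-line division of known bilinear relations. One detail to repair: your parenthetical claim that at $x_0=0$ ``both bracketed terms reduce to $\pm\vartheta_2(c)^2\vartheta_4(c)^2\vartheta_3(0)^2$'' is not what happens; the brackets reduce to $\vartheta_3(0)^2\vartheta_1(c)^2$ and $\vartheta_3(0)^2\vartheta_3(c)^2$ respectively, so that after multiplication by the outer factors $\vartheta_3(c)^2$ and $\vartheta_1(c)^2$ both terms equal $\vartheta_3(0)^2\vartheta_1(c)^2\vartheta_3(c)^2$ and cancel. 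The cancellation you need does occur, but not via the expression you wrote.
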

\begin{proof}
    Using \eqref{2i3} and \eqref{eqn:kappa_difference1}, we see that \eqref{eqn:kappa_difference} holds.
\end{proof}
The evolution of the signed curvature angle satisfies the semi-discrete mKdV equation in Proposition \ref{prop:dotgamma}.
To obtain the semi-discrete mKdV equation, we use the following Lemma.
\begin{lem}\label{prop:anothermkdv1}
The $\tau$ functions \eqref{t} satisfy the following equation.
\begin{equation}\label{anothermkdv3}
D_{t}\widetilde{G}_{n}\cdot F_{n}=\beta_{8}\left(\widetilde{G}_{n+1} F_{n-1}-\widetilde{G}_{n-1}F_{n+1}\right),
\end{equation}
where
\begin{equation}\label{anothermkdv5}
\beta_{8}=i\frac{\vartheta_{1}\left(iv\right)\vartheta_{3}\left(iv\right)}{\vartheta_{1}\left(2iv\right)\vartheta_{3}\left(0\right)}\left(d_{1}(iv;y)-d_{3}(iv;y)\right).
\end{equation}
\end{lem}
\begin{proof}
After differentiating the equation \eqref{pr4-2} with
\begin{eqnarray}\label{prtheorem5-1}
\left\{
\begin{array}{l}
\vspace{2mm}
X=iv\left(n+\frac{1}{2}\right)+it+\frac{z}{\Delta_{3}-\Delta_{1}},\\
\vspace{2mm}
Y=iv+h,\\
\vspace{2mm}
U=-iv+h,\\
V=iv\left(n-\frac{3}{2}\right)+it+\frac{z}{\Delta_{3}-\Delta_{1}},\\
\end{array}
\right.
\end{eqnarray}
by $h$ and putting $h=0$, we obtain
\begin{equation}\label{prtheorem5-2}
\begin{split}
&D_{t} \left\{-i\vartheta_1(iv)\vartheta_4\left(\mu_{n}-\frac{1}{2}iv+it\right)\right\} \cdot \left\{ \vartheta_3(iv)\vartheta_2\left(\mu_{n}-\frac{1}{2}iv+it\right)\right\} \\
&=\frac{\vartheta_1(iv)\vartheta_3(iv)}{\vartheta_1(2iv)\vartheta_3(0)} \frac{\vartheta^{'}_{1}\left(iv\right)\vartheta_{3}\left(iv\right)-\vartheta_{1}\left(iv\right)\vartheta^{'}_{3}\left(iv\right)}{i\vartheta_{1}\left(iv\right)\vartheta_{3}\left(iv\right)}\\
&\times \left\{-i\vartheta_1(iv)\vartheta_4\left(\mu_{n-1}-\frac{1}{2}iv+it\right)\right\} \left\{ \vartheta_3(iv)\vartheta_2\left(\mu_{n+1}-\frac{1}{2}iv+it\right)\right\}\\
&-\frac{\vartheta_1(iv)\vartheta_3(iv)}{\vartheta_1(2iv)\vartheta_3(0)}\frac{\vartheta^{'}_{1}\left(iv\right)\vartheta_{3}\left(iv\right)-\vartheta_{1}\left(iv\right)\vartheta^{'}_{3}\left(iv\right)}{i\vartheta_{1}\left(iv\right)\vartheta_{3}\left(iv\right)}\\
&\times \left\{-i\vartheta_1(iv)\vartheta_4\left(\mu_{n+1}-\frac{1}{2}iv+it\right)\right\} \left\{ \vartheta_3(iv)\vartheta_2\left(\mu_{n-1}-\frac{1}{2}iv+it\right)\right\}.
\end{split}
\end{equation}
After rearranging \eqref{prtheorem5-2}, we obtain \eqref{anothermkdv3}.
\end{proof}
\begin{theorem}\label{thm:anothermkdv}
The signed curvature angle \eqref{kappa2} satisfies
\begin{equation}\label{anothermkdv}
\dot\kappa_{n}=2\beta_{1}\left(\eta_{1}\cos\lambda+\eta_{2}\sin\lambda\right)\left(\tan\frac{\kappa_{n+1}}{2}-\tan\frac{\kappa_{n-1}}{2}\right).
\end{equation}
\end{theorem}
\begin{proof}
Differentiating the signed curvature angle yields the following.
\begin{equation}
    \dot{\kappa}_{n}=2\frac{d}{dt}\arctan\left(\frac{\widetilde{G}_{n}}{\beta_{1} F_{n}}\right)=2\beta_{1}\frac{D_{t}\widetilde{G}_{n}\cdot F_{n}}{\widetilde{G}_{n}^{2}+\beta_{1}^2 F_{n}^{2}}.
\end{equation}
    Using \eqref{2i3} and \eqref{anothermkdv3}, we see that the signed curvature angle satisfies
\begin{equation}
\dot\kappa_{n}=2\beta_{1}^2\beta_{8}\left(\tan\frac{\kappa_{n+1}}{2}-\tan\frac{\kappa_{n-1}}{2}\right).
\end{equation}
Since
\begin{equation}
\beta_{1}\beta_{8}=\eta_{1}\cos\lambda+\eta_{2}\sin\lambda,
\end{equation}
holds, this is nothing but the semi-discrete mKdV equation \eqref{anothermkdv}.
\end{proof}

\subsection{Potential function of curvature}\label{subsection:Potential function for curvature}
A family of functions $\Theta_n$ parameterised by $n\in \mathbb{Z}$ is said to be a potential of the curvature $\kappa$ if the equation
$2\kappa_{n}=\Theta_{n+1}-\Theta_{n-1}$ holds.
When $\gamma$ deforms by \eqref{deform2}, the potential function $\Theta$ is known to satisfy the semi-discrete potential mKdV equation or the semi-discrete sine-Gordon equation \cite{KKP:linkage}. 
Here, we derive the evolution equations satisfied by the potential function for the curve constructed by Theorem \ref{thm:PhDKal2}. 

We construct a potential function
which simultaneously satisfy
\eqref{semi-discrete potential mKdV}
and \eqref{semi-discrete sG}
in terms of elliptic theta functions.

\begin{theorem}\label{thm:kappa_and_potential}
For fixed $v$ and $y$, define the function $\Theta$ as follows.
\begin{equation}\label{sdm-2-2}
\Theta_{n}\left(t\right)=\Theta_{n}\left(t;v,y\right)=4\arctan\frac{\vartheta_1\left(\frac{v}{y}\left(n-\frac{1}{2}\right)+\frac{t}{y};\frac{2i}{y}\right)}{\vartheta_4\left(\frac{v}{y}\left(n-\frac{1}{2}\right)+\frac{t}{y};\frac{2i}{y}\right)}.
\end{equation}
Then we see that $\kappa$ defined in \eqref{kappa2} satisfies
\begin{equation}\label{kappa-3}
\kappa_{n}=\frac{\Theta_{n+1}-\Theta_{n-1}}{2}.
\end{equation}
\end{theorem}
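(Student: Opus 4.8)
The plan is to reduce the claimed identity \eqref{kappa-3} to an addition-type identity for the Jacobi theta functions by rewriting everything in terms of the single function $\tan(\Theta_n/2)$. First I would set, for brevity, $\tau = 2i/y$ and $\zeta_n = \tfrac{v}{y}(n-\tfrac12)+\tfrac{t}{y}$, so that $\tan(\Theta_n/4) = \vartheta_1(\zeta_n;\tau)/\vartheta_4(\zeta_n;\tau)$. Note that $\zeta_{n\pm1} = \zeta_n \pm v/y$, i.e.\ adjacent indices correspond to a shift by the fixed amount $v/y$ in the modulus $\tau = 2i/y$. Using the double-angle formula, $\tan(\Theta_n/2) = \dfrac{2\tan(\Theta_n/4)}{1-\tan^2(\Theta_n/4)} = \dfrac{2\vartheta_1(\zeta_n)\vartheta_4(\zeta_n)}{\vartheta_4(\zeta_n)^2 - \vartheta_1(\zeta_n)^2}$, and similarly $\cos\Theta_n$, $\sin\Theta_n$ become ratios of quadratic theta expressions. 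On the other side, from \eqref{kappa2} the quantity $\tan(\kappa_n/2)$ is, up to the constant prefactor $-i\vartheta_1(iv)/\vartheta_3(iv)$, equal to $\vartheta_4(ivn-\tfrac12 iv)/\vartheta_2(ivn - \tfrac12 iv)$ in modulus $iy$; I would re-express this argument in terms of $\zeta_n$ by using the modular/rescaling relation between theta functions of modulus $iy$ and modulus $2i/y$ (a Landen-type transformation), so that both sides live in the same lattice.

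The core computation is then to verify $\tan(\kappa_n/2) = \tan\!\big(\tfrac{\Theta_{n+1}-\Theta_{n-1}}{4}\big)$. Expanding the right-hand side by the tangent subtraction formula,
\begin{equation*}
\tan\frac{\Theta_{n+1}-\Theta_{n-1}}{4} = \frac{\tan(\Theta_{n+1}/4)-\tan(\Theta_{n-1}/4)}{1+\tan(\Theta_{n+1}/4)\tan(\Theta_{n-1}/4)} = \frac{\vartheta_1(\zeta_{n+1})\vartheta_4(\zeta_{n-1}) - \vartheta_1(\zeta_{n-1})\vartheta_4(\zeta_{n+1})}{\vartheta_4(\zeta_{n+1})\vartheta_4(\zeta_{n-1}) + \vartheta_1(\zeta_{n+1})\vartheta_1(\zeta_{n-1})}.
\end{equation*}
Both numerator and denominator are, by the standard two-variable theta addition theorems (expressing $\vartheta_a(x+u)\vartheta_b(x-u)$ as bilinear combinations of $\vartheta_c(x)\vartheta_d(u)$ with $x = \zeta_n$ and $u = v/y$), products of a theta function evaluated at $2\zeta_n$ (up to index) and a theta function evaluated at the constant $2v/y$. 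The $\zeta_n$-dependent factor should come out as $\vartheta_?(2\zeta_n)$ matching the $\vartheta_4(ivn-\tfrac12 iv)/\vartheta_2(ivn-\tfrac12 iv)$ of \eqref{kappa2} (after the modulus change, $2\zeta_n \leftrightarrow ivn-\tfrac12 iv$ up to normalization), and the constant factor should reproduce exactly the prefactor $-i\vartheta_1(iv)/\vartheta_3(iv)$. So the proof is: (1) translate \eqref{kappa2} into the $(\tau = 2i/y)$ picture; (2) apply tangent subtraction to $(\Theta_{n+1}-\Theta_{n-1})/4$; (3) apply theta addition formulas to factor numerator and denominator; (4) match the $n$-dependent and constant pieces separately.

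The main obstacle I anticipate is bookkeeping with the modular transformation: \eqref{kappa2} is naturally written with nome $q = e^{-\pi y}$ (modulus $iy$) and argument $iv$, whereas $\Theta_n$ is written with nome $\tilde q = e^{-2\pi/y}$ (modulus $2i/y$) and real argument $v/y$ — these are related by $\tau \mapsto -1/\tau$ up to a factor of $2$, so one must pass through a Landen/Gauss transformation (doubling or halving the period), and keeping track of which $\vartheta_j$ maps to which, along with the automorphy factors $e^{\pi i \tau/4}$ etc., is where sign and index errors are most likely. A cleaner route that sidesteps some of this: verify the identity directly at the level of the defining series or, better, observe that both sides of \eqref{kappa-3}, viewed as functions of $t$, are elliptic (doubly periodic) with the same periods, the same poles, and the same residues/principal parts — this follows from the quasi-periodicity \eqref{cl} of the $\vartheta_j$ — hence their difference is a bounded entire elliptic function, therefore constant, and the constant is pinned down by evaluating at one convenient value of $t$ (e.g.\ a zero of $\vartheta_1$ or the point where $\kappa_n$ vanishes). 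I would present the Liouville-type argument as the primary proof and relegate the explicit addition-formula verification to a remark or to the appendix, since the paper's other theta-function proofs are already deferred there.
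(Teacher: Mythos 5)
Your primary route is essentially the paper's own proof: Appendix G starts from \eqref{kappa2}, applies the half-modulus bilinear identities \eqref{kappa-5}--\eqref{kappa-6} to cast the argument of the arctangent into the subtraction-formula shape $\frac{a-b}{1+ab}$, and then uses the modular transformation \eqref{kappa-4} to pass from modulus $iy$ with imaginary arguments to modulus $2i/y$ with real arguments, which is exactly your steps (1)--(3) run in the opposite direction. The only substantive difference is that the paper works at the level of $\arctan$ throughout (so the additive branch constant never appears), whereas your $\tan(\kappa_n/2)=\tan((\Theta_{n+1}-\Theta_{n-1})/4)$ formulation, like your Liouville alternative, determines the identity only up to an additive constant that must then be fixed at one value of $t$ — a point you note and which is easily handled.
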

\begin{proof}
    From \eqref{kappa-4}, \eqref{kappa-5} and \eqref{kappa-6}, we see that
\begin{equation}\label{kappa-7}
\begin{split}
&\kappa_{n}=2\arctan\left(-i\frac{\vartheta_1(iv; iy)\vartheta_4\left(iv\left(n-\frac{1}{2}\right)+it; iy\right)}{\vartheta_3(iv; iy)\vartheta_2\left(iv\left(n-\frac{1}{2}\right)+it; iy\right)}\right)\\
&=2\arctan\left(-i\frac{\frac{\vartheta_{1}\left(i\frac{v}{2}\left(n+\frac{1}{2}\right)+\frac{it}{2}; \frac{iy}{2}\right)}{\vartheta_{2}\left(i\frac{v}{2}\left(n+\frac{1}{2}\right)+\frac{it}{2}; \frac{iy}{2}\right)}-\frac{\vartheta_{1}\left(i\frac{v}{2}\left(n-\frac{3}{2}\right)+\frac{it}{2}; \frac{iy}{2}\right)}{\vartheta_{2}\left(i\frac{v}{2}\left(n-\frac{3}{2}\right)+\frac{it}{2}; \frac{iy}{2}\right)}}{1-\frac{\vartheta_{1}\left(i\frac{v}{2}\left(n+\frac{1}{2}\right)+\frac{it}{2}; \frac{iy}{2}\right)\vartheta_{1}\left(i\frac{v}{2}\left(n-\frac{3}{2}\right)+\frac{it}{2}; \frac{iy}{2}\right)}{\vartheta_{2}\left(i\frac{v}{2}\left(n+\frac{1}{2}\right)+\frac{it}{2}; \frac{iy}{2}\right)\vartheta_{2}\left(i\frac{v}{2}\left(n-\frac{3}{2}\right)+\frac{it}{2}; \frac{iy}{2}\right)}}\right)\\
&=2\arctan\left(\frac{\frac{\vartheta_{1}\left(\frac{v}{y}\left(n+\frac{1}{2}\right)+\frac{t}{y}; \frac{2i}{y}\right)}{\vartheta_{4}\left(\frac{v}{y}\left(n+\frac{1}{2}\right)+\frac{t}{y}; \frac{2i}{y}\right)}-\frac{\vartheta_{1}\left(\frac{v}{y}\left(n-\frac{3}{2}\right)+\frac{t}{y}; \frac{2i}{y}\right)}{\vartheta_{4}\left(\frac{v}{y}\left(n-\frac{3}{2}\right)+\frac{t}{y}; \frac{2i}{y}\right)}}{1+\frac{\vartheta_{1}\left(\frac{v}{y}\left(n+\frac{1}{2}\right)+\frac{t}{y}; \frac{2i}{y}\right)\vartheta_{1}\left(\frac{v}{y}\left(n-\frac{3}{2}\right)+\frac{t}{y}; \frac{2i}{y}\right)}{\vartheta_{4}\left(\frac{v}{y}\left(n+\frac{1}{2}\right)+\frac{t}{y}; \frac{2i}{y}\right)\vartheta_{4}\left(\frac{v}{y}\left(n-\frac{3}{2}\right)+\frac{t}{y}; \frac{2i}{y}\right)}}\right)\\
&=2\arctan\left(\frac{\vartheta_{1}\left(\frac{v}{y}\left(n+\frac{1}{2}\right)+\frac{t}{y}; \frac{2i}{y}\right)}{\vartheta_{4}\left(\frac{v}{y}\left(n+\frac{1}{2}\right)+\frac{t}{y}; \frac{2i}{y}\right)}\right)\\
&-2\arctan\left(\frac{\vartheta_{1}\left(\frac{v}{y}\left(n-\frac{3}{2}\right)+\frac{t}{y}; \frac{2i}{y}\right)}{\vartheta_{4}\left(\frac{v}{y}\left(n-\frac{3}{2}\right)+\frac{t}{y}; \frac{2i}{y}\right)}\right)\\
&=\frac{\Theta_{n+1}-\Theta_{n-1}}{2},
\end{split}
\end{equation}
holds. 
\end{proof}
The potential function simultaneously solves the semi-discrete potential mKdV equation and the semi-discrete sine-Gordon equation, establishing an explicit relation between our construction and the previous work \cite{KKP:linkage} reviewed in \S \ref{subsection:Segment-length and torsion-angle preserving deformation of discrete space curve}.
Since the semi-discrete sine-Gordon equation can be solved by almost the same calculation, we will solve only the semi-discrete potential mKdV equation here.
We construct a solution to the equation using the theory of $\tau$ functions \cite{inoguchi2012motion}.
For this purpose, we first define the following real-valued functions for fixed $v$ and $y$.
\begin{equation}\label{sdm-6}
\begin{split}
&A_{n}\left(t\right)=\vartheta_4\left(\frac{v}{y}\left(n-\frac{1}{2}\right)+\frac{t}{y};\frac{2i}{y}\right),\\
&B_{n}\left(t\right)=\vartheta_1\left(\frac{v}{y}\left(n-\frac{1}{2}\right)+\frac{t}{y};\frac{2i}{y}\right).
\end{split}
\end{equation}
Then the potential function can be written as follows
\begin{equation}\label{sdm-2}
\Theta_{n}\left(t\right)=4\arctan\frac{B_{n}}{A_{n}}=\frac{2}{i}\log\frac{A_{n}+iB_{n}}{A_{n}-iB_{n}}=\frac{2}{i}\log\frac{\widetilde{\tau}_n}{\widetilde{\tau}^{*}_{n}},
\end{equation}
where $\widetilde{\tau}_{n}=A_{n}+iB_{n}$.
$\widetilde{\tau}$ corresponds to the $\tau$ functions of the one-component KP hierarchy, while $A$ and $B$ correspond to the $\tau$ functions of the two-component KP hierarchy.
Then we have the following lemma.
\begin{lem}\label{prop:sdm2}
The functions $A$ and $B$ defined by \eqref{sdm-6} satisfy the following equations.
\begin{equation}\label{sdm-3}
D_{t}A_{n+1}\cdot A_{n}+D_{t}B_{n+1}\cdot B_{n}=\frac{L_{1}}{2y}\left(\beta_{9}-1\right)\left(A_{n+1}A_{n}+B_{n+1}B_{n}\right),
\end{equation}
\begin{equation}\label{sdm-4}
D_{t}A_{n+1}\cdot B_{n}-D_{t}B_{n+1}\cdot A_{n}=\frac{L_{1}}{2y}\left(\beta_{9}+1\right)\left(A_{n+1}B_{n}-B_{n+1}A_{n}\right),
\end{equation}
where
\begin{align}
&J_{1}(v,y)=\frac{\vartheta^{'}_1\left(0;\frac{2i}{y}\right)\vartheta_1\left(\frac{v}{y};\frac{2i}{y}\right)}{\vartheta_4\left(0;\frac{2i}{y}\right)\vartheta_4\left(\frac{v}{y};\frac{2i}{y}\right)},\quad J_{2}(v,y)=\frac{\vartheta^{'}_1\left(0;\frac{2i}{y}\right)\vartheta_4\left(\frac{v}{y};\frac{2i}{y}\right)}{\vartheta_1\left(\frac{v}{y};\frac{2i}{y}\right)\vartheta_4\left(0;\frac{2i}{y}\right)},\\
&L_{1}(v,y)=d_{1}\left(\frac{v}{y};\frac{2}{y}\right)-d_{4}\left(\frac{v}{y};\frac{2}{y}\right)+J_{1}(v,y)+J_{2}(v,y), \label{sdm-5}\\
&\beta_{9}=\frac{d_{1}\left(\frac{v}{y};\frac{2}{y}\right)+d_{4}\left(\frac{v}{y};\frac{2}{y}\right)-J_{1}(v,y)+J_{2}(v,y)}{L_{1}(v,y)}.
\end{align}
\end{lem}
\begin{proof}
In this proof, all elliptic theta functions have $w=2i/y$ and we denote $\vartheta_j(v; 2i/y)$ by $\vartheta_j(v)$.
After differentiating the equation in \eqref{pr4-3-2} with
\begin{eqnarray}\label{prtheorem2-1}
\left\{
\begin{array}{l}
\vspace{2mm}
X=\frac{v}{y}\left(n+\frac{1}{2}\right)+\frac{t}{y},\\
\vspace{2mm}
Y=\frac{v}{y}\left(n-\frac{1}{2}\right)+\frac{t}{y},\\
\vspace{2mm}
U=\frac{v}{y}+h,\\
V=h,\\
\end{array}
\right.
\end{eqnarray}
by $h$ and putting $h=0$, we obtain
\begin{equation}\label{prtheorem2-2}
\begin{split}
&D_{t} \vartheta_4\left(\frac{v}{y}\left(n+\frac{1}{2}\right)+\frac{t}{y}\right) \cdot \vartheta_4\left(\frac{v}{y}\left(n-\frac{1}{2}\right)+\frac{t}{y}\right) \\
&=\frac{1}{y}\frac{\vartheta^{'}_{4}\left(\frac{v}{y}\right)}{\vartheta_{4}\left(\frac{v}{y}\right)}\vartheta_4\left(\frac{v}{y}\left(n+\frac{1}{2}\right)+\frac{t}{y}\right)\vartheta_4\left(\frac{v}{y}\left(n-\frac{1}{2}\right)+\frac{t}{y}\right)\\
&-\frac{1}{y}\frac{\vartheta_{1}\left(\frac{v}{y}\right)\vartheta^{'}_{1}\left(0\right)}{\vartheta_{4}\left(\frac{v}{y}\right)\vartheta_{4}\left(0\right)}\vartheta_1\left(\frac{v}{y}\left(n+\frac{1}{2}\right)+\frac{t}{y}\right)\vartheta_1\left(\frac{v}{y}\left(n-\frac{1}{2}\right)+\frac{t}{y}\right)\\
&=\frac{1}{y}d_{4}\left(\frac{v}{y};\frac{2}{y}\right)\vartheta_4\left(\frac{v}{y}\left(n+\frac{1}{2}\right)+\frac{t}{y}\right)\vartheta_4\left(\frac{v}{y}\left(n-\frac{1}{2}\right)+\frac{t}{y}\right)\\
&-\frac{1}{y}J_{1}(v,y)\vartheta_1\left(\frac{v}{y}\left(n+\frac{1}{2}\right)+\frac{t}{y}\right)\vartheta_1\left(\frac{v}{y}\left(n-\frac{1}{2}\right)+\frac{t}{y}\right).
\end{split}
\end{equation}
Also, after differentiating the equation in \eqref{pr4-3-3} with
\begin{eqnarray}\label{prtheorem2-3}
\left\{
\begin{array}{l}
\vspace{2mm}
X=\frac{v}{y}\left(n+\frac{1}{2}\right)+\frac{t}{y},\\
\vspace{2mm}
Y=\frac{v}{y}\left(n-\frac{1}{2}\right)+\frac{t}{y},\\
\vspace{2mm}
U=\frac{v}{y}+h,\\
V=h,\\
\end{array}
\right.
\end{eqnarray}
by $h$ and putting $h=0$, we obtain
\begin{equation}\label{prtheorem2-4}
\begin{split}
&D_{t} \vartheta_1\left(\frac{v}{y}\left(n+\frac{1}{2}\right)+\frac{t}{y}\right) \cdot \vartheta_1\left(\frac{v}{y}\left(n-\frac{1}{2}\right)+\frac{t}{y}\right) \\
&=\frac{1}{y}\frac{\vartheta^{'}_{4}\left(\frac{v}{y}\right)}{\vartheta_{4}\left(\frac{v}{y}\right)}\vartheta_1\left(\frac{v}{y}\left(n+\frac{1}{2}\right)+\frac{t}{y}\right)\vartheta_1\left(\frac{v}{y}\left(n-\frac{1}{2}\right)+\frac{t}{y}\right)\\
&-\frac{1}{y}\frac{\vartheta_{1}\left(\frac{v}{y}\right)\vartheta^{'}_{1}\left(0\right)}{\vartheta_{4}\left(\frac{v}{y}\right)\vartheta_{4}\left(0\right)}\vartheta_4\left(\frac{v}{y}\left(n+\frac{1}{2}\right)+\frac{t}{y}\right)\vartheta_4\left(\frac{v}{y}\left(n-\frac{1}{2}\right)+\frac{t}{y}\right)\\
&=\frac{1}{y}d_{4}\left(\frac{v}{y};\frac{2}{y}\right)\vartheta_1\left(\frac{v}{y}\left(n+\frac{1}{2}\right)+\frac{t}{y}\right)\vartheta_1\left(\frac{v}{y}\left(n-\frac{1}{2}\right)+\frac{t}{y}\right)\\
&-\frac{1}{y}J_{1}(v,y)\vartheta_4\left(\frac{v}{y}\left(n+\frac{1}{2}\right)+\frac{t}{y}\right)\vartheta_4\left(\frac{v}{y}\left(n-\frac{1}{2}\right)+\frac{t}{y}\right).
\end{split}
\end{equation}
Also, after differentiating the equation in \eqref{pr4-3-3} with
\begin{eqnarray}\label{prtheorem2-5}
\left\{
\begin{array}{l}
\vspace{2mm}
X=\frac{v}{y}\left(n+\frac{1}{2}\right)+\frac{t}{y},\\
\vspace{2mm}
Y=\frac{v}{y}+h,\\
\vspace{2mm}
U=\frac{v}{y}\left(n-\frac{1}{2}\right)+\frac{t}{y},\\
V=h,\\
\end{array}
\right.
\end{eqnarray}
by $h$ and putting $h=0$, we obtain
\begin{equation}\label{prtheorem2-6}
\begin{split}
&D_{t} \vartheta_1\left(\frac{v}{y}\left(n+\frac{1}{2}\right)+\frac{t}{y}\right) \cdot \vartheta_4\left(\frac{v}{y}\left(n-\frac{1}{2}\right)+\frac{t}{y}\right) \\
&=\frac{1}{y}\frac{\vartheta^{'}_{1}\left(\frac{v}{y}\right)}{\vartheta_{1}\left(\frac{v}{y}\right)}\vartheta_1\left(\frac{v}{y}\left(n+\frac{1}{2}\right)+\frac{t}{y}\right)\vartheta_4\left(\frac{v}{y}\left(n-\frac{1}{2}\right)+\frac{t}{y}\right)\\
&-\frac{1}{y}\frac{\vartheta_{4}\left(\frac{v}{y}\right)\vartheta^{'}_{1}\left(0\right)}{\vartheta_{1}\left(\frac{v}{y}\right)\vartheta_{4}\left(0\right)}\vartheta_4\left(\frac{v}{y}\left(n+\frac{1}{2}\right)+\frac{t}{y}\right)\vartheta_1\left(\frac{v}{y}\left(n-\frac{1}{2}\right)+\frac{t}{y}\right)\\
&=\frac{1}{y}d_{1}\left(\frac{v}{y};\frac{2}{y}\right)\vartheta_1\left(\frac{v}{y}\left(n+\frac{1}{2}\right)+\frac{t}{y}\right)\vartheta_4\left(\frac{v}{y}\left(n-\frac{1}{2}\right)+\frac{t}{y}\right)\\
&-\frac{1}{y}J_{2}(v,y)\vartheta_4\left(\frac{v}{y}\left(n+\frac{1}{2}\right)+\frac{t}{y}\right)\vartheta_1\left(\frac{v}{y}\left(n-\frac{1}{2}\right)+\frac{t}{y}\right).
\end{split}
\end{equation}
Also, after differentiating the equation in \eqref{pr4-3-3} with
\begin{eqnarray}\label{prtheorem2-7}
\left\{
\begin{array}{l}
\vspace{2mm}
X=\frac{v}{y}\left(n-\frac{1}{2}\right)+\frac{t}{y},\\
\vspace{2mm}
Y=\frac{v}{y}+h,\\
\vspace{2mm}
U=\frac{v}{y}\left(n+\frac{1}{2}\right)+\frac{t}{y},\\
V=h,\\
\end{array}
\right.
\end{eqnarray}
by $h$ and putting $h=0$, we obtain
\begin{equation}\label{prtheorem2-8}
\begin{split}
&D_{t}\vartheta_4\left(\frac{v}{y}\left(n+\frac{1}{2}\right)+\frac{t}{y}\right) \cdot \vartheta_1\left(\frac{v}{y}\left(n-\frac{1}{2}\right)+\frac{t}{y}\right)\\
&=\frac{1}{y}\frac{\vartheta^{'}_{1}\left(\frac{v}{y}\right)}{\vartheta_{1}\left(\frac{v}{y}\right)}\vartheta_4\left(\frac{v}{y}\left(n+\frac{1}{2}\right)+\frac{t}{y}\right)\vartheta_1\left(\frac{v}{y}\left(n-\frac{1}{2}\right)+\frac{t}{y}\right)\\
&-\frac{1}{y}\frac{\vartheta_{4}\left(\frac{v}{y}\right)\vartheta^{'}_{1}\left(0\right)}{\vartheta_{1}\left(\frac{v}{y}\right)\vartheta_{4}\left(0\right)}\vartheta_1\left(\frac{v}{y}\left(n+\frac{1}{2}\right)+\frac{t}{y}\right)\vartheta_4\left(\frac{v}{y}\left(n-\frac{1}{2}\right)+\frac{t}{y}\right)\\
&=\frac{1}{y}d_{1}\left(\frac{v}{y};\frac{2}{y}\right)\vartheta_4\left(\frac{v}{y}\left(n+\frac{1}{2}\right)+\frac{t}{y}\right)\vartheta_1\left(\frac{v}{y}\left(n-\frac{1}{2}\right)+\frac{t}{y}\right)\\
&-\frac{1}{y}J_{2}(v,y)\vartheta_1\left(\frac{v}{y}\left(n+\frac{1}{2}\right)+\frac{t}{y}\right)\vartheta_4\left(\frac{v}{y}\left(n-\frac{1}{2}\right)+\frac{t}{y}\right).
\end{split}
\end{equation}
Adding \eqref{prtheorem2-2} and \eqref{prtheorem2-4} yields \eqref{sdm-3}, and subtracting \eqref{prtheorem2-6} from \eqref{prtheorem2-8} yields \eqref{sdm-4}. 
\end{proof}
Then we have the following theorem.
\begin{theorem}\label{thm:sdm}
Let $y>0$ and $v \in \R \setminus{(y\mathbb{Z})}$.
The potential function $\Theta$ defined by \eqref{sdm-2-2} simultaneously satisfies the semi-discrete potential mKdV 
and sine-Gordon equations
\begin{align}
\frac{d}{dt}\left(\frac{\Theta_{n+1}+\Theta_{n}}{2}\right) & =\left(1+\cos\lambda\right)\rho_{+}\sin\left(\frac{\Theta_{n+1}-\Theta_{n}}{2}\right),
\label{sdm-1} \\
\frac{d}{dt}\left(\frac{\Theta_{n+1}-\Theta_{n}}{2}\right)&=\left(1-\cos\lambda\right)\rho_{-}\sin\left(\frac{\Theta_{n+1}+\Theta_{n}}{2}\right).
\label{sdsg-1}
\end{align}
\end{theorem}
\begin{proof}
We first show that the potential function satisfies
\begin{align}\label{sdm-7}
\frac{d}{dt}\left(\frac{\Theta_{n+1}+\Theta_{n}}{2}\right) & =\frac{L_{1}(v,y)}{y}\sin\left(\frac{\Theta_{n+1}-\Theta_{n}}{2}\right),
\end{align}
using Lemma \ref{prop:sdm2}.
Then we show that
\begin{equation}\label{sdm-r-2}
     \frac{L_{1}(v,y)}{y}\frac{1}{1+\cos\lambda}=i\frac{\vartheta_{1}^{'}(0)}{\vartheta_{1}(iv)}\frac{u_{1}}{\vartheta_{3}(r)^{2}\vartheta_{1}(r)^{2}}=\rho_{+}.
\end{equation}
Suppose that a solution to the equation \eqref{sdm-7} is expressed as \eqref{sdm-2}. 
Then both sides of the equation can be rewritten as follows.
\begin{equation}\label{sdm-5-1}
\begin{split}
&\frac{d}{dt}\left(\frac{\Theta_{n+1}+\Theta_{n}}{2}\right)=\frac{1}{i}\frac{d}{dt}\left(\log\frac{\widetilde{\tau}_{n+1}}{\widetilde{\tau}^{*}_{n+1}}+\log\frac{\widetilde{\tau}_{n}}{\widetilde{\tau}^{*}_{n}}\right)\\
&=\frac{1}{i}\frac{d}{dt}\left(\log\frac{\widetilde{\tau}_{n+1}}{\widetilde{\tau}^{*}_{n}}-\log\frac{\widetilde{\tau}^{*}_{n+1}}{\widetilde{\tau}_{n}}\right)\\
&=\frac{1}{i}\left(\frac{D_{t}\widetilde{\tau}_{n+1}\cdot \widetilde{\tau}^{*}_{n}}{\widetilde{\tau}_{n+1}\widetilde{\tau}^{*}_{n}}-\frac{D_{t}\widetilde{\tau}^{*}_{n+1}\cdot \widetilde{\tau}_{n}}{\widetilde{\tau}^{*}_{n+1}\widetilde{\tau}_{n}}\right),
\end{split}
\end{equation}
and
\begin{equation}\label{sdm-5-2}
\begin{split}
&\sin\left(\frac{\Theta_{n+1}-\Theta_{n}}{2}\right)=\frac{1}{2i}\left(\frac{\widetilde{\tau}_{n+1}\widetilde{\tau}^{*}_{n}}{\widetilde{\tau}^{*}_{n+1}\widetilde{\tau}_{n}}-\frac{\widetilde{\tau}^{*}_{n+1}\widetilde{\tau}_{n}}{\widetilde{\tau}_{n+1}\widetilde{\tau}^{*}_{n}}\right).
\end{split}
\end{equation}
Then it can be seen that a sufficient condition for \eqref{sdm-2} to be a solution of \eqref{sdm-7} is that
\begin{equation}\label{sdm-5-3}
\begin{split}
&D_{t}\widetilde{\tau}_{n+1}\cdot \widetilde{\tau}^{*}_{n}=-\frac{L_{1}}{2y}\widetilde{\tau}^{*}_{n+1}\widetilde{\tau}_{n}+\frac{L_{1}}{2y}\beta_{9}\widetilde{\tau}_{n+1}\widetilde{\tau}_{n}^{*},
\end{split}
\end{equation}
 is satisfied. Substituting $\widetilde{\tau}_{n}=A_{n}+iB_{n}$ for this, we get
\begin{equation}\label{sdm-5-4}
\begin{split}
&D_{t}\left(A_{n+1}\cdot A_{n}+B_{n+1}\cdot B_{n}\right)-iD_{t}\left(A_{n+1}\cdot B_{n}-B_{n+1}\cdot A_{n}\right)\\
&=\frac{L_{1}}{2y}\left(\beta_{9}-1\right)\left(A_{n+1}A_{n}+B_{n+1}B_{n}\right)-i\frac{L_{1}}{2y}\left(\beta_{9}+1\right)\left(A_{n+1}B_{n}-B_{n+1}A_{n}\right).
\end{split}
\end{equation}
The real part of this equation yields \eqref{sdm-3}, and the imaginary part yields \eqref{sdm-4}.
Thus, Lemma \ref{prop:sdm2} shows that the potential function satisfies the equation \eqref{sdm-7}.
We then calculate $L_{1}(v,y)/y$.
From \eqref{id:const}, \eqref{id:modular} and \eqref{id:modular2}, we see
\begin{equation}\label{eqn:L1-1}
\begin{split}
\frac{L_{1}(v,y)}{y}&=\frac{1}{y}\frac{\vartheta_{1}^{'}\left(\frac{v}{y}; \frac{2i}{y}\right)\vartheta_{4}\left(\frac{v}{y}; \frac{2i}{y}\right)-\vartheta_{1}\left(\frac{v}{y}; \frac{2i}{y}\right)\vartheta_{4}^{'}\left(\frac{v}{y}; \frac{2i}{y}\right)}{\vartheta_{1}\left(\frac{v}{y}; \frac{2i}{y}\right)\vartheta_{4}\left(\frac{v}{y}; \frac{2i}{y}\right)}\\
&+\frac{1}{y}\frac{\vartheta_{1}^{'}\left(0; \frac{2i}{y}\right)\vartheta_{1}\left(\frac{v}{y}; \frac{2i}{y}\right)}{\vartheta_{4}\left(0; \frac{2i}{y}\right)\vartheta_{4}\left(\frac{v}{y}; \frac{2i}{y}\right)}+\frac{1}{y}\frac{\vartheta_{1}^{'}\left(0; \frac{2i}{y}\right)\vartheta_{4}\left(\frac{v}{y}; \frac{2i}{y}\right)}{\vartheta_{4}\left(0; \frac{2i}{y}\right)\vartheta_{1}\left(\frac{v}{y}; \frac{2i}{y}\right)}\\
&=i\frac{\vartheta_{1}^{'}\left(0; iy\right)}{\vartheta_{4}\left(0; iy\right)}\frac{\vartheta_{4}\left(iv; iy\right)}{\vartheta_{1}\left(iv; iy\right)}+i\frac{\vartheta_{1}^{'}\left(0; iy\right)}{\vartheta_{2}\left(0; iy\right)}\frac{\vartheta_{2}\left(iv; iy\right)}{\vartheta_{1}\left(iv; iy\right)}\\
&=i\frac{\vartheta_{1}^{'}\left(0; iy\right)}{\vartheta_{1}\left(iv; iy\right)}\frac{\vartheta_{4}\left(iv; iy\right)\vartheta_{2}\left(0; iy\right)+\vartheta_{2}\left(iv; iy\right)\vartheta_{4}\left(0; iy\right)}{\vartheta_{4}\left(0; iy\right)\vartheta_{2}\left(0; iy\right)}.
\end{split}
\end{equation}
On the other hand, we see
\begin{equation}\label{eqn:L1-2}
\begin{split}
&1+\cos\lambda\\
&=\frac{\left\{\vartheta_{1}\left(\frac{iv}{2}+r; iy\right)\vartheta_{3}\left(-\frac{iv}{2}+r; iy\right)+\vartheta_{1}\left(-\frac{iv}{2}+r; iy\right)\vartheta_{3}\left(\frac{iv}{2}+r; iy\right)\right\}^{2}}{2u_{1}}.
\end{split}
\end{equation}
We can see that \eqref{sdm-r-2} holds by rewriting \eqref{eqn:L1-1} and \eqref{eqn:L1-2} using the following.
\begin{equation}
\begin{split}
&\vartheta_{2}\left(iv/2\right)\vartheta_{4}\left(iv/2\right)\left\lbrace\vartheta_{1}\left(iv/2+r\right)\vartheta_{3}\left(-iv/2+r\right)+\vartheta_{1}\left(-iv/2+r\right)\vartheta_{3}\left(iv/2+r\right)\right\rbrace\\
&=\vartheta_{3}\left(r\right)\vartheta_{1}\left(r\right)\left\lbrace\vartheta_{4}\left(iv\right)\vartheta_{2}\left(0\right)+\vartheta_{2}\left(iv\right)\vartheta_{4}\left(0\right)\right\rbrace,\\
&\vartheta_{3}\left(r\right)\vartheta_{1}\left(r\right)\left\lbrace\vartheta_{1}\left(iv/2+r\right)\vartheta_{3}\left(-iv/2+r\right)+\vartheta_{1}\left(-iv/2+r\right)\vartheta_{3}\left(iv/2+r\right)\right\rbrace\\
&=\vartheta_{2}\left(iv/2\right)\vartheta_{4}\left(iv/2\right)\left\lbrace\vartheta_{4}\left(2r\right)\vartheta_{2}\left(0\right)-\vartheta_{2}\left(2r\right)\vartheta_{4}\left(0\right)\right\rbrace,\\
&\vartheta_{2}\left(0\right)\vartheta_{4}\left(0\right)\left\lbrace\vartheta_{4}\left(2r\right)\vartheta_{2}\left(0\right)-\vartheta_{2}\left(2r\right)\vartheta_{4}\left(0\right)\right\rbrace\\
&=2\vartheta_{3}\left(r\right)^{2}\vartheta_{1}\left(r\right)^{2}.
\end{split}
\end{equation}
\end{proof}


\begin{remark}
Rewriting \eqref{anothermkdv} in terms of the potential function $\Theta$, we obtain
\begin{equation}\label{anothermkdv9}
\frac{d}{dt}\Theta_{n}=4\beta_{1}\left(\eta_{1}\cos\lambda+\eta_{2}\sin\lambda\right)\tan\left(\frac{\Theta_{n+1}-\Theta_{n-1}}{4}\right),
\end{equation}
which is also called the \emph{semi-discrete potential mKdV equation} \cite{inoguchi2012explicit,hoffmann2004discrete,feng2011discrete}. In other words, the potential function $\Theta$ defined by \eqref{sdm-2-2} is a simultaneous solution of three differential-difference equations, \eqref{sdm-1}, \eqref{sdsg-1} and \eqref{anothermkdv9}. It does not usually happen that these equations are simultaneously solved. For example, it can be checked  that the 2- and 3-soliton solutions of the semi-discrete sine-Gordon equation constructed in \cite{feng2011discrete,feng2010integrable} are not solutions of the semi-discrete potential mKdV equation.
\end{remark}

Note the difference in approaches of 
\cite{KKP:linkage} and this work;
\cite{KKP:linkage} starts with a geometric property
observed in the motion of Kaleidocycles, and derive evolution equations,
whereas we aim to construct a particular motion of Kaleidocycles and investigate the equations satisfied by them.
They turn out to arrive at the same place; 
the latter gives an explicit solution to the equations derived in the former.


\section{Existence of Kaleidocycles}\label{section:Necessary and sufficient conditions for closedness}

The curves given in Theorem \ref{thm:PhDKal2} are not necessarily closed.
In this section, we derive necessary and sufficient conditions on the parameters $v$, $r$, and $y$ for the curve $\gamma$ to be closed, and hence to define a Kaleidocycle.
We also prove that such parameters exist when $k \geq 6$.

The overall strategy is as follows.
First, we derive a necessary condition for closedness.
For the curve $\gamma$ to be closed, the absolute value of the signed curvature angle must be periodic with period $k$.
This implies that $m := kv/y \in \mathbb{Z} \setminus (k\mathbb{Z})$ (\S\ref{subsection:Period of the curvature angle}).

Assuming $m \in \mathbb{Z} \setminus (k\mathbb{Z})$, we then rewrite the closedness condition $\gamma_{n+k}(t) = \gamma_n(t)$ as two equations in the parameters $v$, $r$, and $y$ (\S\ref{subsection:Closure conditions}).
One of these equations determines $r = r(y)$ uniquely when $y$ lies in a certain range (\S\ref{subsection:An explicit solution to the closure condition}).
Substituting this expression into the remaining equation, we show that a solution $y$ exists whenever $3 \leq m \leq k/2$ (\S\ref{subsection:Existence of a solution to closure condition}).

Finally, by considering two transformations of $m$ and $n$ that preserve the shape of the curve, we may assume without loss of generality that $3 \leq m \leq k/2$.
Combining these results, we conclude that for $k \geq 6$, there always exists a choice of parameters for which the curve in \Cref{thm:PhDKal2} is closed, thereby establishing the existence of a Kaleidocycle (\S\ref{subsection:Transformations of a parameter}).

\subsection{Period of the curvature angle}\label{subsection:Period of the curvature angle}
We say a curve $\gamma$ is closed with a period $k\geq 2$
when $\gamma_{n+k}(t)=\gamma_n(t)$ for all $n$ and $t$.
In this case, the absolute value of the curvature angles also has a period $k$.
Hence, we obtain the necessary condition for the curve to be closed that 
\begin{equation}
\begin{split}
\tan^2\frac{\kappa_{n+k}(t)}{2}-\tan^2\frac{\kappa_n(t)}{2}=0,
\end{split}
\end{equation}
holds for all $n$ and $t$.
Rewriting this condition using Jacobi $\mathrm{cn}$-function, we obtain
\begin{equation}\label{tan^2kappa}
\begin{split}
&\tan^2\frac{\kappa_{n+k}(t)}{2}-\tan^2\frac{\kappa_n(t)}{2}\\
&=\frac{\vartheta_1\left(\frac{v}{y};\frac{i}{y}\right)^2\vartheta_2\left(0;\frac{i}{y}\right)^2}{\vartheta_3\left(\frac{v}{y};\frac{i}{y}\right)^2\vartheta_4\left(0;\frac{i}{y}\right)^2}\left\{\mathrm{cn}^2\left[2\tilde{K}\left(\tilde{\mu}_n(t)+\frac{v}{y}k\right)\right]-\mathrm{cn}^2\left[2\tilde{K}\tilde{\mu}_n(t)\right]\right\}=0,
\end{split}
\end{equation}
where we note $\tilde{\mu}_n(t)=\frac{v}{y}\left(n-\frac{1}{2}\right)+\frac{t}{y}$.
Since $\tilde{\mu}_n(t)$ can be regarded as a linear function of $t$ when $v, y$ and $n$ are fixed, $\tilde{\mu}_n(t)$ attains all reals as $t$ varies.
Thus, we only need to check that \eqref{tan^2kappa} holds for $n=0$.
Since $\mathrm{cn}^2$ has the real period $2\tilde{K}$, it follows that $m=kv/y$ must belong to $\mathbb{Z}\setminus(k\mathbb{Z})$.
Hereafter, let $v$ satisfy 
\begin{align}\label{kvy}
    v=\frac{m}{k}y.
\end{align}
\subsection{Closure conditions}\label{subsection:Closure conditions}
Here we obtain the necessary and sufficient conditions satisfied by $v$,$r$, and $y$ for the curve to be closed.
We first rewrite $\gamma_{n+k}(t)=\gamma_n(t)$ in terms of $\tau$ functions. We have
\begin{align}\label{cl2}
&\frac{H_{n+k}}{F_{n+k}}=\frac{H_{n}}{F_{n}},\\
& n+k-2\displaystyle\frac{\partial\log F_{n+k}}{\partial z}=n-2\displaystyle\frac{\partial\log F_{n}}{\partial z}. \nonumber
\end{align} 
We further rewrite the left-hand side using $m$ and \eqref{cl}. Then we obtain
\begin{align}\label{cl4}
&\frac{H_{n+k}}{F_{n+k}}=\frac{H_{n}}{F_{n}}\left(R_{1}R_{3}\right)^{-k}\exp\left(4mr\pi i\right)(-1)^{m},\\
&F_{n+k}=F_{n}\exp\left(k\frac{\Delta_{3}}{\Delta_{3}-\Delta_{1}}z\right)\exp\left[-\pi i\left\{2m\left(\mu_{n}-\frac{1}{2}iv\right)+m^{2}w\right\}\right],\nonumber
\end{align}
where $\mu_{n}=ivn+\frac{z}{\Delta_{3}-\Delta_{1}}$. 
Using \eqref{cl4}, \eqref{cl2} can be rewritten as
\begin{align}\label{cl5}
&\Lambda(v, r, y):=\exp\left\{\pi im\left(4r+1\right)\right\}\left(R_{1}R_{3}\right)^{-k}-1=0, \\
&\Xi(v, r, y):=\frac{\left(\Delta_{3}+\Delta_{1}\right)k-4\pi im}{\Delta_{3}-\Delta_{1}}=0.\label{cl5Xi}
\end{align}
Note that from the definitions of $R_j$ and $\Delta_{j}$, we see that
\begin{equation} 
 \frac{\partial}{\partial r}\log R_j=\Delta_{j} \quad(j=1,2,3,4),
\end{equation}
holds.
From this, we have 
\begin{align}\label{perio6} 
\frac{\partial}{\partial r}\log (\Lambda + 1) =4\pi im-k\left(\Delta_{3}+\Delta_{1}\right)= -\left(\Delta_{3 }-\Delta_{1}\right)\Xi.
\end{align} 
Since $\Delta_{3}-\Delta_{1}\neq 0$ holds, \eqref{cl5Xi} can be written as
\begin{align}
\frac{\partial}{\partial r}\log (\Lambda + 1)=0.\label{cl6}
\end{align}

On the other hand, the closure condition \eqref{cl2} holds for any $n\in\mathbb{Z}$ and $t\in\R$ when the parameters $v$, $r$, and $y$ satisfy $kv=my$, \eqref{cl5} and \eqref{cl6}. Thus, they give necessary and sufficient condition for the curve given by Theorem \ref{thm:PhDKal2} to be a closed curve for every $t\in\R$.
\subsection{An explicit solution to the condition \texorpdfstring{$\frac{\partial}{\partial r}\log (\Lambda + 1)=0$}{for the derivative of Lambda}}\label{subsection:An explicit solution to the closure condition}
Here we solve \eqref{cl6} with respect to $r$.
From the period of elliptic theta function, we see the period and symmetry of $\Lambda(v, r, y)$ with respect to $r$ as follows: 
$\Lambda(v, r+1, y)=\Lambda(v, r, y)$, $\Lambda(v, -r, y)=\Lambda(v, r, y)^{*}$.
Therefore, when searching for $(v, r, y)$ such that $\Lambda(v, r, y)=0$, we do not lose generality by assuming $0\leq r\leq 1/2$. We exclude the case $r=0, 1/2$ because $\gamma$ forms a plane curve. Hereafter, we assume $0<r<1/2$.
Since
\begin{equation}
\begin{split}
&2u_{1}=\vartheta_4\left(0\right)\vartheta_2\left(0\right)\left(\vartheta_4\left(2r\right)\vartheta_2\left(iv\right)-\vartheta_2\left(2r\right)\vartheta_4\left(iv\right)\right),\\
&u_{1}\left(\Delta_{3}+\Delta_{1}\right)=\vartheta_4\left(0\right)\vartheta_2\left(0\right)\left(\vartheta_2\left(2r\right)\vartheta_4^{'}\left(iv\right)-\vartheta_4\left(2r\right)\vartheta_2^{'}\left(iv\right)\right),\\
\end{split}
\end{equation}
holds, we have
\begin{equation}\label{close1}
\begin{split}
-&\frac{u_{1}}{k}\frac{\partial}{\partial r}\log (\Lambda + 1)=-u_{1}\left(\Delta_{3}+\Delta_{1}\right)+4\pi i \frac{v}{y}u_{1}\\
&=\vartheta_4\left(0\right)\vartheta_2\left(0\right)\left(\vartheta_4\left(2r\right)\vartheta_2^{'}\left(iv\right)-\vartheta_2\left(2r\right)\vartheta_4^{'}\left(iv\right)\right)\\
&+2\pi i \frac{v}{y}\vartheta_4\left(0\right)\vartheta_2\left(0\right)\left(\vartheta_4\left(2r\right)\vartheta_2\left(iv\right)-\vartheta_2\left(2r\right)\vartheta_4\left(iv\right)\right)=0.
\end{split}
\end{equation}
Arranging this we obtain
\begin{equation}\label{cnphi}
\frac{\vartheta_4\left(0\right)\vartheta_2\left(2r\right)}{\vartheta_2\left(0\right)\vartheta_4\left(2r\right)}=\frac{\vartheta_4\left(0\right)\left(2\pi i\frac{v}{y}\vartheta_2\left(iv\right)+\vartheta_2^{'}\left(iv\right)\right)}{\vartheta_2\left(0\right)\left(2\pi i\frac{v}{y}\vartheta_4\left(iv\right)+\vartheta_4^{'}\left(iv\right)\right)}.
\end{equation}
The right-hand side of this equation does not contain $r$. The left-hand side is nothing but Jacobi $\mathrm{cn}$-function.
\begin{equation}
\mathrm{cn}\left[4Kr,\xi\right]=\frac{\vartheta_4\left(0\right)\vartheta_2\left(2r\right)}{\vartheta_2\left(0\right)\vartheta_4\left(2r\right)},
\end{equation}
where $\xi=\vartheta_2\left(0\right)^2/\vartheta_3\left(0\right)^2$ is an elliptic modulus and $K=\pi\vartheta_3\left(0\right)^2/2$ is a complete elliptic integral of the first kind.
We define the following continuous real-valued function $\varphi$ for the ratio of $m$ and $k$ and for $y>0$.
\begin{equation}
\begin{split}
\varphi\left(\frac{m}{k},y\right):=\frac{\vartheta_4\left(0\right)\left(2\pi i\frac{m}{k}\vartheta_2\left(i\frac{m}{k}y\right)+\vartheta_2^{'}\left(i\frac{m}{k}y\right)\right)}{\vartheta_2\left(0\right)\left(2\pi i\frac{m}{k}\vartheta_4\left(i\frac{m}{k}y\right)+\vartheta_4^{'}\left(i\frac{m}{k}y\right)\right)}
&=\frac{\vartheta_2\left(0;\frac{i}{y}\right)\vartheta_4^{'}\left(\frac{m}{k};\frac{i}{y}\right)}{\vartheta_4\left(0;\frac{i}{y}\right)\vartheta_2^{'}\left(\frac{m}{k};\frac{i}{y}\right)}.
\end{split}
\end{equation}
When $-1<\varphi\left(m/k,y\right)<1$ is satisfied, 
\eqref{cnphi} determines a unique $r=r(y)$ in $(0,1/2)$
 because on the interval $0<r<1/2$, the function
$r\mapsto \mathrm{cn}(4Kr,\xi)$ is strictly decreasing from $1$ to $-1$.

\subsubsection{Existence of an inverse function}\label{subsubsection:Proof of existence of inverse function}
Here we show that for a certain range of $y$, $-1<\varphi\left(m/k,y\right)\leq 0$ holds when $0<m\leq k/2$. 
Then we find $r=r(y)$ explicitly using elliptic integrals.
First, from the zeros and evenness of elliptic theta function we see the following: For any $y>0$, $\varphi(1/2,y)=0$ and $\varphi(m/k,y)\neq 0 \quad(0<m/k<1/2)$.
Therefore, when $m/k=1/2$, $r=1/4$.
Next, consider the limit of $\varphi(m/k,y)$. From the definition of elliptic theta function, when $u$ is fixed and $y\rightarrow +0$, we see that
\begin{align}
\vartheta_2 \left(u; \frac{i}{y}\right)
&= 2q^{\frac{1}{4}}\cos \pi u +  \mathcal{O} (q^{\frac{9}{4}}), \\
\vartheta_4 \left(u; \frac{i}{y}\right)
&= 1 +  \mathcal{O} (q), \nonumber\\
\vartheta_2^{'} \left(u; \frac{i}{y}\right)
&= -2\pi q^{\frac{1}{4}}\sin \pi u +  \mathcal{O} (q^{\frac{9}{4}}), \nonumber\\
\vartheta_4^{'} \left(u; \frac{i}{y}\right)
&= 4\pi q\sin 2\pi u +  \mathcal{O} (q^{4}),\nonumber
\end{align}
where we denote $q=e^{-\frac{\pi}{y}}$.
From this we have
\begin{align}
    \lim_{y\to +0}\varphi\left(\frac{m}{k},y\right)=\lim_{y\to +0}\frac{\vartheta_2\left(0;\frac{i}{y}\right)\vartheta_4^{'}\left(\frac{m}{k};\frac{i}{y}\right)}{\vartheta_4\left(0;\frac{i}{y}\right)\vartheta_2^{'}\left(\frac{m}{k};\frac{i}{y}\right)}=0\quad(0<\frac{m}{k}<\frac{1}{2}).
\end{align}
On the other hand, the following hold when $y\rightarrow \infty$
\begin{align}
p^{\left(\frac{m}{k}\right)^2}\vartheta_2 \left(i\frac{m}{k}y; iy\right)
&= p^{\left(\frac{1}{2}-\frac{m}{k}\right)^2} +  \mathcal{O} (p^{\left(\frac{1}{2}+\frac{m}{k}\right)^2}), \\
p^{\left(\frac{m}{k}\right)^2}\vartheta_4 \left(i\frac{m}{k}y; iy\right)
&= p^{\left(\frac{m}{k}\right)^2} +  \mathcal{O} (p^{\left(1-\frac{m}{k}\right)^2}), \nonumber\\
ip^{\left(\frac{m}{k}\right)^2}\vartheta_2^{'} \left(i\frac{m}{k}y; iy\right)
&= \pi p^{\left(\frac{1}{2}-\frac{m}{k}\right)^2} +  \mathcal{O} (p^{\left(\frac{1}{2}+\frac{m}{k}\right)^2}), \nonumber\\
ip^{\left(\frac{m}{k}\right)^2}\vartheta_4^{'} \left(i\frac{m}{k}y; iy\right)
&=-2\pi p^{\left(1-\frac{m}{k}\right)^2} +  \mathcal{O} (p^{\left(1+\frac{m}{k}\right)^2})\quad(0<\frac{m}{k}<\frac{1}{2}),\nonumber
\end{align}
where we denote $p=e^{-\pi y}$.
From this we have
\begin{align}
    \lim_{y\to \infty}\varphi\left(\frac{m}{k},y\right)=\lim_{y\to \infty}\frac{\vartheta_4\left(0\right)\left(2\pi i\frac{m}{k}\vartheta_2\left(i\frac{m}{k}y\right)+\vartheta_2^{'}\left(i\frac{m}{k}y\right)\right)}{\vartheta_2\left(0\right)\left(2\pi i\frac{m}{k}\vartheta_4\left(i\frac{m}{k}y\right)+\vartheta_4^{'}\left(i\frac{m}{k}y\right)\right)}=-\infty\quad(0<\frac{m}{k}<\frac{1}{2}).
\end{align}
Hence, when $m$ is fixed within $0<m<k/2$, there exists $y$ such that $\varphi\left(m/k,y\right)=-1$. We denote the smallest one among them by $y_0(m/k)$.
Summarizing the above, when $m$ is fixed within $0<m<k/2$ and $y$ moves in the range $0<y< y_0(m/k)$, $r=r(y)$ satisfying \eqref{cl6} is uniquely determined within the range of $0<r<1/2$.
By taking the inverse function of $\mathrm{cn}$ in \eqref{cnphi}, we obtain
\begin{equation}\label{cnphi2}
 \begin{split}
 &r\left(y;\frac{m}{k}\right)=\frac{1}{4K}\int^{1}_{\varphi(m/k,y)}\frac{dx}{\sqrt{(1-x^2)(1-\xi^2+\xi^2x^2)}}.
  \end{split}
\end{equation}
This also holds when $m/k=1/2$, hence we see that $1/4\leq r<1/2$.
If we assume $-1/2<r<0$, the same calculation yields 
\begin{equation}\label{cnphi3} 
 \begin{split} 
 r\left(y;\frac{m}{k}\right)=-\frac{1}{4K}\int^{1}_{\varphi(m/k,y)}\frac{dx}{\sqrt{(1-x^2)(1-\xi^2+\xi^2x^2)}}.
 \end{split} 
\end{equation}
From this, we see $-1/2< r \leq -1/4$ holds when $0<m\leq k/2$.
The sign of $r$ is chosen so that $\sin\lambda = \langle\widetilde{B}_{n},\widetilde{N}_{n-1}\rangle>0$ holds.

\subsection{Existence of a solution to the condition
\texorpdfstring{$\Lambda=0$}{of Lambda}
}\label{subsection:Existence of a solution to closure condition}
By substituting $r=r(y)$ obtained in the previous section into $\Lambda(v, r, y)$, we prove that when $3\leq m\leq k/2$, $\Lambda(v, r(y), y)=0$ has a solution in $0<y<y_0$.
Note that $\Lambda + 1$ can also be written as 
\begin{equation}\label{Lambdarewrite}
\begin{split}
\Lambda+1&=\exp\left\{\pi im\left(4r+1\right)\right\}\left[\frac{\vartheta_1\left(\frac{1}{2}iv+r;iy\right)\vartheta_3\left(\frac{1}{2}iv+r;iy\right)}{\vartheta_1\left(-\frac{1}{2}iv+r;iy\right)\vartheta_3\left(-\frac{1}{2}iv+r;iy\right)}\right]^k\\
&=\exp{\left(\pi i m\right)}\left[\frac{\vartheta_1\left(\frac{1}{2}\frac{m}{k}+\frac{r}{iy};\frac{i}{y}\right)\vartheta_3\left(\frac{1}{2}\frac{m}{k}+\frac{r}{iy};\frac{i}{y}\right)}{\vartheta_1\left(-\frac{1}{2}\frac{m}{k}+\frac{r}{iy};\frac{i}{y}\right)\vartheta_3\left(-\frac{1}{2}\frac{m}{k}+\frac{r}{iy};\frac{i}{y}\right)}\right]^k,
 \end{split}
\end{equation}
by transforming the parameter $y\rightarrow1/y$.
Since $\Lambda+1$ is a complex number of magnitude $1$, one may denote
\begin{align}
    \Lambda+1=\exp{\left(i\Psi\right)},
\end{align}
using a continuous real-valued function $\Psi=\Psi(v, r, y)$.
When
\begin{align}
    \left |\Psi(v, r(y_0), y_0)-\lim_{y\to +0}\Psi(v, r(y), y)\right |,
\end{align}
is greater than $2\pi$, then there exists a solution to $\Lambda(v, r(y), y)=0$ in the range $0<y< y_0(m/k)$.
Here, we compute the above quantities and show that such a $y$ exists if $3\leq m\leq k/2$.
From a representation by an infinite product of elliptic theta functions,
\begin{align}
   \vartheta_1 \left(u; \frac{i}{y}\right)
&= 2q^{\frac{1}{4}}\sin (\pi u)\prod_{n=1}^\infty \left(1-q^{2n}\right)\left(1-q^{2n}e^{2\pi i u}\right)\left(1-q^{2n}e^{-2\pi i u}\right),\\
\vartheta_3 \left(u; \frac{i}{y}\right)
&= \prod_{n=1}^\infty \left(1-q^{2n}\right)\left(1+q^{2n-1}e^{2\pi i u}\right)\left(1+q^{2n-1}e^{-2\pi i u}\right), \nonumber
\end{align}
we obtain
\begin{equation}
\begin{split}
 &\frac{\vartheta_1 \left(\frac{r}{iy}+\frac{m}{2k}; \frac{i}{y}\right)}{\vartheta_1 \left(\frac{r}{iy}-\frac{m}{2k}; \frac{i}{y}\right)}\\
 &=\frac{\sin\left(\pi\frac{r}{iy}+\pi\frac{m}{2k}\right)}{\sin\left(\pi\frac{r}{iy}-\pi\frac{m}{2k}\right)}\exp{\left(-2i\sum_{n=1}^\infty \frac{1}{n}\frac{\sin (\pi n\frac{m}{k})}{1-q^{2n}}\left(q^{2n(1-r)}-q^{2n(1+r)}\right)\right)},\\
    &\frac{\vartheta_3 \left(\frac{r}{iy}+\frac{m}{2k}; \frac{i}{y}\right)}{\vartheta_3 \left(\frac{r}{iy}-\frac{m}{2k}; \frac{i}{y}\right)}\\
    &=\exp{\left(-2i\sum_{n=1}^\infty \frac{(-1)^n}{n}\frac{\sin (\pi n\frac{m}{k})}{1-q^{2n}}\left(q^{n(1-2r)}-q^{n(1+2r)}\right)\right)}.
 \end{split}
\end{equation}
From the explicit expression of $r$ in \eqref{cnphi2}, it follows that $1/4\leq r<1/2$ when $0<m\leq k/2$, which yields the following limit.
\begin{equation}
\begin{split}
    &\sum_{n=1}^\infty \left| \frac{(-1)^n}{n}\frac{\sin (\pi n\frac{m}{k})}{1-q^{2n}}\left(q^{n(1-2r)}-q^{n(1+2r)}\right)\right|\\
    &\leq\sum_{n=1}^\infty \frac{1}{1-q^{2}}q^{n(1-2r)}+\sum_{n=1}^\infty \frac{1}{1-q^{2}}q^{n(1+2r)}\\
    &=\frac{1}{1-q^{2}}\left(\frac{q^{(1-2r)}}{1-q^{(1-2r)}}+\frac{q^{(1+2r)}}{1-q^{(1+2r)}}\right)\rightarrow 0\quad(y\rightarrow +0).\\
    \end{split}
\end{equation}
This calculation remains valid even when $r$ is determined by \eqref{cnphi3}, that is, in the case where $-1/2 < r \leq -1/4$.
Therefore, we obtain
\begin{align}
    \lim_{y\to +0}\frac{\vartheta_3 \left(\frac{r}{iy}+\frac{m}{2k}; \frac{i}{y}\right)}{\vartheta_3 \left(\frac{r}{iy}-\frac{m}{2k}; \frac{i}{y}\right)}=1.
\end{align}
Similarly, we obtain
\begin{align}
    \lim_{y\to +0}\frac{\vartheta_1 \left(\frac{r}{iy}+\frac{m}{2k}; \frac{i}{y}\right)}{\vartheta_1 \left(\frac{r}{iy}-\frac{m}{2k}; \frac{i}{y}\right)}=\exp{\left(\pi i\frac{m}{k}\right)}.
\end{align}
From the above, we have
\begin{align}
    \lim_{y\to +0}\Psi(v,r(y), y)=2\pi m.
\end{align}
From the pseudoperiod of elliptic theta function, we have
\begin{align}
    \Psi(v,r(y_0), y_0)=3\pi m.
\end{align}
Summarizing the above discussion, we have
\begin{align}
    \left |\Psi(v,r(y_0), y_0)-\lim_{y\to +0}\Psi(v,r(y), y)\right |=\pi m.
\end{align}
Therefore, when $3\leq m\leq k/2$, there is always a solution in $\Lambda(v,r(y), y)=0$.
\subsection{Transformations of a parameter and a variable}\label{subsection:Transformations of a parameter}
We show that the set of our explicit solutions admits two
types of self-bijections that transform
\begin{itemize}
    \item $m\rightarrow m+k$, $n\rightarrow k-n$,
    \item $m\rightarrow -m$, $n\rightarrow n$.
\end{itemize}
A finite number of combinations of these transformations can make any $m\in\mathbb{Z}\setminus(k\mathbb{Z})$ fall in the interval $0<m\leq k/2$. Combining with the previous subsection, this proves the existence of Kaleidocycles for $k\ge 6$.

In the following, we will construct the self-bijections and prove that the Kaleidocycles obtained by the self-bijections are essentially identical to the Kaleidocycles before the transformation.
A combination of the two bijection exchanges a solution of the semi-discrete potential mKdV equation with one of the semi-discrete sine-Gordon equation, and the results are summarized in Table \ref{table:deformationequations}.
\subsubsection{Transformation $m\rightarrow m+k$, $n\rightarrow k-n$}\label{subsubsection:m+k}

Consider transformation $v\rightarrow v+y$.
Then, $m$ changes to $k + m$.
The period of elliptic theta function $\varphi(m/k+1,y)=-\varphi(m/k,y)$
and \eqref{cnphi3} implies the change $r\rightarrow -r-1/2$.
From the period of elliptic theta function, we see that $\Lambda(v+y, -r-1/2, y)=\Lambda(v, r, y)^{*}$ holds. Thus, this transformation preserves the closure conditions.
From the original binormal vectors $\widetilde{B}_{n}=\widetilde{B}_{n}(t;v,r,y)$,
the new binormal vectors are computed as
\begin{equation}
    \begin{split}
        (-1)^{k-n}\widetilde{B}_{k-n}\left(t-y/2; v, r, y\right).
    \end{split}
\end{equation}
That is, the new Kaleidocycle is obtained by shifting the time parameter, reversing the orientation of the central curve, and flipping every other hinge,
which is equivalently encoded by the torsion angle update  $\lambda\mapsto \pi - \lambda$.
The argument remains valid when $r$ is calculated by \eqref{cnphi2}.

\subsubsection{Transformation $m\rightarrow -m$, $n\rightarrow n$}\label{subsubsection:-m}
Consider the transformation $v\rightarrow -v$ and $r\rightarrow -r$.
Then, $m$ changes to $-m$.
From the period of elliptic theta function, we see that $\Lambda(-v, -r, y)=\Lambda(v, r, y)$ holds. Thus, this transformation preserves the closure conditions.
From the original binormal vectors $\widetilde{B}_{n}=\widetilde{B}_{n}(t;v,r,y)$,
the new binormal vectors are computed as
\begin{equation}
    \begin{split}
 -\widetilde{B}_{n}\left(-t; v, r, y\right).
    \end{split}
\end{equation}
That is, the new Kaleidocycle is obtained by reversing the time parameter and flipping all hinges.


Summarizing the results of this section, we have the following theorem.
\begin{theorem}\label{thm:kaleidocycle}
For $k\geq 6$ and an integer satisfying
$3\le m\le \frac{k}{2}$,
there exist $y>0$, $v \in \R \setminus{(y\mathbb{Z})}$, $r \in \R \setminus{(1/2\mathbb{Z})}$
such that
$kv=my$ and \eqref{cl5} and \eqref{cl6} hold.
Hence the curve given in \Cref{thm:PhDKal2} is closed for every $t\in\mathbb R$
and defines a $k$-Kaleidocycle whose
torsion angle is given by \eqref{lam2}.
\end{theorem}
\begin{remark}\label{remark:mKdVandsG}
When a curve is deformed according to \eqref{deform2}, a combination of the two transformations described above exchanges the deformation by the semi-discrete potential mKdV equation and the deformation by the semi-discrete sine-Gordon equation.
Consider the transformation $v\rightarrow -v-y$, $r\rightarrow r+1/2$ and $n\rightarrow k-n$.
Then the new $\Gamma$ and $C$ are computed as
\begin{equation}\label{transGammaC}
    \begin{split}
        &\Gamma_{\pm}\left(r+1/2,y\right)=\Gamma_{\mp}\left(r,y\right),\\
        &C_{\pm}\left(-v-y,r+1/2,y\right)=C_{\mp}\left(v,r,y\right).
    \end{split}
\end{equation}
The new potential function is computed as
\begin{equation}
 (-1)^{k-n+1}\Theta_{k-n}\left(-t+y/2;v,y\right).
\end{equation}
The coefficients of the semi-discrete potential mKdV equation and of the semi-discrete sine-Gordon equation are transformed as
\begin{equation}
    \left(1+\cos\lambda\right)\rho_{+}\rightarrow -\left(1-\cos\lambda\right)\rho_{-},\quad\left(1-\cos\lambda\right)\rho_{-}\rightarrow -\left(1+\cos\lambda\right)\rho_{+}.
\end{equation}
The transformation $\Theta_n\rightarrow\widetilde{\Theta}_{n}:=(-1)^{k-n+1}\Theta_{k-n}$ exchanges a solution of the semi-discrete potential mKdV equation \eqref{sdm-1} with one of the semi-discrete sine-Gordon equation \eqref{sdsg-1}. 
We see that this equation-level transformation is also achieved at the solution level by transforming parameters and a variable.
This is also consistent with the result of the transformation of $\Gamma$ and $C$.
\end{remark}
Table \ref{table:deformationequations} summarizes the relationships among the three deformation equations.
\begin{table}[htb]
\centering
  \caption{Relation between the three deformation equations\label{table:deformationequations}}
  \begin{tabular}{|l||l|l|}  \hline
   \begin{tabular}{c} 
   \end{tabular}
   & Deformation equation of the curve & Equation for the potential function \\ \hline\hline
    $C, \Gamma\in\mathbb{R}$ & 
    \begin{tabular}{c}
    $\dot{\gamma}_{n}=\left(
    \begin{array}{c}
0\\
0\\
2\eta_{1}\left(\beta_{1}-2\beta_{3}\right)-\beta_{6}-C
    \end{array}
  \right)$ \\
    $+2\beta_{1}\left(\eta_{1}T_{n}-\eta_{1}\tan\frac{\kappa_{n}}{2}\widetilde{N}_{n}+\eta_{2}\tan\frac{\kappa_{n}}{2}\widetilde{B}_{n}\right)$  \\
     \end{tabular}
 & 
 \begin{tabular}{c}
  $\displaystyle\dot{\Theta}_{n}=4\beta_{1}\left(\eta_{1}\cos\lambda+\eta_{2}\sin\lambda\right)$\\
  $\displaystyle\times\tan\left(\frac{\Theta_{n+1}-\Theta_{n-1}}{4}\right)$
  \end{tabular}
   \\ \hline
   \multicolumn{3}{|l|}{\begin{tabular}{c}$\downarrow$ Specialize the deformation equation by choosing \\
   rigid transformation parameters $C$ and $\Gamma$ as in Theorem \ref{thm:CGamma}.\end{tabular}} \\
    \hline
    $C_{+}, \Gamma_{+}$ & 
   $\displaystyle\left\langle\dot{\gamma}_{n}, \widetilde{B}_{n}\right\rangle=0,\quad\langle\dot{\gamma}_{n}, \dot{\gamma}_{n}\rangle=\rho^{2}_{+}$
     & 
     \begin{tabular}{c}
     $\displaystyle\frac{\dot{\Theta}_{n+1}+\dot{\Theta}_{n}}{2}=\left(1+\cos\lambda\right)\rho_{+}$\\
     $\displaystyle\times\sin\left(\frac{\Theta_{n+1}-\Theta_{n}}{2}\right)$
     \end{tabular}
      \\ \hline 
     \multicolumn{3}{|l|}{\begin{tabular}{c}
     $\updownarrow$ As in Remark \ref{remark:mKdVandsG}, these equations exchange with \\
     the transformation of parameters $v, r$ and a variable $n$ of the solution.
      \end{tabular}
     }  \\ \hline
    $C_{-}, \Gamma_{-}$ & 
    $\displaystyle\left\langle\dot{\gamma}_{n}, \widetilde{B}_{n}\right\rangle=0,\quad\langle\dot{\gamma}_{n}, \dot{\gamma}_{n}\rangle=\rho^{2}_{-}$ &
     \begin{tabular}{c}
   $\displaystyle\frac{\dot{\Theta}_{n+1}-\dot{\Theta}_{n}}{2}=\left(1-\cos\lambda\right)\rho_{-}$\\
   $\displaystyle\times\sin\left(\frac{\Theta_{n+1}+\Theta_{n}}{2}\right)$
   \end{tabular}
    \\ \hline
  \end{tabular}
\end{table}

\begin{example}
The classical 6-Kaleidocycle (Figure \ref{fig:K6} Left) with $\lambda=\pi/2$ 
(e.g., see \cite{FG:mobility}) is obtained by $m=3$ and $(v,r,y) \approx (0.6400, -0.2500, 1.279)$.
\end{example}

\begin{remark}
By using \eqref{cl5} and \eqref{cl}, we see that $\widetilde{B}_{0}=(-1)^{m}\widetilde{B}_{k}$.
That is, the parity of $m$ corresponds to the orientability of the Kaleidocycle.
\end{remark}

\section{Examples}\label{sec:numerical}

In this section, we present examples of Kaleidocycles 
constructed by Theorem \ref{thm:kaleidocycle}
and the semi-discrete K-surfaces obtained as their motion trajectory as explained in Remark \ref{rem:k-surface}.

Numerical experiments using our computer codes~\cite{code} suggest that
the $8$-, $9$- and $15$-Kaleidocycles in Figs. \ref{fig:K8},
\ref{fig:K9},
and \ref{fig:K15} have the minimum torsion angles
among non-orientable $8$-, $9$-, and $15$-Kaleidocycles,
respectively~\cite{seimitsu}. In other words, they are 
the M\"obius Kaleidocycles~\cite{KSFG:Kaleidocycle} (see \Cref{rem:dof}).

For Fig. \ref{fig:K15_m5}, 
the torsion angle is larger than the above minimum one.
However, it has a self-linking number $7.5$.
As the self-linking number of a Kaleidocycle is a half-integer,
these $15$-Kaleidocycles depicted in 
Figs. \ref{fig:K15} and \ref{fig:K15_m5}
belong to different connected components of the configuration space of $15$-Kaleidocycles.
We numerically find that the latter attains the local minima of the torsion angle.
We present the following conjecture based on our extensive numerical experiments:
\begin{conj}\label{conj:mobius}
 Any Kaleidocycle obtained by Theorem \ref{thm:kaleidocycle}
 has a minimal or maximal torsion angle.
 In other words, the projection from the space defined by 
\eqref{eq:configuration_binormal}
onto the $\lambda$-coordinate 
  attains a critical value.
\end{conj}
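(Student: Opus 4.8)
The plan is to recast Conjecture~\ref{conj:mobius} as a statement about infinitesimal flexes and to attack it through the explicit closedness conditions of Theorem~\ref{thm:kaleidocycle}. Write $\widetilde{\mathcal{M}}_k$ for the total configuration space obtained by letting $\lambda$ vary in~\eqref{eq:configuration_binormal}, and let $\pi\colon\widetilde{\mathcal{M}}_k\to\R$ be the projection to the $\lambda$-coordinate. That $\pi$ attains a critical value at the constructed Kaleidocycle means that $d\lambda$ lies in the span of the differentials of the defining equations of~\eqref{eq:configuration_binormal} at that point; dually, it means that there is no infinitesimal flex $(\xi_0,\dots,\xi_{k-1})$, $\xi_n\in T_{\widetilde{B}_n}S^2$, satisfying the linearised constant-torsion conditions ($\langle\xi_{n-1},\widetilde{B}_n\rangle+\langle\widetilde{B}_{n-1},\xi_n\rangle$ independent of $n$) and the linearised closure condition $\sum_{n=0}^{k-1}(\xi_{n+1}\times\widetilde{B}_n+\widetilde{B}_{n+1}\times\xi_n)=0$ for which the associated $\dot\lambda$ is nonzero, where $-\dot\lambda\sin\lambda$ equals that common value. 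So the first step is to write down this linear system and, as in the derivation of~\eqref{deform2}, to introduce a velocity potential for such flexes using the discrete Frenet--Serret formula~\eqref{discrete Frenet-Serret formula}, reducing the closure condition to a scalar recurrence in $n$ and the map $\xi\mapsto\dot\lambda$ to a single linear functional on its solution space.

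The mechanism that makes the conjecture plausible -- and, I expect, the core of any proof -- is that the closedness conditions~\eqref{cl5} are not a transversal pair of equations but encode a \emph{discriminant}: by~\eqref{cl6}--\eqref{cl7}, $\Xi=0$ is equivalent to $\partial_r\Lambda=0$, so the constructed curve corresponds to a \emph{double} zero of $\Lambda(\,\cdot\,;v,y)$ in $r$, and a double root is precisely the algebraic signature of a fold of a projection. Concretely, for any one-parameter deformation $s\mapsto(v(s),r(s),y(s))$ of the parameters preserving closure with period $k$, differentiating $kv=my$, $\Lambda=0$ and $\Xi=0$ and using $\partial_r\Lambda=0$ at the base point forces, granting the generic non-degeneracy of the relevant theta-function Jacobians (e.g. $\tfrac{m}{k}\partial_v\Lambda+\partial_y\Lambda\neq0$ and $\partial_r\Xi\neq0$), that $v'=r'=y'=0$; since the time parameter $t$ of Theorem~\ref{thm:PhDKal2} drops out of $\Lambda$, $\Xi$ and $\lambda$, the only surviving deformation inside the construction is $t$, along which $\lambda$ is constant. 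Note also that $\partial_r\lambda=\tfrac1i(\Delta_1-\Delta_3)\neq0$ because $\Delta_3\neq\Delta_1$, so $\lambda$ does vary transversally to the closed locus; feeding the non-vanishing of $\partial_r^2\Lambda$ into a Morse-type second-order expansion should then turn the fold into a genuine one-sided extremum, which is the ``minimal or maximal'' part of the statement.

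The main obstacle is the passage from the explicit theta family to the full configuration space $\widetilde{\mathcal{M}}_k$: for fixed $k$ and $m$ the constructed Kaleidocycle is an \emph{isolated} point of the ansatz, so the computation above only certifies that no $\lambda$-changing deformation exists \emph{within} the ansatz. To reach Conjecture~\ref{conj:mobius} one must additionally show either (a) that every infinitesimal flex of the constructed Kaleidocycle inside $\widetilde{\mathcal{M}}_k$ is tangent to the ansatz family -- equivalently that the linearised system of the first paragraph has the one-dimensional ``expected'' solution space, which is essentially the still-unproven statement that M\"obius Kaleidocycles have a one-dimensional configuration space -- or (b) that $d\lambda$ already lies in the span of the constraint differentials of~\eqref{eq:configuration_binormal} for a nonzero choice of Lagrange multipliers $(\mu_n)\subset\R$, $\nu\in\R^3$. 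Route (b) looks the most tractable: one would guess the multipliers from the Frenet data -- the $\mu_n$ from $\sin\kappa_n$ or $\tan(\kappa_n/2)$ and $\nu$ from the constant vector in~\eqref{gdot501} -- and verify the identity using the bilinear theta identities of Proposition~\ref{prop:PhDKalbi1} together with the conserved quantity~\eqref{eqn:kappa_difference}. In every route, the genuine difficulty is obtaining a handle on the linearised closure operator that is uniform in $k$; this is exactly the analytic ingredient that the numerical evidence for M\"obius Kaleidocycles does not supply, and it is why the assertion is stated only as a conjecture.
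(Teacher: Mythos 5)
The statement you are addressing is presented in the paper only as a conjecture supported by numerical experiments; the paper contains no proof of it, so there is nothing to compare your argument against on that side. The question is therefore whether your proposal itself constitutes a proof, and it does not: it is a strategy with a gap that you yourself identify in the final paragraph. Your observation that the closure conditions \eqref{cl5} degenerate, via \eqref{cl6}--\eqref{cl7}, to a double zero of $\Lambda$ in $r$ is a correct and genuinely suggestive reading of Theorem \ref{thm:kaleidocycle} --- it does exhibit the constructed point as a fold of a projection. But that fold lives in the three-dimensional parameter space $(v,r,y)$ of the theta ansatz, whereas Conjecture \ref{conj:mobius} asserts criticality of the $\lambda$-projection on the variety cut out by \eqref{eq:configuration_binormal} inside $(S^2)^{k+1}\times\R$. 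Rigidity of the ansatz parameters together with $\dot\lambda=0$ along the $t$-flow of Theorem \ref{thm:PhDKal2} controls only those infinitesimal flexes tangent to the theta family; it says nothing about the (a priori $(k-6)$-dimensional, or larger, at a singular point) space of flexes transverse to it. Neither of your proposed repairs is carried out: route (a) is essentially the unproven one-degree-of-freedom property of M\"obius Kaleidocycles, and route (b) only guesses where the Lagrange multipliers might come from without verifying the required identity against the bilinear relations of Proposition \ref{prop:PhDKalbi1}.

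A secondary caution: even granting the fold structure in $(v,r,y)$, the step from ``double root of $\Lambda$ in $r$'' to ``one-sided extremum of $\lambda$'' requires more than $\partial_r^2\Lambda\neq 0$ and $\partial_r\lambda\neq 0$; one must compare the second-order behaviour of $\lambda$ along \emph{every} curve in the configuration space through the point, not merely along curves in the ansatz parameters, so the Morse-type expansion you invoke again presupposes the transversality statement that is missing. In short, your proposal is a reasonable research plan that correctly locates the difficulty, but it does not prove the statement, and the statement remains a conjecture exactly for the reason you name: the absence of a $k$-uniform handle on the linearised closure operator of \eqref{eq:configuration_binormal} at the constructed point.
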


As briefly discussed in \S \ref{sec:intro} and \Cref{rem:dof},
these Kaleidocycles with critical torsion angles
numerically exhibit special properties, including the single degree of freedom~\cite{seimitsu,KSFG:Kaleidocycle,JF:Kaleidocycle}. Our construction produces this interesting family of Kaleidocycles and may shed light on the rigorous proof for some of their unique properties.

\begin{figure}[ht]
    \centering
    \includegraphics[width=0.25\textwidth]{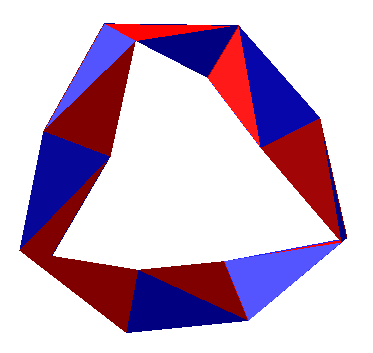}
    \includegraphics[width=0.25\textwidth]{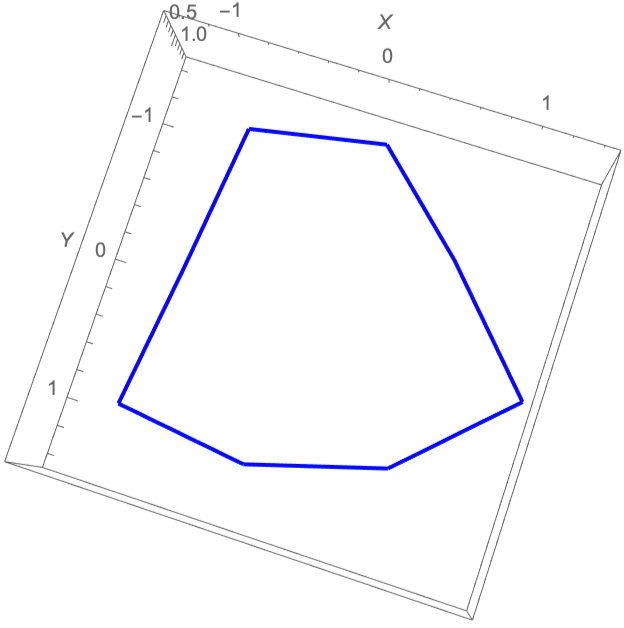}
    \includegraphics[width=0.20\textwidth]{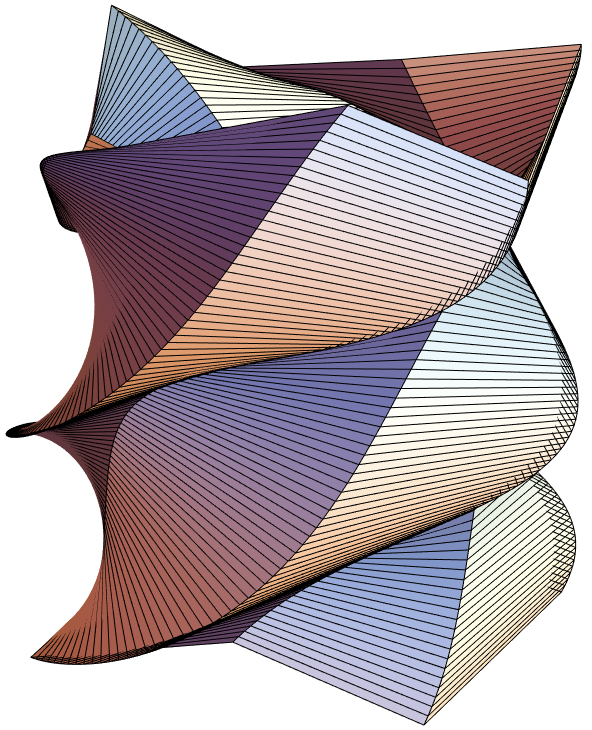}
    \caption{(Left) 
    An example of 8-Kaleidocycle with $(v,r,y) \approx (0.4000, -0.2834, 1.067)$ having $m=3$ and $\cos\lambda\approx 0.4700$.
    (Middle) 8-Kaleidocycle as a discrete curve.
    (Right) the semi-discrete K-surface obtained as its trajectory.}
    \label{fig:K8}
\end{figure}

\begin{figure}[ht]
    \centering
    \includegraphics[width=0.25\textwidth]{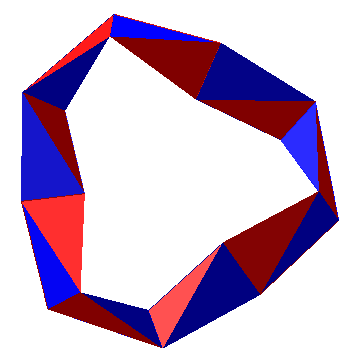}
    \includegraphics[width=0.25\textwidth]{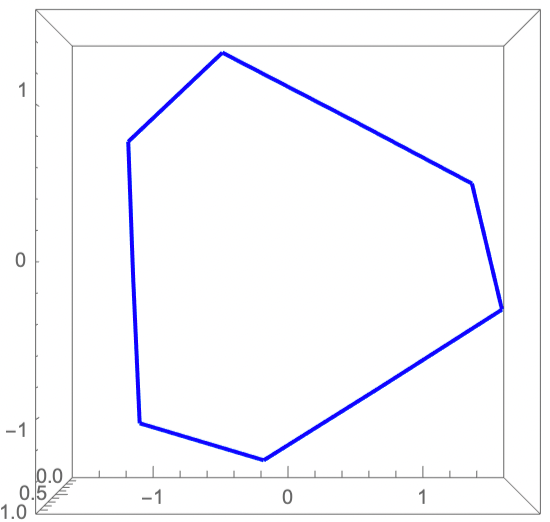}
    \includegraphics[width=0.20\textwidth]{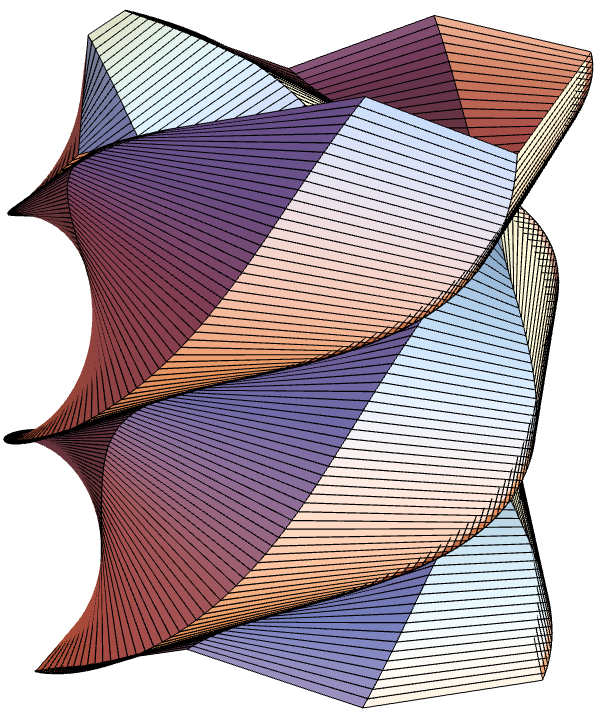}
    \caption{(Left) 
    An example of 9-Kaleidocycle with $(v,r,y) \approx (0.3422, -0.2890, 1.027)$ having $m=3$ and
    $\cos\lambda\approx 0.5852$.
    (Middle) 9-Kaleidocycle as a discrete curve.
    (Right) the semi-discrete K-surface obtained as its trajectory.}
    \label{fig:K9}
\end{figure}

\begin{figure}[ht]
    \centering
    \includegraphics[width=0.25\textwidth]{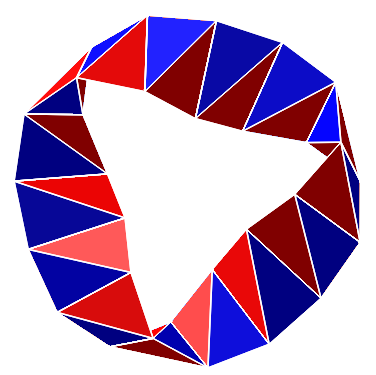}
    \includegraphics[width=0.25\textwidth]{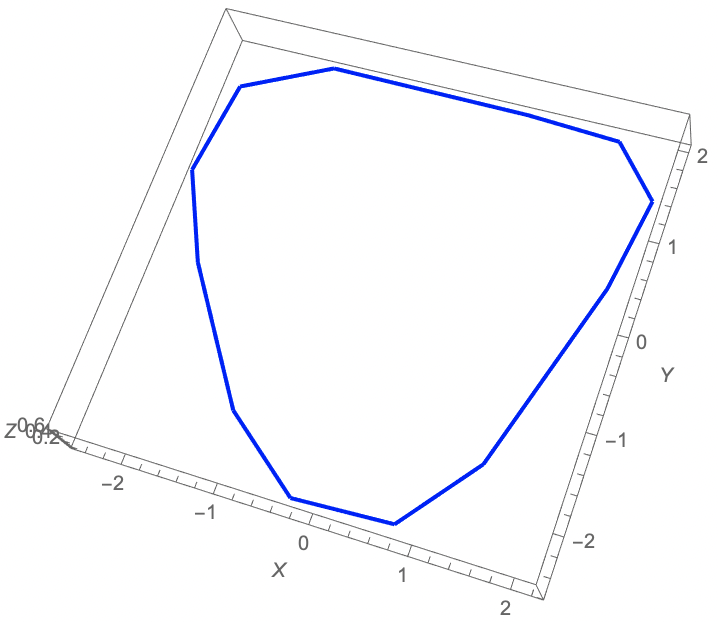}
    \includegraphics[width=0.20\textwidth]{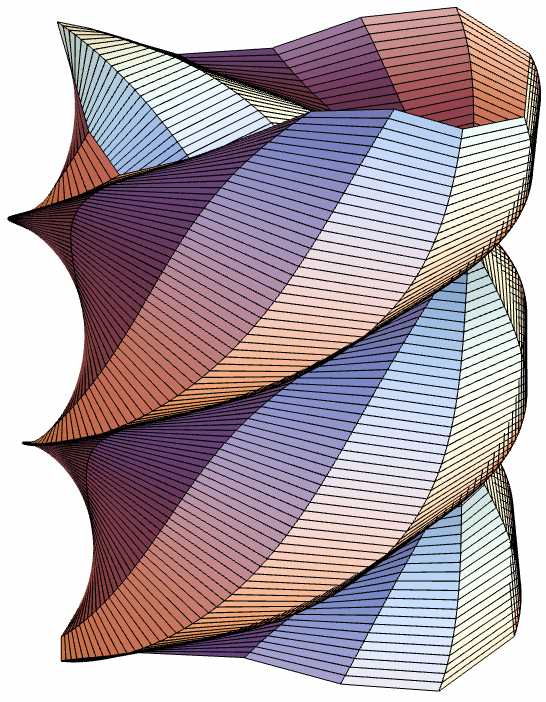}
    \caption{(Left) 
    An example of 15-Kaleidocycle with $(v,r,y) \approx (0.1894, -0.2996, 0.9470)$ having $m=3$ and 
    $\cos\lambda\approx 0.8533$.
    (Middle) 15-Kaleidocycle as a discrete curve.
    (Right) the semi-discrete K-surface obtained as its trajectory.}
    \label{fig:K15}
\end{figure}

\begin{figure}[ht]
    \centering
    \includegraphics[width=0.25\textwidth]{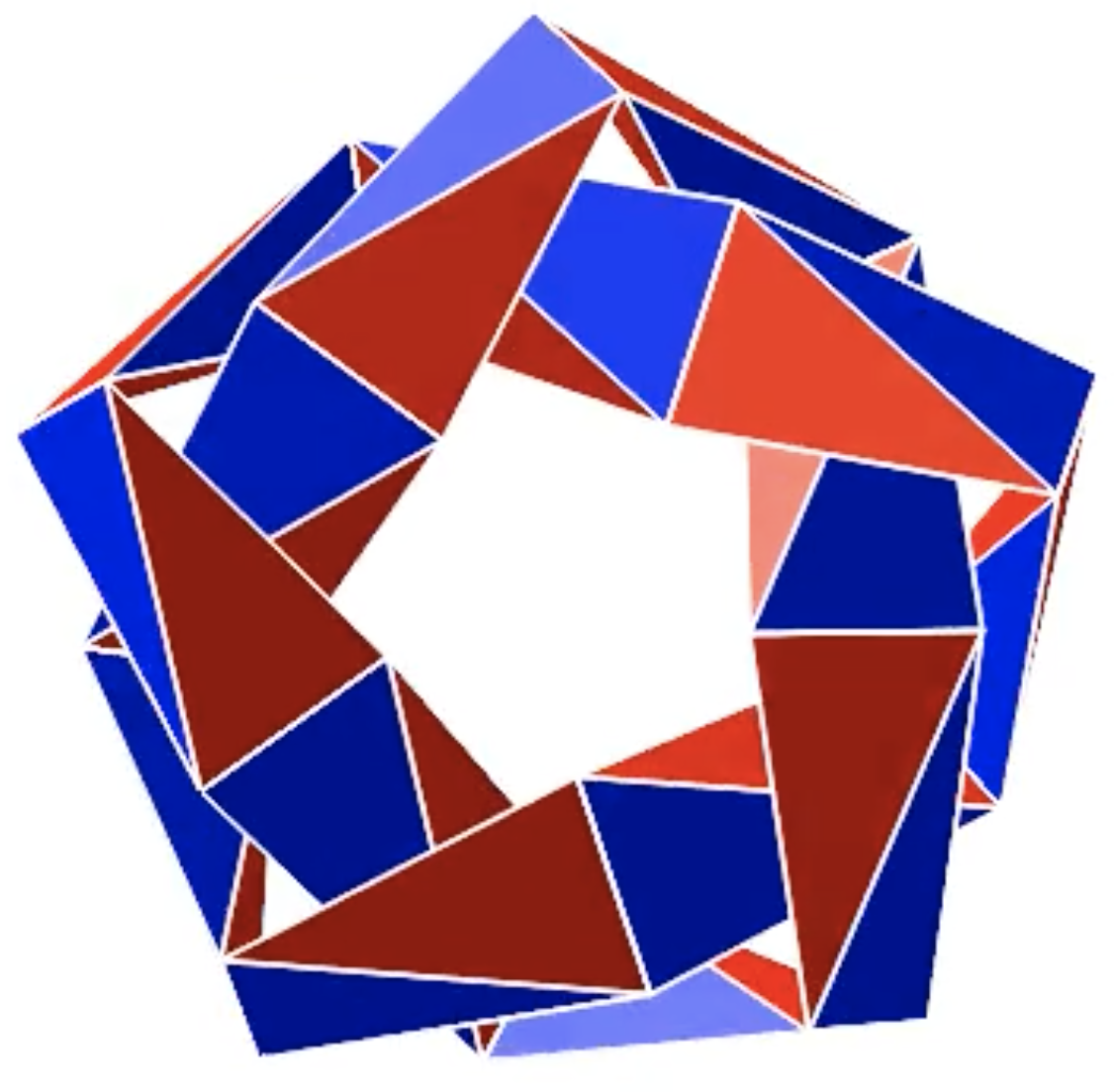}
    \includegraphics[width=0.25\textwidth]{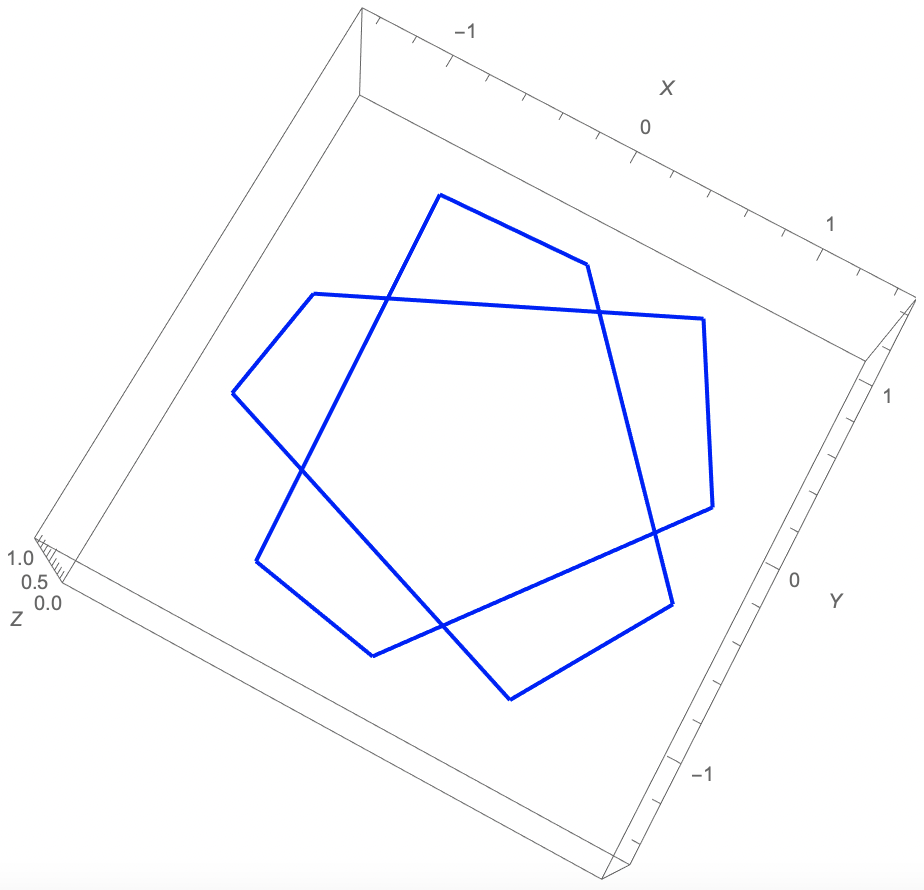}
    \includegraphics[width=0.20\textwidth]{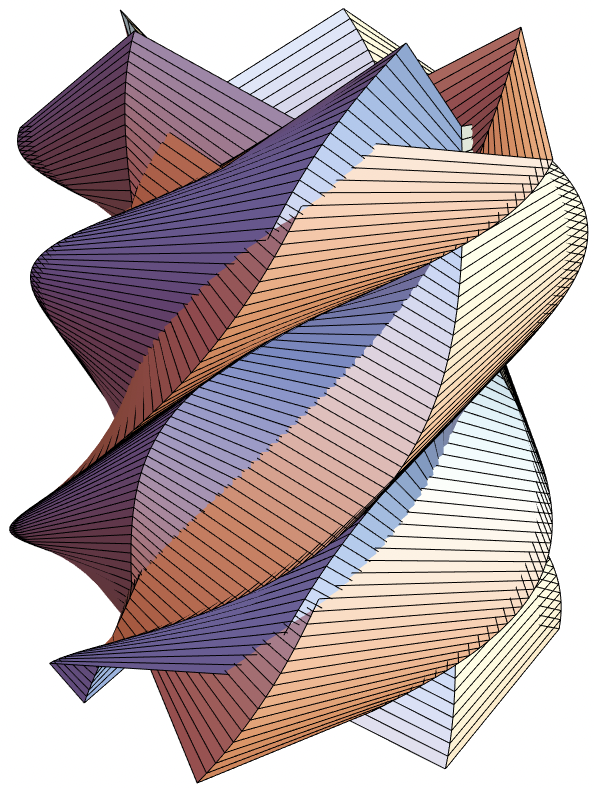}
    \caption{(Left) 
    An example of 15-Kaleidocycle with $(v,r,y) \approx (0.4012, -0.3113, 1.204)$ having $m=5$ and 
    $\cos\lambda\approx 0.6497$.
    (Middle) 15-Kaleidocycle as a discrete curve.
    (Right) the semi-discrete K-surface obtained as its trajectory.}
    \label{fig:K15_m5}
\end{figure}

\appendix
\renewcommand{\thesection}{\Alph{section}}
\renewcommand\thesubsection{\thesection.\arabic{subsection}}
\renewcommand\thesubsubsection{\thesubsection.\arabic{subsubsection}}
\setcounter{section}{0}
\section{Functional identities of elliptic theta functions}\label{section:AppendixA}

In this appendix, we collect several properties of elliptic theta functions on which our calculations rely heavily.

\subsection{Four term identities}
Let
\begin{equation}
\begin{split}
&X_1=\frac{1}{2}(X+Y+U+V),\quad Y_1=\frac{1}{2}(X+Y-U-V),\\
&U_1=\frac{1}{2}(X-Y+U-V),\quad V_1=\frac{1}{2}(X-Y-U+V),
\end{split}
\end{equation}
then for any $X, Y, U, V\in\mathbb{C}$, elliptic theta functions satisfy the following identities \cite{Mumford:Tata_I,Khachev_Zabrodin:theta}.
\begin{align}
&\vartheta_2(X_1)\vartheta_2(Y_1)\vartheta_3(U_1)\vartheta_3(V_1)-\vartheta_4(X_1)\vartheta_4(Y_1)\vartheta_1(U_1)\vartheta_1(V_1)\label{pr4-1}\\
&=\vartheta_2(X)\vartheta_2(Y)\vartheta_3(U)\vartheta_3(V)-\vartheta_4(X)\vartheta_4(Y)\vartheta_1(U)\vartheta_1(V),\nonumber\\
&\vartheta_4(X_1)\vartheta_3(Y_1)\vartheta_1(U_1)\vartheta_2(V_1)-\vartheta_2(X_1)\vartheta_1(Y_1)\vartheta_3(U_1)\vartheta_4(V_1)\label{pr4-2}\\
&=\vartheta_4(X)\vartheta_3(Y)\vartheta_1(U)\vartheta_2(V)-\vartheta_2(X)\vartheta_1(Y)\vartheta_3(U)\vartheta_4(V),\nonumber\\
&\vartheta_2(X_1)\vartheta_2(Y_1)\vartheta_4(U_1)\vartheta_4(V_1)+\vartheta_1(X_1)\vartheta_1(Y_1)\vartheta_3(U_1)\vartheta_3(V_1)\label{pr4-3}\\
&=\vartheta_4(X)\vartheta_4(Y)\vartheta_2(U)\vartheta_2(V)-\vartheta_3(X)\vartheta_3(Y)\vartheta_1(U)\vartheta_1(V),\nonumber\\
&\vartheta_3(X_1)\vartheta_3(Y_1)\vartheta_2(U_1)\vartheta_2(V_1)+\vartheta_1(X_1)\vartheta_1(Y_1)\vartheta_4(U_1)\vartheta_4(V_1)\label{pr4-3-1}\\
&=\vartheta_3(X)\vartheta_3(Y)\vartheta_2(U)\vartheta_2(V)+\vartheta_1(X)\vartheta_1(Y)\vartheta_4(U)\vartheta_4(V),\nonumber\\
&\vartheta_4(X_1)\vartheta_4(Y_1)\vartheta_4(U_1)\vartheta_4(V_1)-\vartheta_1(X_1)\vartheta_1(Y_1)\vartheta_1(U_1)\vartheta_1(V_1)\label{pr4-3-2}\\
&=\vartheta_4(X)\vartheta_4(Y)\vartheta_4(U)\vartheta_4(V)-\vartheta_1(X)\vartheta_1(Y)\vartheta_1(U)\vartheta_1(V),\nonumber\\
&\vartheta_4(X_1)\vartheta_4(Y_1)\vartheta_1(U_1)\vartheta_1(V_1)-\vartheta_1(X_1)\vartheta_1(Y_1)\vartheta_4(U_1)\vartheta_4(V_1)\label{pr4-3-3}\\
&=\vartheta_4(X)\vartheta_4(Y)\vartheta_1(U)\vartheta_1(V)-\vartheta_1(X)\vartheta_1(Y)\vartheta_4(U)\vartheta_4(V),\nonumber\\
&\vartheta_4(X_1)\vartheta_4(Y_1)\vartheta_1(U_1)\vartheta_1(V_1)+\vartheta_2(X_1)\vartheta_2(Y_1)\vartheta_3(U_1)\vartheta_3(V_1)\label{pr4-3-4}\\
&=\vartheta_3(X)\vartheta_3(Y)\vartheta_2(U)\vartheta_2(V)-\vartheta_1(X)\vartheta_1(Y)\vartheta_4(U)\vartheta_4(V),\nonumber\\
&\vartheta_3(X_1)\vartheta_3(Y_1)\vartheta_1(U_1)\vartheta_1(V_1)-\vartheta_1(X_1)\vartheta_1(Y_1)\vartheta_3(U_1)\vartheta_3(V_1)\label{pr4-3-6}\\
&=\vartheta_3(X)\vartheta_3(Y)\vartheta_1(U)\vartheta_1(V)-\vartheta_1(X)\vartheta_1(Y)\vartheta_3(U)\vartheta_3(V),\nonumber\\
&\vartheta_3(X_1)\vartheta_3(Y_1)\vartheta_4(U_1)\vartheta_4(V_1)-\vartheta_1(X_1)\vartheta_1(Y_1)\vartheta_2(U_1)\vartheta_2(V_1)\label{pr4-3-7}\\
&=\vartheta_3(X)\vartheta_3(Y)\vartheta_4(U)\vartheta_4(V)-\vartheta_1(X)\vartheta_1(Y)\vartheta_2(U)\vartheta_2(V).\nonumber
\end{align}
\subsection{Shifts by periods and half-periods}
For any $X\in\mathbb{C}$, $n\in\mathbb{Z}$ and $y>0$, the following holds \cite{Mumford:Tata_I,Khachev_Zabrodin:theta}.
\begin{equation}\label{peri1}
\vartheta_1\left(X+n\right)=(-1)^{n}\vartheta_1\left(X\right),\quad\vartheta_4\left(X+n\right)=\vartheta_4\left(X\right).
\end{equation}
\begin{equation}\label{cl}
\begin{split}
&\vartheta_1\left(X+niy; iy\right)=(-1)^{n}\exp\{-\pi i\left(2nX+n^{2}iy\right)\}\vartheta_1\left(X;iy\right),\\
&\vartheta_2\left(X+niy;iy\right)=\exp\{-\pi i\left(2nX+n^{2}iy\right)\}\vartheta_2\left(X;iy\right),\\
&\vartheta_3\left(X+niy;iy\right)=\exp\{-\pi i\left(2nX+n^{2}iy\right)\}\vartheta_3\left(X;iy\right),\\
&\vartheta_4\left(X+niy;iy\right)=(-1)^{n}\exp\{-\pi i\left(2nX+n^{2}iy\right)\}\vartheta_4\left(X;iy\right).
\end{split}
\end{equation}
\begin{equation}\label{clclcl}
\begin{split}
&\vartheta_1\left(X+\frac{1}{2}\right)=\vartheta_2\left(X\right),\quad \vartheta_2\left(X+\frac{1}{2}\right)=-\vartheta_1\left(X\right),\\
&\vartheta_3\left(X+\frac{1}{2}\right)=\vartheta_4\left(X\right),\quad\vartheta_4\left(X+\frac{1}{2}\right)=\vartheta_3\left(X\right).
\end{split}
\end{equation}
\begin{equation}\label{clclcl2}
\begin{split}
&\vartheta_1\left(X+\frac{iy}{2};iy\right)=i\exp\left\{-\pi i\left(X+\frac{iy}{4}\right)\right\}\vartheta_4\left(X;iy\right),\\
&\vartheta_2\left(X+\frac{iy}{2};iy\right)=\exp\left\{-\pi i\left(X+\frac{iy}{4}\right)\right\}\vartheta_3\left(X;iy\right),\\
&\vartheta_3\left(X+\frac{iy}{2};iy\right)=\exp\left\{-\pi i\left(X+\frac{iy}{4}\right)\right\}\vartheta_2\left(X;iy\right),\\
&\vartheta_4\left(X+\frac{iy}{2};iy\right)=i\exp\left\{-\pi i\left(X+\frac{iy}{4}\right)\right\}\vartheta_1\left(X;iy\right).
\end{split}
\end{equation}
\subsection{Identity for theta constants}
For any $y>0$, elliptic theta functions satisfy the following identity \cite{Mumford:Tata_I,Khachev_Zabrodin:theta}.
\begin{equation}\label{id:const}
\vartheta_1^{'} \left( 0; iy \right)= \pi \vartheta_2 \left( 0; iy \right) \vartheta_3 \left( 0; iy \right) \vartheta_4 \left( 0; iy \right).
\end{equation}
\subsection{Modular transformations}
For any $X\in\mathbb{C}$ and $y>0$, elliptic theta functions satisfy the following identities \cite{Mumford:Tata_I,Khachev_Zabrodin:theta}.
 \begin{equation}\label{id:modular}
 \begin{split}
 &\vartheta_{1}\left(\frac{X}{y}; i\frac{2}{y}\right)=-i\sqrt{\frac{y}{2}}\exp\left(-\pi\frac{X^{2}}{2y}\right)\vartheta_{1}\left(\frac{iX}{2}; i\frac{y}{2}\right),\\
 & \vartheta_{2}\left(\frac{X}{y}; i\frac{2}{y}\right)=\sqrt{\frac{y}{2}}\exp\left(-\pi\frac{X^{2}}{2y}\right)\vartheta_{4}\left(\frac{iX}{2}; i\frac{y}{2}\right),\\
  &\vartheta_{3}\left(\frac{X}{y}; i\frac{2}{y}\right)=\sqrt{\frac{y}{2}}\exp\left(-\pi\frac{X^{2}}{2y}\right)\vartheta_{3}\left(\frac{iX}{2}; i\frac{y}{2}\right),\\
 &\vartheta_{4}\left(\frac{X}{y}; i\frac{2}{y}\right)=\sqrt{\frac{y}{2}}\exp\left(-\pi\frac{X^{2}}{2y}\right)\vartheta_{2}\left(\frac{iX}{2}; i\frac{y}{2}\right).
 \end{split}
 \end{equation}
 \begin{equation}
\frac{\vartheta_{1}\left(X; iy\right)}{\vartheta_{2}\left(X; iy\right)}=-i\frac{\vartheta_{1}\left(\frac{iX}{y}; \frac{i}{y}\right)}{\vartheta_{4}\left(\frac{iX}{y}; \frac{i}{y}\right)}.\label{kappa-4}
\end{equation}
\subsection{Bilinear identities connecting theta functions with \texorpdfstring{$y$}{y} and \texorpdfstring{$y/2$}{y/2}}
 For any $X, Y\in\mathbb{C}$, and $y>0$, elliptic theta functions satisfy the following identities \cite{Mumford:Tata_I,Khachev_Zabrodin:theta}.
\begin{align}
&2\vartheta_{1}\left(X+Y; iy\right)\vartheta_{4}\left(X-Y; iy\right)\nonumber\\
&=\vartheta_{1}\left(X; \frac{iy}{2}\right)\vartheta_{2}\left(Y; \frac{iy}{2}\right)+\vartheta_{2}\left(X; \frac{iy}{2}\right)\vartheta_{1}\left(Y; \frac{iy}{2}\right),\label{kappa-5}\\
&2\vartheta_{2}\left(X+Y; iy\right)\vartheta_{3}\left(X-Y; iy\right)\nonumber\\
&=\vartheta_{2}\left(X; \frac{iy}{2}\right)\vartheta_{2}\left(Y; \frac{iy}{2}\right)-\vartheta_{1}\left(X; \frac{iy}{2}\right)\vartheta_{1}\left(Y; \frac{iy}{2}\right).\label{kappa-6}
\end{align}
  \begin{equation}\label{id:modular2}
 \begin{split}
& \vartheta_{3}\left(\frac{X}{2}; \frac{iy}{2}\right)\vartheta_{4}\left(\frac{X}{2}; \frac{iy}{2}\right)=\vartheta_{4}\left(X; iy\right)\vartheta_{4}\left(0; iy\right),\\
& \vartheta_{1}\left(\frac{X}{2}; \frac{iy}{2}\right)\vartheta_{2}\left(\frac{X}{2}; \frac{iy}{2}\right)=\vartheta_{1}\left(X; iy\right)\vartheta_{4}\left(0; iy\right),\\
 -&\vartheta_{1}\left(\frac{X}{2}; \frac{iy}{2}\right)^{2}+\vartheta_{2}\left(\frac{X}{2}; \frac{iy}{2}\right)^{2}=2\vartheta_{2}\left(X; iy\right)\vartheta_{3}\left(0; iy\right).\\
  \end{split}
 \end{equation}

\noindent
\textbf{Conflicts of Interest}\\
The authors declare no conflicts of interest.\\
\textbf{Data Availability Statement}\\
The computer program used in the experiments is available in the associated GitHub repository \url{https://github.com/shizuo-kaji/Kaleidocycle/}.

\bibliographystyle{abbrv}
\bibliography{references}                                   
\end{document}